\newcommand{\managepath}{$2(k+1) + 6\kappa$}
\newcommand{\NEWsizeker}{$\OO(k^{12}\log^{10} k)$}
\newcommand{\NEWinitsizeker}{$\OO(k^{22}\log^{12} k)$}
\newcommand{\NEWapproxfactor}{$\OO(\log^2 n)$}
\newcommand{\NEWapproxsolsizek}{$\OO(k\log^2 k)$}
\newcommand{\NEWinitapproxsolsizek}{$\OO(k^3\log^2 k)$}
\newcommand{\NEWinitapproxfactor}{$\OO(k^2\log^2 k)$}
\newcommand{\cdel}{\textsc{CVD}\xspace}
\newcommand{\OO}{{\mathcal O}}
\newcommand{\opt}{{\sf opt}}
\newcommand{\defparproblem}[4]{
\vspace{3mm}
\noindent\fbox{
  \begin{minipage}{.95\textwidth}
  \begin{tabular*}{\textwidth}{@{\extracolsep{\fill}}lr} \textsc{#1}  & {\bf{Parameter:}} #3 \\ \end{tabular*}
  {\bf{Input:}} #2  \\
  {\bf{Question:}} #4
  \end{minipage}
  }
\vspace{2mm}}
\newcommand{\defparproblemOpt}[3]{
\vspace{3mm}
\noindent\fbox{
  \begin{minipage}{.95\textwidth}
  \begin{tabular*}{\textwidth}{@{\extracolsep{\fill}}lr} \textsc{#1}  \\ \end{tabular*}
  {\bf{Input:}} #2  \\
  {\bf{Question:}} #3
  \end{minipage}
  }
\vspace{2mm}}
\newcommand{\alg}[1]{\mbox{\sf #1}}  
\begin{document}

\title{Feedback Vertex Set Inspired Kernel for Chordal Vertex Deletion\thanks{A preliminary version of this paper appeared in the proceedings of 28th ACM-SIAM Symposium on Discrete Algorithms (SODA 2017). The research leading to these results received funding from the European Research Council under the European Union’s Seventh Framework Programme (FP/2007-2013) / ERC Grant Agreement no.~306992.}}
\author{
Akanksha Agrawal\thanks{
University of Bergen, Bergen, Norway. \texttt{akanksha.agrawal@uib.no}.
} \and 
Daniel Lokshtanov\thanks{
University of Bergen, Bergen, Norway. \texttt{daniello@uib.no}.
} \and 
Pranabendu Misra\thanks{
The Institute of Mathematical Sciences, HBNI, Chennai, India. \texttt{pranabendu@imsc.res.in}.
}
\and 
Saket Saurabh\thanks{
University of Bergen, Bergen, Norway. The Institute of Mathematical Sciences, HBNI, Chennai, India. \texttt{saket@imsc.res.in}.
}
\and 
Meirav Zehavi\thanks{
University of Bergen, Bergen, Norway. \texttt{meirav.zehavi@uib.no}.
}}

\begin{titlepage}
\def\thepage{}
\thispagestyle{empty}
\maketitle
\vspace{-1cm}
\begin{abstract} 
Given a graph $G$ and a parameter $k$, the {\sc Chordal Vertex Deletion (CVD)} problem asks whether there exists a subset $U\subseteq V(G)$ of size at most $k$ that hits all induced cycles of size at least 4. The existence of a polynomial kernel for {\sc CVD} was a well-known open problem in the field of Parameterized Complexity. Recently, Jansen and Pilipczuk resolved this question affirmatively by designing a polynomial kernel for {\sc CVD} of size $\OO(k^{161}\log^{58}k)$, and asked whether one can design a kernel of size $\OO(k^{10})$. While we do not completely resolve this question, we design a significantly smaller kernel of size \NEWsizeker, inspired by the $\OO(k^2)$-size kernel for {\sc Feedback Vertex Set}. Furthermore, we introduce the notion of the independence degree of a vertex, which is our main conceptual contribution.
\end{abstract}
 \end{titlepage}
\newpage 

%
%
%
%


\section{Introduction}

Data reduction techniques are widely applied to deal with computationally hard problems in real world applications. It has been a long-standing challenge to formally express the efficiency and accuracy of these ``pre-processing'' procedures. The framework of parameterized complexity turns out to be particularly suitable for a mathematical analysis of pre-processing heuristics. Formally, in parameterized complexity each problem instance is accompanied by a parameter $k$, and we say that a parameterized problem is {\em fixed-parameter tractable (FPT)} if there is an algorithm that solves the problem in time $f(k)\cdot |I|^{\OO(1)}$, where $|I|$ is the size of the input and $f$ is a computable function of the parameter $k$ alone. 
{\em Kernelization} is the subarea of parameterized complexity that deals with the mathematical 
analysis of pre-processing heuristics. 
A parameterized problem is said to admit a {\em polynomial kernel} if there is a polynomial-time algorithm (the degree of polynomial is independent of the parameter $k$), called a {\em kernelization algorithm}, that reduces the input instance down to an instance whose size is bounded by a polynomial $p(k)$ in $k$, while preserving the answer. This reduced instance is called a {\em $p(k)$-kernel} for the problem.
Observe that if a problem has a kernelization algorithm, then it is also has an FPT algorithm.

Kernelization appears to be an interesting computational approach not only from a theoretical perspective, but also from a practical perspective.
There are many real-world applications where even very simple preprocessing can be surprisingly effective, leading to significant size-reduction of the input. Kernelization is a natural tool for  measuring the quality of existing preprocessing rules proposed for  specific problems as well as for designing new powerful such rules. The most fundamental question in the field of kernelization is: 
\begin{quote}
  Let $\Pi$ be parameterized problem that admits an \FPT\  algorithm. Then, does $\Pi$ admit a polynomial kernel?
\end{quote}
In recent times, the study of kernelization, centred on the above question, has been one of the main areas of research in parameterized complexity, yielding many new important contributions to theory.
These include general results showing that certain classes of parameterized problems have polynomial kernels, and as well as other results that utilize advanced techniques from algebra, matroid theory and topology for data reduction~\cite{AlonGKSY11,H.Bodlaender:2009ng,FominLST10,FominLMS12,GiannopoulouJLS15,Jansen14,KimLPRRSS16,Kratsch12a,KratschW14,KratschW12,PilipczukPSL14,Thomasse10}. The development of a framework for ruling out polynomial kernels under certain complexity-theoretic assumptions~\cite{BDFH08,BodlaenderJK13a,DellM14,FortnowS11} has added a new dimension to the area, and strengthened its connections to classical complexity theory.
We refer to the following surveys~\cite{Kratsch14,LokshtanovMS12} and the corresponding chapters in the books~\cite{paramalgoCFKLMPPS,ParameterizedComplexityBook,FG06,Nie06},
for a detailed introduction to the field of kernelization.

An important class of problems, that has led to the development of many upper bound tools and techniques in kernelization, is the class of {\em parameterized graph deletion problems}. A typical problem of this class is associated with a family of graphs, $\cal F$,  such as edgeless graphs, forests, cluster graphs, chordal graphs, interval graphs, bipartite graphs,  split graphs or planar graphs. The deletion problem corresponding to $\cal F$ is formally stated as follows. 

\defparproblem{{\sc $\cal F$-Vertex (Edge) Deletion}}{An undirected graph $G$ and a non-negative integer $k$.}{$k$}{Does there exist $S\subseteq V(G)$ (or $S\subseteq E(G)$) such that $|S|\leq k$ and $G\setminus S$ is in $\cal F$?}
 
 \noindent 
Graph deletion problems are also among the most basic problems in graph theory and graph algorithms. Most of these problems are NP-complete~\cite{Yannakakis:1978,LewisY80}, and thus they were subject to intense study in various algorithmic paradigms to cope with their intractability~\cite{Fujito98,LundY94,FominLMS12,MOR13}. 
These include, considering a restricted class of inputs, approximation algorithms, parameterized complexity and kernelization.  

Some of the most well known results in kernelization are  polynomial kernels 
for graph deletion problems such as {\sc Feedback Vertex Set}~\cite{Thomasse10}, {\sc Odd Cycle Transversal}~\cite{KratschW14,KratschW12}, {\sc Vertex Cover}~\cite{Abu-KhzamCFLSS04,ChenKJ01}, {\sc Planar-$\cal F$-Deletion}~\cite{FominLMS12}, and {\sc Treedepth-$\eta$-Deletion}~\cite{GiannopoulouJLS15}. A common thread among all these problems, with the exception of {\sc Odd Cycle Transversal}, is that the corresponding family $\cal F$ can be characterized by a finite set of forbidden minors that include at least one connected planar graph. It is known that, if $\cal F$ can be characterized by a finite set of forbidden induced subgraphs, then the corresponding {\sc $\cal F$-Vertex Deletion} problem immediately admits an \FPT\ algorithm as well as polynomial sized kernel because of its connection to {\sc $d$-Hitting Set}~ \cite{Abu-Khzam10}. However, if $\cal F$ is characterized by an {\em infinite} set of forbidden induced subgraphs, which is the case when $\cal F$ is the class of chordal graphs, chordal bipartite graphs, interval graphs, proper interval graphs and permutation graphs, our understanding of these problems in the realm of parameterized complexity and kernelization, is still at a nascent stage. While {\sc Chordal Vertex Deletion (CVD)} was known to be \FPT\ for some time~\cite{Marx10,CaoM16}, the parameterized complexity of {\sc Interval Vertex Deletion} was settled only recently~\cite{CaoM15,Cao16}. The parameterized complexity of {\sc Permutation Vertex Deletion} and {\sc Chordal Bipartite Vertex Deletion} is still unknown. Coming to the question of polynomial kernels for these problems, the situation is even more grim. Until recently, the only known result was a polynomial kernel for {\sc Proper Interval Vertex Deletion}: Fomin et al.~\cite{FominSV13} obtained a $\OO(k^{53})$ sized polynomial kernel for {\sc Proper Interval Vertex Deletion}, which has recently been improved to $\OO(k^{4})$~\cite{ke2016unit}. A dearth of further results in this area has led to the questions of kernelization complexity of 
 {\sc Chordal Vertex Deletion} and {\sc Interval Vertex Deletion} becoming prominent open problems~\cite{CaoM16,cyganopen,FominSV13,HeggernesHJKV13,Marx10}.

Jansen and Pilipzuck~\cite{JansenPili2016} recently resolved one of these open questions. They showed that \cdel admits a polynomial kernel of size $\OO(k^{161}\log^{58} k)$, and further posed an open question: 
\begin{quote}
Does \cdel admit a kernel of size  $\OO(k^{10})$? 
\end{quote}
While we do not completely resolve this question, we design a significantly smaller kernel. In particular, we obtain the following result. 
\begin{theorem}\label{thm:chordal}
 	\cdel admits a polynomial kernel of size \NEWsizeker.
 \end{theorem}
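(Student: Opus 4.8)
The plan is to follow the high-level strategy that has proven successful for \textsc{Feedback Vertex Set}: obtain a constant- (or polylog-) factor approximate solution, use it as a skeleton, and then bound the size of the ``rest'' of the graph by a polynomial in $k$ via a sequence of reduction rules. Concretely, I would first invoke a polynomial-time approximation algorithm to compute a set $X\subseteq V(G)$ such that $G\setminus X$ is chordal and $|X| = \OO(k\log^2 k)$ (an $\OO(\log^2 n)$-factor approximation combined with the usual trick of either certifying $\text{opt}>k$ or that $n$ is polynomially bounded in $k$; this is the \NEWapproxsolsizek\ bound hinted at in the macros). If no such $X$ exists with $|X|\le \text{poly}(k)$, we may return a trivial no-instance. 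From now on $X$ is fixed, $C := G\setminus X$ is chordal, and the goal is to reduce the number of vertices of $C$ — and the way the vertices of $C$ attach to $X$ — to $\text{poly}(k)$.

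The heart of the argument is controlling the ``interface'' between $C$ and $X$. Here I expect the new notion advertised in the abstract, the \emph{independence degree} of a vertex, to do the main work: roughly, a vertex $v\in C$ whose independence degree toward $X$ (the size of a largest independent-set-like obstruction it creates with $X$) is small can be shown to be irrelevant, because any induced $C_{\ge 4}$ through $v$ can be rerouted or is already hit. I would (i) mark a polynomial number of vertices of $C$ that are ``important'' for each bounded-size subset of $X$ (via an argument in the spirit of the $q$-expansion / sunflower-type marking used for FVS and for Jansen--Pilipczuk), (ii) argue that unmarked vertices of low independence degree can be safely removed or bypassed by a reduction rule that contracts long ``clean'' parts of a clique tree of $C$, and (iii) separately bound the number of vertices of high independence degree, since each such vertex is a witness that forces a unit of budget to be spent near it, so there can be only $\text{poly}(k)$ of them before we can conclude the instance is a no-instance.

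After these reductions, $C$ decomposes into a bounded number of ``chunks'' (bags of a clique tree, or blocks between marked vertices), and within each chunk the graph is chordal with a bounded interface to $X$; a final set of rules bounds the size of each chunk (e.g.\ replacing large modules / twin classes by bounded-size gadgets, and shortening induced paths in the clique tree), yielding a total of \NEWsizeker\ vertices. The main obstacle — and the place where the independence-degree notion is essential rather than cosmetic — is step (ii): proving that removing or rerouting an unmarked, low-independence-degree vertex $v$ does not change the answer. The danger is a long induced cycle that uses $v$ as an ``unavoidable'' internal vertex because the rest of the cycle alternates between $C$ and $X$ in a way that cannot be locally shortcut; the independence-degree bound is exactly what rules this out, by guaranteeing a chord or an alternative detour through the chordal part. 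Bounding the dependence of the polynomial on $k$ — i.e.\ actually getting down to $k^{12}\log^{10}k$ rather than a much larger exponent — will require being careful about how marking budgets compose across the $\OO(\log^2 k)$ layers of approximation slack and the nested sunflower arguments, and this bookkeeping, rather than any single conceptual leap, is what the bulk of the paper will be spent on.
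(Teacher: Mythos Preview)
Your high-level skeleton (compute an approximate modulator $X$, then shrink the chordal part $G\setminus X$) matches the paper, but several of the concrete mechanisms you propose are not what actually works, and one central idea is misread.

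\medskip
\textbf{Independence degree is not what you think.} You define the independence degree of $v\in C$ as a quantity ``toward $X$'' and use it to classify vertices of $C$ into removable (low) and budget-forcing (high). In the paper, the independence degree of $v$ is simply $\alpha(G[N^R_G(v)])$ --- the size of a maximum independent set in the \emph{entire} (relevant) neighbourhood of $v$ --- and the first phase of the kernel drives $\Delta^I_G$ below $(k+3)f(k)$ for \emph{every} vertex, via a $v$-blocker plus the expansion lemma (Lemmas~\ref{lem:boundIndDeg}--\ref{lem:vOrB}, Rule~\ref{rule:decIndDeg}). The payoff is structural, not a per-vertex deletion criterion: once $\Delta^I_G$ is bounded, any independent set has at most $|D'|\cdot\Delta'$ relevant edges into the modulator (Lemma~\ref{lem:maxIndSet}), and hence the clique forest $F$ of $G\setminus D'$ has at most $|D'|\cdot\Delta'$ bags with private vertices, so at most that many leaves and degree-$\ge 3$ nodes (Lemmas~\ref{lem:boundPrivate}, \ref{lem:leavesDeg3}). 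Your step~(iii), ``high-independence-degree vertices force budget so there are few of them,'' is not used and would not by itself bound the clique forest.

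\medskip
\textbf{The missing decomposition.} The paper does not ``mark chunks and shrink them'' in one sweep. After bounding $\Delta^I_G$, it works on the clique forest $F$ and bounds three quantities separately: (a) the number of leaves/branching nodes as above; (b) the size of every maximal clique, by an adaptation of Marx's marking argument that yields $\kappa=\OO(|\widetilde D|^3k+|\widetilde D|\Delta' k^3)$ (Lemma~\ref{lem:boundMaxCli2}); and (c) the length of each maximal degree-$2$ path in $F$, by cutting it into ``manageable'' subpaths (Definition~\ref{def:comply}, Lemma~\ref{lem:comply}) and then, on each subpath, proving that any minimal solution that touches the interior actually contains some $(B_i\cap B_{i+1})\setminus A$ (Lemma~\ref{lem:cut}) and can be rerouted to a precomputed family of $\OO(k)$ minimum cuts (Lemmas~\ref{lem:boundW}, \ref{lem:irrelUM}, Rule~\ref{rule:UM}). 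Your step~(ii), ``contract long clean parts of a clique tree,'' gestures at (c) but the safeness argument is precisely where the paper needs the cut-family lemma, not a rerouting-through-chords argument; the danger case you describe is handled by Lemma~\ref{lem:cut}, not by the independence-degree bound.

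\medskip
\textbf{The exponent.} Finally, you attribute the $k^{12}\log^{10}k$ bound to ``careful bookkeeping across $\OO(\log^2 k)$ layers.'' In fact the first run of the kernel only gives $\OO(k^{22}\log^{12}k)$; the advertised bound comes from a bootstrapping step (Section~\ref{sec:better}): once $n$ is polynomial in $k$, rerunning the $\OO(\log^2 n)$-approximation gives $f(k)=\OO(k\log^2 k)$ rather than $\OO(k^3\log^2 k)$, and a second pass of the same kernel then yields \NEWsizeker. Without this rerun you will not reach the stated size.
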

 
 \paragraph{Our Methods.} Our result is  inspired by the $\OO(k^2)$-size kernel for 
 {\sc Feedback Vertex Set (FVS)} (checking whether there exists a $k$ sized vertex subset that intersects all cycles) , designed by Thomass\'{e}~\cite{Thomasse10}. The kernel for 
 {\sc FVS} consists of the two following steps. 
 \begin{enumerate} 
 \item Reduce the maximum degree of the graph by using matching based tools (in particular expansion lemma). That is, upper bound the maximum degree $\Delta$ of the graph by $\OO(k)$. 
 \item When the graph has maximum degree $\Delta$, one can show that if a graph has minimum  degree at least $3$ (which can be easily achieved for  {\sc FVS}) then any minimum feedback vertex set has size $\OO(n/\Delta)$. This together with an upper bound on $\Delta$ implies 
 $\OO(k^2)$-size kernel. 
 \end{enumerate}

Let us now look at the \cdel problem. Here, our objective is to check whether there exists a $k$-sized vertex subset $S$ such that $G\setminus S$ is a chordal graph.
However, what is a chordal graph? A graph is chordal if it does not contain any induced cycle of length at least $4$.  That is, every cycle of length at least $4$ has a chord.
Thus, {\sc FVS} is about intersecting all cycles and \cdel\ is about intersecting all chordless cycles. Unfortunately, this apparent similarity stops here!
Nevertheless, we are still able to exploit ideas used in the $\OO(k^2)$-size kernel for {\sc FVS}. 
Towards this, we define the notion of {\em independence degree} of vertices and graphs. Roughly speaking,  the independence degree of a vertex $v$ is the size of a maximum independent set in its neighborhood ($G[N(v)]$). The study of this notion is our main conceptual contribution.

 As a first step we bound the independence degree of every vertex by $k^{\OO(1)}$ -- this is similar to the first step of the kernel for 
{\sc FVS}. Once we have bounded the independence degree of a graph, we obtain an approximate solution $M$ (also called modulator) and analyze the graph $G\setminus M$. The bound on the independence degree 
immediately implies that the number of leaves, vertices of degree at least three and the number of maximal degree two paths in the clique forest of $G\setminus M$ is bounded by $k^{\OO(1)}$. Then, using ideas similar to those used by Marx~\cite{Marx10} to bound the size of a maximal clique while designing the first algorithm for \cdel, we reduce the size of each maximal clique in $G\setminus M$ to $k^{\OO(1)}$. Finally, we use structural analysis to bound the size of each maximal degree two path, which includes the design of a reduction rule that computes a family of minimum cuts, and thus we obtain our final kernel. We believe that the notion of independent degree is likely to be useful for designing algorithms for other graph modification problems. Not only does it lead to a significant improvement, once it is bounded, it greatly simplifies the analysis of the resulting instance.

Finally, after we obtain a kernel, we show that by reruning our entire kernelization procedure once, we can actually reduce the size of the kernel. Since we rely on an \NEWapproxfactor-approximation algorithm, when we call our kernelization procedure for the first time, we work with an approximate solution of size \NEWinitapproxsolsizek; indeed, it can be assumed that $\log n < k\log k$, else the $2^{\OO(k\log k)}\cdot n^{\OO(1)}$-time algorithm for \cdel by Cao and Marx \cite{CaoM16} solves the input instance in polynomial time. However, once we have our first kernel, it holds that $n=$\NEWinitsizeker. At this point, if we reuse the approximation algorithm, we obtain an approximate solution of size \NEWapproxsolsizek. The size of our kernel depends on the size of the approximate solution; in particular, having an approximate solution of size \NEWapproxsolsizek\ allows us to obtain a kernel of size \NEWsizeker.

\section{Preliminaries}
 
For a positive  integer $k$, we use  $[k]$  as a shorthand for  $\{1,2,\ldots,k\}$. Given a function $f: A\rightarrow B$ and a subset $A'\subseteq A$, we let $f|_{A'}$ denote the function $f$ restricted to the domain $A'$.

\bigskip
{\noindent\bf Parameterized Complexity.} In Parameterized Complexity each problem instance is accompanied by a parameter $k$. A central notion in this field is the one of {\em fixed-parameter tractability (FPT)}. This means, for a given instance $(I, k)$, solvability in time $f(k)|I|^{\OO(1)}$ where $f$ is some computable function of $k$. A parameterized problem is said to admit a {\em polynomial kernel} if there is a polynomial-time algorithm (the degree of polynomial is independent of the parameter $k$), called a {\em kernelization algorithm}, that reduces the input instance down to an equivalent instance whose size is bounded by a polynomial $p(k)$ in $k$. Here, two instances are equivalent if one of them is a yes-instance if and only if the other one is a yes-instance. The reduced instance is called a {\em $p(k)$-kernel} for the problem. For a detailed introduction to the field of kernelization, we refer to the following surveys~\cite{Kratsch14,LokshtanovMS12} and the corresponding chapters in the books~\cite{paramalgoCFKLMPPS,ParameterizedComplexityBook,FG06,Nie06}.

Kernelization algorithms often rely on the design of {\em reduction rules}. The rules are numbered, and each rule consists of a condition and an action. We always apply the first rule whose condition is true. Given a problem instance $(I, k)$, the rule computes (in polynomial time) an instance $(I',k')$ of the same problem where $k'\leq k$. Typically, $|I'|<|I|$, where if this is not the case, it should be argued why the rule can be applied only polynomially many times. We say that the rule {\em safe} if the instances $(I,k)$ and $(I',k')$ are equivalent.

\bigskip
{\noindent\bf Graphs.} Given a graph $G$, we let $V(G)$ and $E(G)$ denote its vertex-set and edge-set, respectively. In this paper, we only consider undirected graphs. We let $n=|V(G)|$ denote the number of vertices in the graph $G$, where $G$ will be clear from context. 
The \emph{open neighborhood}, or simply the \emph{neighborhood}, of a vertex $v\in V(G)$ is defined as $N_G(v) = \{w \mid \{v,w\} \in E(G) \}$. The \emph{closed neighborhood} of $v$ is defined as $N_G[v] = N_G(v) \cup \{ v \}$. 
The \emph{degree} of $v$ is defined as $d_G(v) = |N_G(v)|$.
We can extend the definition of neighborhood of a vertex to a set of vertices as follows.
Given a subset $U \subseteq V(G)$, $N_G(U) = \bigcup_{u\in U} N_G(u)$ and $N_G[U] = \bigcup_{u\in U} N_G[u]$. The \emph{induced subgraph} $G[U]$ is the graph with vertex-set $U$ and edge-set $\{\{u,u'\}~|~u,u'\in U, \text{ and } \{u,u'\} \in E(G)\}$.
Moreover, we define $G \setminus U$ as the induced subgraph $G[V(G) \setminus U]$. We omit subscripts when the graph $G$ is clear from context.
An \emph{independent set} in $G$ is a set of vertices such that there is no edge between any pair of vertices in this set.
The \emph{independence number} of $G$, denoted by $\alpha(G)$, is defined as the cardinality of the largest independent set in $G$.
A \emph{clique} in $G$ is a set of vertices such that there is an edge between every pair of vertices in this set.

A \emph{path} $P$ in $G$ is a subgraph of $G$ where
$V(P) = \{ x_1, x_2, \ldots, x_\ell \} \subseteq V(G)$ and $E(P) = \{ \{x_1,x_2\}, \{x_2, x_3\}, \ldots, \{x_{\ell-1},x_\ell\}\}\subseteq E(G)$ for some $\ell\in[n]$.
The vertices $x_1$ and $x_\ell$ are called \emph{endpoints} of the path $P$ and the remaining vertices in $V(P)$ are called \emph{internal vertices} of $P$.
We also say that $P$ is a path between $x_1$ and $x_\ell$.
A \emph{cycle} $C$ in $G$ is a subgraph of $G$ where
$V(C) = \{ x_1, x_2, \ldots, x_\ell \} \subseteq V(G)$ and $E(C) = \{ \{x_1,x_2\}, \{x_2, x_3\}, \ldots, \{x_{\ell-1},x_\ell\}, \{x_\ell, x_1\}\} \subseteq E(G)$, i.e., it is a path with an additional edge between $x_1$ and $x_\ell$.
Let $P$ be a path in the graph $G$ on at least three vertices.
We say that $\{u,v\} \in E(G)$ is a \emph{chord} of $P$ if $u,v \in V(P)$ but $\{u,v\} \notin E(P)$.
Similarly, for a cycle $C$ on at least four vertices, $\{u,v\} \in E(G)$ is a chord of $C$ if $u,v \in V(C)$ but $\{u,v\} \notin E(C)$.
A path $P$ or cycle $C$ is \emph{chordless} if it has no chords.

The graph $G$ is \emph{connected} if there is a path between every pair of vertices,
otherwise $G$ is \emph{disconnected}. A connected graph without any cycles is a \emph{tree}, and a collection of trees is a \emph{forest}. A maximal connected subgraph of $G$ is called a \emph{connected component} of~$G$. 

\bigskip
{\noindent\bf Forest Decompositions.} 
A \emph{forest decomposition} of a graph $G$ is a pair $(F,\beta)$ where $F$ is forest,
and $\beta:V(T) \rightarrow 2^{V(G)}$ 
is a function that satisfies the following,
\begin{enumerate}[(i)]
    \item $\bigcup_{v \in V(F)} \beta(v) = V(G)$,
    \item for any edge $\{v,u\} \in E(G)$ there is a node $w \in V(F)$ such that $v,u \in \beta(w)$, 
    \item and for any $v \in V(G)$, the collection of nodes $T_v = \{ u \in V(F) \mid v \in \beta(u)\}$ is a subtree~of~$F$.
\end{enumerate}
For $v \in V(F)$, we call $\beta(v)$ the \emph{bag} of $v$, and for the sake of clarity of presentation, we sometimes use $v$ and $\beta(v)$ interchangeably. We refer to the vertices in $V(F)$ as nodes.
A \emph{tree decomposition} is a forest decomposition where $F$ is a tree.

\bigskip
{\noindent\bf Chordal Graphs.} A graph $G$ is a \emph{chordal graph} if it has no chordless cycle as an induced subgraph,
i.e., every cycle of length at least four has a chord. A \emph{clique forest} of $G$ is a forest decomposition of $G$ where every bag is a maximal clique. The following lemma shows that the class of chordal graphs is exactly the class of graphs which have a clique~forest.
\begin{lemma}[Theorem 4.8,~\cite{Golumbic80}]\label{lem:cliqueForest}
A graph $G$ is a chordal graph if and only if $G$ has a clique forest.
\end{lemma}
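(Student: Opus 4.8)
The statement to prove is the classical fact that a graph is chordal if and only if it admits a clique forest, and I would prove the two implications separately. For the direction that having a clique forest forces chordality, suppose $(F,\beta)$ is a clique forest of $G$ and, towards a contradiction, that $G$ has a chordless cycle $C=v_1v_2\cdots v_\ell$ with $\ell\ge 4$. First I would record the usual consequences of the decomposition axioms specialized to clique bags: two vertices lie in a common bag if and only if they are adjacent (one direction is axiom (ii), the other holds because every bag is a clique); hence for consecutive vertices of $C$ the subtrees $T_{v_i}$ and $T_{v_{i+1}}$ meet, so all the subtrees $\{T_v:v\in V(C)\}$ lie inside a single component $F_0$ of $F$, and after replacing $G$ by the subgraph induced on the union of the bags in $F_0$ I may assume that $F$ is a tree. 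Second, for any edge $e=\{p,q\}$ of $F$, if $A_p$ and $A_q$ denote the unions of the bags on the two sides of $e$, then $A_p\cup A_q=V(G)$, $A_p\cap A_q=\beta(p)\cap\beta(q)$, and there is no edge of $G$ with one endpoint in $A_p\setminus A_q$ and the other in $A_q\setminus A_p$; in particular the separator $S:=\beta(p)\cap\beta(q)$ is a clique.

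Now, because $C$ contains a non-edge it does not fit into any single (clique) bag, so $\bigcap_{v\in V(C)}T_v=\emptyset$; by the Helly property of subtrees of the tree $F$ there are $u,w\in V(C)$ with $T_u\cap T_w=\emptyset$, and then $\{u,w\}\notin E(G)$, so $u$ and $w$ are not consecutive on $C$. Choosing an edge $e$ of $F$ whose removal separates $T_u$ from $T_w$, with $T_u$ on the $p$-side, we get $u\in A_p\setminus A_q$ and $w\in A_q\setminus A_p$, so $(A_p\setminus A_q,\ S,\ A_q\setminus A_p)$ is a partition of $V(G)$ into three parts with $S$ a clique and no $G$-edge between the outer two. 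Since $C$ is a cycle meeting both outer parts and no edge of $C$ crosses directly between them, $C$ must pass through $S$; moreover if $V(C)\cap S$ were a set of consecutive vertices of $C$, then the unique remaining arc of $C$ would be a connected subgraph disjoint from $S$ and hence confined to one outer part, contradicting that it contains both $u$ and $w$. Therefore $C$ has two vertices of $S$ that are non-consecutive on $C$; being in the clique $S$ they are adjacent, so they form a chord of $C$ — a contradiction. Hence $G$ is chordal.

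For the converse, suppose $G$ is chordal; I would induct on $|V(G)|$, the base case of a single vertex being immediate. The essential input is the classical fact (due to Dirac) that every chordal graph contains a \emph{simplicial} vertex, i.e.\ a vertex whose neighbourhood is a clique; this follows from a short induction once one knows that every minimal separator of a chordal graph is a clique, which in turn is seen by joining two putative non-adjacent vertices of such a separator by shortest paths through two different components of its complement and noting that the resulting cycle is chordless and of length at least four. Take a simplicial vertex $v$ and, by induction, a clique forest $(F',\beta')$ of $G-v$, which I may assume has pairwise distinct bags. Since $N_G(v)$ is a clique of $G-v$, the Helly property again places it inside some bag $\beta'(t)$. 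If $\beta'(t)=N_G(v)$ (equivalently, if $N_G(v)$ is a maximal clique of $G-v$), I enlarge that bag to $N_G[v]$; otherwise (including the case $N_G(v)=\emptyset$) I add a fresh leaf node $t^\ast$ adjacent to $t$ with bag $N_G[v]$. In either case $N_G[v]$ is a maximal clique of $G$ — any vertex outside it is non-adjacent to $v$ — the other bags remain maximal cliques of $G$ because a maximal clique of $G-v$ can lose maximality in $G$ only by being absorbed by $v$, i.e.\ only if it equals $N_G(v)$, the case handled separately, and axiom (iii) is preserved since $T_v$ is a single node and we only ever enlarged the bag $\beta'(t)$ or appended a leaf. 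Thus the resulting pair is a clique forest of $G$.

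I expect the harder part to be the first implication. The chain of reasoning there — the Helly property to produce a disjoint pair of subtrees, the separator property of forest decompositions to turn this into a clique separator of $G$, and then the combinatorial argument that a clique separator cannot split a chordless cycle without creating a chord — has short steps individually, but the last one requires the case distinction above to rule out the degenerate possibility that $V(C)\cap S$ sits as a single block along $C$. The second implication is essentially bookkeeping once Dirac's lemma is granted; the only point needing care is verifying that every bag remains a maximal clique after $v$ is reinserted.
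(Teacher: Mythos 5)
Your proof is correct. The paper itself gives no proof of this lemma---it is simply quoted as Theorem 4.8 of Golumbic's book---so there is nothing internal to compare against; your argument is the standard textbook proof (the Helly property of subtrees together with the clique-separator property of forest decompositions for one direction, and induction via Dirac's simplicial-vertex lemma for the other). The only step you assert without justification is that the clique forest of $G-v$ may be assumed to have pairwise distinct bags; this is easily repaired, e.g.\ by observing that all nodes whose bag equals $N_G(v)$ span a connected subtree (every vertex of that clique appears in every bag on the path between two such nodes), so one can contract that subtree, or equivalently enlarge all of those bags to $N_G[v]$, without violating axiom (iii).
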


Given a subset $U\subseteq V(G)$, we say that $U$ {\em hits} a chordless cycle $C$ in $G$ if $U\cap V(C)\neq\emptyset$.
Observe that if $U$ \emph{hits} every chordless cycle of $G$, then $G\setminus U$ is a chordal graph. Given a $v\in V(G)$, we say that a vertex-set $B\subseteq V(G)\setminus\{v\}$ is a {\em $v$-blocker} if $B$ hits every chordless cycle in $G$. Observe that the set $B$ must {\em not} contain the vertex $v$. 

The {\sc Chordal Vertex Deletion (CVD)} problem is defined as follows.

\defparproblem{{\sc Chordal Vertex Deletion (CVD)}}{An undirected graph $G$ and a non-negative integer $k$.}{$k$}{Does there exist a subset $S\subseteq V(G)$ such that $|S|\leq k$ and $G\setminus S$ is a chordal graph?}

For purposes of approximation, we also formulate this problem as an optimization problem.

\defparproblemOpt{{\sc Chordal Vertex Deletion (CVD)}}{An undirected graph $G$.}{What is the minimum cardinality of a subset $S\subseteq V(G)$ such that $G\setminus S$ is a chordal graph?}

\bigskip
{\noindent\bf The Expansion Lemma.}
Let $c$ be a positive integer.
A \emph{$c$-star} is a graph on $c+1$ vertices where one vertex, called the center, has degree $c$, and all other vertices are adjacent to the center and have degree one.
A \emph{bipartite graph} is a graph whose vertex-set can be partitioned into two independent sets.
Such a partition the vertex-set is called a \emph{bipartition} of the graph.
Let $G$ be a bipartite graph with bipartition $(A,B)$.
A subset of edges $M \subseteq E(G)$ is called {\em $c$-expansion of $A$ into $B$} if 
\begin{enumerate}[(i)]
    \item every vertex of $A$ is incident to exactly $c$ edges of $M$,
    \item and $M$ saturates exactly $c|A|$ vertices in $B$.
\end{enumerate}
Note that a $c$-expansion saturates all vertices of $A$, and for each $u \in A$ the set of edges in $M$ incident on $u$ form a $c$-star. 
The following lemma allows us to compute a $c$-expansion in a bipartite graph.
It captures a certain property of neighborhood sets which is very useful for designing kernelization algorithms.
\begin{lemma}[\cite{Thomasse10,paramalgoCFKLMPPS}]\label{lem:expansion}
Let $G$ be a bipartite graph with bipartition $(A,B)$ such that there are no isolated vertices in $B$. Let $c$ be a positive integer such that $|B| \geq c |A|$. 
Then, there are non-empty subsets $X \subseteq A$ and $Y \subseteq B$ such that 
\begin{itemize}
    \item there is a $c$-expansion from $X$ into $Y$,
    \item and there is no vertex in $Y$ that has a neighbor in $A \setminus X$, i.e. $N_G(Y) = X$.
\end{itemize}
Further, the sets $X$ and $Y$ can be computed in polynomial time.
\end{lemma}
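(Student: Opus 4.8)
The plan is to derive the statement from a single maximum-matching computation in a ``blown-up'' bipartite graph, combined with a short induction on $|A|$. First I would build an auxiliary bipartite graph $G'$ with parts $A'$ and $B$, where $A'$ contains $c$ copies $u^{(1)},\dots,u^{(c)}$ of every vertex $u\in A$, each copy inheriting the neighborhood $N_G(u)$. The purpose of this construction is the correspondence: a $c$-expansion of a set $X\subseteq A$ into $Y\subseteq B$ is exactly a matching of $G'$ saturating all $c|X|$ copies of the vertices of $X$, with $Y$ being the set of $B$-endpoints of that matching. I would then compute a maximum matching $M$ of $G'$ and branch on whether it saturates $A'$.

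If $M$ saturates all of $A'$, I would take $X=A$ and let $Y$ be the set of $B$-vertices saturated by $M$. Viewed inside $G$, $M$ assigns to each $u\in A$ exactly $c$ distinct neighbors in $Y$ (distinct because $M$ is a matching), so $M$ is a $c$-expansion of $A$ into $Y$ with $|Y|=c|A|$; since $B$ has no isolated vertices, every $b\in Y$ has its $M$-partner in $A=X$, so $N_G(Y)\subseteq X$ holds trivially, and we are done.

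If $M$ does not saturate $A'$, then K\"onig's theorem (equivalently, the deficiency form of Hall's theorem) yields a nonempty $Z\subseteq A'$ with $|N_{G'}(Z)|<|Z|$; since all copies of a given vertex have the same neighborhood, I may replace $Z$ by the union of all copies of a nonempty set $A''\subseteq A$, which only helps the inequality, so $|N_G(A'')|<c|A''|$. I would then recurse on the bipartite subgraph induced by $(A\setminus A'')\cup(B\setminus N_G(A''))$: the arithmetic gives $|B\setminus N_G(A'')|>c\,|A\setminus A''|$, and every $b\in B\setminus N_G(A'')$ still has a neighbor, necessarily in $A\setminus A''$, so the hypotheses of the lemma hold for this strictly smaller instance. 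The sets $X\subseteq A\setminus A''$ and $Y\subseteq B\setminus N_G(A'')$ returned by the recursion work verbatim in $G$: no vertex of $Y$ is adjacent to $A''$, so $N_G(Y)=X$ lifts from the subgraph to $G$, and the $c$-expansion is the same edge set.

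The one point that genuinely needs care --- and the only real obstacle --- is ensuring the recursion bottoms out with a \emph{nonempty} $X$ and $Y$, since both the base case and the recursive step become vacuous when $A=\emptyset$. This is exactly where the hypothesis that $B$ has no isolated vertices is used: it forces $N_G(A)=B$, hence $|N_G(A)|=|B|\ge c|A|=|A'|$, so the Hall-violator $Z$, and therefore $A''$, can never be all of $A'$ (respectively all of $A$). Thus every recursive call strictly decreases $|A|$ without emptying it, so we reach the base case with $A\neq\emptyset$, giving $X=A\neq\emptyset$ and $|Y|=c|A|>0$. For polynomiality, each level performs one bipartite maximum matching and extracts a minimum vertex cover from it, and the recursion depth is at most $|A|$; the remaining bookkeeping (the expansion/matching correspondence and the cardinality counts) is routine.
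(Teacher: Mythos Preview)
The paper does not give its own proof of this lemma; it is quoted from the literature (Thomass\'e and the Cygan et al.\ textbook) and used as a black box. Your argument is correct and is essentially the standard proof found in those references: blow up each vertex of $A$ into $c$ copies, so that a $c$-expansion corresponds to a matching saturating $A'$; if Hall's condition holds, take $X=A$; otherwise a Hall violator, closed under copies, yields $A''\subsetneq A$ with $|N_G(A'')|<c|A''|$, and one recurses on $(A\setminus A'',\,B\setminus N_G(A''))$. The only delicate point---that $A''\neq A$, so the recursion terminates with nonempty $X$---you handle correctly via $|N_G(A)|=|B|\ge c|A|$. One small cosmetic remark: in the base case the inclusion $N_G(Y)\subseteq X$ holds simply because $X=A$ and $G$ is bipartite; the ``no isolated vertices'' hypothesis is not needed there (it is used only to guarantee $A''\neq A$ in the recursive step).
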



\section{Kernelization}\label{sec:chord}

In this section we prove Theorem \ref{thm:chordal}.
First, in Section~\ref{sec:chordApprox}, we briefly state results relating to approximate solutions for \cdel, which will be relevant to following subsections. In Section \ref{sec:irrelevant} we address annotations that will be added to the input instance.
Next, in Section~\ref{sec:chordIndependent}, we introduce the notion of the {\em independent degree} of a vertex, which lies at the heart of the design of our kernelization algorithm. We carefully examine the independent degrees of vertices in our graphs, and show how these degrees can be bounded by a small polynomial in $k$. In Section~\ref{sec:chordCliqueForest}, we consider the clique forest of the graph obtained by removing (from the input graph) the vertices of an approximate solution. In particular, we demonstrate the usefulness of our notion of an independent degree of a vertex -- having bounded the independent degree of each vertex, we show that the number of leaves in the clique forest can be bounded in a simple and elegant manner. We also efficiently bound the size of a maximal clique. Then, in Section~\ref{sec:ChordPaths}, we turn to bound the length of degree-2 paths in the clique forest. This subsection is quite technical, and its proofs are based on insights into the structure of chordal graphs and their clique forests. In particular, we use a reduction rule which computes a collection of minimum cuts rather than one minimum cut which overall allows us to capture the complexity of a degree-2 path using only few vertices. In Section \ref{sec:unmark}, we remove annotations introduced in preceding sections. Next, in Section~\ref{sec:chordProof}, we bound the size of our kernel.
Finally, in Section \ref{sec:better}, we show that an alternating application of approximation and kernelization can improve the performance of our kernelization algorithm.

\subsection{Approximation}\label{sec:chordApprox}





Observe that it can be assumed that $\log n < k\log k$, else the $2^{\OO(k\log k)}\cdot n^{\OO(1)}$-time algorithm for \cdel by Cao and Marx \cite{CaoM16} solves the input instance in polynomial time. Thus, since \cdel admits an \NEWapproxfactor-factor approximation algorithm \cite{manuscript}, we obtain the following result.

\begin{lemma}\label{cor:newApprox}
\cdel admits an \NEWinitapproxfactor-factor approximation algorithm.
\end{lemma}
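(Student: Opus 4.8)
The plan is to combine the two facts cited immediately before the statement. We are told that \cdel\ admits an \NEWapproxfactor-factor (i.e.\ $\OO(\log^2 n)$-factor) approximation algorithm from \cite{manuscript}, and that we may assume $\log n < k\log k$, since otherwise the $2^{\OO(k\log k)}\cdot n^{\OO(1)}$-time exact algorithm of Cao and Marx \cite{CaoM16} runs in polynomial time in the size of the input and hence produces an optimal (in particular, $1$-factor) solution. So first I would dispose of the easy case: if $\log n \geq k\log k$, run the Cao--Marx algorithm; its running time $2^{\OO(k\log k)}\cdot n^{\OO(1)} = 2^{\OO(\log n)}\cdot n^{\OO(1)} = n^{\OO(1)}$ is polynomial, and it returns an exact solution, which is trivially within the claimed \NEWinitapproxfactor\ factor.

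In the remaining case $\log n < k\log k$, I would simply invoke the $\OO(\log^2 n)$-approximation algorithm of \cite{manuscript} and substitute the bound on $\log n$: its approximation ratio is $\OO(\log^2 n) = \OO((k\log k)^2) = \OO(k^2\log^2 k)$, which is exactly the \NEWinitapproxfactor\ factor claimed. The overall algorithm is: test whether $\log n < k\log k$; if not, run Cao--Marx; if so, run the $\OO(\log^2 n)$-approximation. Both branches run in polynomial time and output a chordal deletion set whose size is within an $\OO(k^2\log^2 k)$ factor of the optimum, which establishes the lemma.

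There is essentially no obstacle here; the only thing to be slightly careful about is that the $\OO(\log^2 n)$ bound must be translated into a bound \emph{in terms of $k$} uniformly, which is why the case distinction on $\log n$ versus $k\log k$ is needed rather than just quoting \cite{manuscript} directly. One should also note that this lemma is the weaker, ``first-pass'' approximation guarantee: later (Section~\ref{sec:better}), after a kernel of size \NEWinitsizeker\ has been produced, $n$ is polynomially bounded in $k$, so $\log n = \OO(\log k)$ and the same $\OO(\log^2 n)$-approximation then yields the sharper \NEWapproxfactor\ ratio and hence an approximate solution of size \NEWapproxsolsizek; but for the present statement only the crude bound is required.
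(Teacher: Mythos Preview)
Your proposal is correct and follows exactly the argument the paper gives (stated in the sentence immediately preceding the lemma): distinguish whether $\log n < k\log k$, invoke Cao--Marx in the first case and the $\OO(\log^2 n)$-approximation of \cite{manuscript} in the second, and substitute the bound on $\log n$. There is nothing to add.
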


Throughout Section \ref{sec:chord}, we let \alg{APPROX} denote a polynomial-time algorithm for \cdel that returns approximate solutions of size $f(\opt)$ for some function $f$. Initially, it will denote the algorithm given by Lemma \ref{cor:newApprox}.
 Given an instance of \cdel, we say that an approximate solution $D$ is {\em redundant} if for every vertex $v\in D$, $D\setminus\{v\}$ is also an approximate solution, that is, $G\setminus (D\setminus\{v\})$ is a chordal graph. Jansen and Pilipczuk \cite{JansenPili2016} showed that given an approximate solution of size $g(k)$ for some function $g$, one can find (in polynomial time) either a vertex contained in every solution of size at most $k$ or a redundant approximate solution of size $\OO(k\cdot g(k))$. Thus, we have the following result.

\begin{corollary}[{\cite{JansenPili2016}}]
\label{cor:approxSpecial}
Given an instance of \cdel, one can find (in polynomial time) either a vertex contained in every solution of size at most $k$ or a redundant approximate solution of size $\OO(k\cdot f(k))$.
\end{corollary}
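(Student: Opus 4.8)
The plan is to combine \alg{APPROX} with the procedure of Jansen and Pilipczuk described immediately before the corollary. First I would run \alg{APPROX} on the input instance $(G,k)$, obtaining in polynomial time an \emph{approximate solution} $D\subseteq V(G)$, that is, a set with $G\setminus D$ chordal and $|D|$ at most $f(\opt)$, where $\opt$ denotes the minimum size of a solution. We may assume without loss of generality that $f$ is nondecreasing: otherwise we replace $f$ by the function sending $k$ to $\max_{0\le k'\le k} f(k')$, which only weakens the stated size guarantees and remains polynomially bounded whenever $f$ is.

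The key step is a case distinction on $|D|$. If $|D|>f(k)$, then $f(\opt)\ge |D|>f(k)$, and since $f$ is nondecreasing this forces $\opt>k$; hence $(G,k)$ is a no-instance, so \emph{every} vertex of $G$ is vacuously contained in every solution of size at most $k$, and we may return an arbitrary vertex of $G$. Otherwise $|D|\le f(k)$, and we feed $D$ into the procedure of \cite{JansenPili2016}, invoked with the function $g:=f$; this is legitimate since $D$ is an approximate solution of size at most $f(k)=g(k)$. That procedure runs in polynomial time and returns either a vertex contained in every solution of size at most $k$, or a redundant approximate solution of size $\OO(k\cdot f(k))$ — precisely the two alternatives in the statement. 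As \alg{APPROX} and the procedure of \cite{JansenPili2016} are both polynomial-time, the whole routine is polynomial-time.

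There is no genuinely hard step here; the only points requiring (minor) care are: (i) bridging the ``$f(\opt)$'' size guarantee of \alg{APPROX} and the ``$f(k)$''-type bound needed to invoke the procedure of \cite{JansenPili2016}, which is handled by the monotonicity of $f$ together with the observation that the inequality $|D|>f(k)$ already certifies a no-instance; and (ii) checking that $D$ is a valid input to that procedure, i.e.\ that $G\setminus D$ is chordal, which holds by the definition of an approximate solution. Consequently, the main conceptual content of the corollary lies entirely in the cited result of Jansen and Pilipczuk, and the corollary is merely its instantiation with the concrete approximate solutions produced by \alg{APPROX}.
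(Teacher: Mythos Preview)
Your proposal is correct and matches the paper's approach: the paper does not give an explicit proof of this corollary at all, merely stating ``Thus, we have the following result'' after describing the Jansen--Pilipczuk procedure, so the argument is simply to feed the output of \alg{APPROX} into that procedure. Your write-up is in fact more careful than the paper, since you explicitly handle the gap between the $f(\opt)$ size guarantee of \alg{APPROX} and the $f(k)$ bound needed as input, via the monotonicity of $f$ and the no-instance observation; the paper leaves this implicit (it simply asserts afterward that ``we may assume that we have a vertex-set $\widetilde{D}$ that is an approximate solution of size $f(k)$'').
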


Next, we fix an instance $(G,k)$ of \cdel. By relying on \alg{APPROX} and Corollary~\ref{cor:approxSpecial}, we may assume that we have a vertex-set $\widetilde{D}\subseteq V(G)$  that is an approximate solution of size $f(k)$ and a vertex-set $D\subseteq V(G)$ that is a redundant approximate solution of size $c\cdot k\cdot f(k)$ for some constant $c$ independent of the input.

In the following subsection, we will also need to strengthen approximate solutions to be $v$-blockers for some vertices $v\in V(G)$. To this end, we will rely on the following result.

\begin{lemma}\label{lem:approxBlocker}
Given a vertex $v\in V(G)$, one can find (in polynomial time) either a vertex contained in every solution of size at most $k$ or an approximate solution of size $f(k)$ that is a $v$-blocker.
\end{lemma}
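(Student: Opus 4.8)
The plan is to reduce the construction of a $v$-blocker to the construction of an ordinary approximate solution in a slightly modified instance, so that we can invoke \alg{APPROX} (together with Corollary~\ref{cor:approxSpecial}) as a black box. The natural obstruction to overcome is that an approximate solution $\widetilde D$ produced by \alg{APPROX} may contain $v$ itself, whereas a $v$-blocker must avoid $v$. To handle this, first run \alg{APPROX} on $G$ to obtain an approximate solution $\widetilde D$ of size $f(\opt)\le f(k)$. If $v\notin\widetilde D$, then $\widetilde D$ is already a $v$-blocker and we are done, so assume $v\in\widetilde D$. The idea is now to ``force'' $v$ out of the solution by blowing it up into a clique of size $k+1$: let $G'$ be the graph obtained from $G$ by replacing $v$ with $k+1$ true twins $v_1,\dots,v_{k+1}$ (each $v_i$ adjacent to $N_G(v)$ and to every $v_j$, $j\ne i$). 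Any chordless cycle of $G$ through $v$ lifts to a chordless cycle of $G'$ through any chosen $v_i$, and conversely a chordless cycle of $G'$ uses at most two of the twins (three pairwise-adjacent twins would give a chord), so contracting the twins back to $v$ sends chordless cycles of $G'$ to chordless cycles of $G$ through $v$ or to ones avoiding $v$; in particular $(G',k)$ is a yes-instance iff $(G,k)$ is, and in the yes-case no solution of size $\le k$ can hit all of the $v_i$'s.

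Next, run \alg{APPROX} and Corollary~\ref{cor:approxSpecial} on $G'$. If this yields a vertex $w$ contained in every solution of size $\le k$ of $(G',k)$, then either $w$ is one of the twins $v_i$ — impossible, since then any size-$\le k$ solution would have to contain a twin, but there is always a size-$\le k$ solution avoiding all twins whenever one exists (replace an offending twin by $v$'s role in $G$, i.e.\ use the original solution hitting the corresponding cycle elsewhere) — or $w$ is a genuine vertex of $V(G)\setminus\{v\}$, and one checks that it then lies in every size-$\le k$ solution of the original instance $(G,k)$; in that case we return $w$. Otherwise we obtain an approximate solution $D'$ of $(G',k)$ of size $\OO(k\cdot f(k))$; since the twins form a clique of size $k+1 > |D'|$, $D'$ misses at least one twin, and by symmetry of the twins we may assume $D'$ misses \emph{all} of them (if $D'$ contains $t$ twins, dropping those $t$ vertices still leaves a chordal graph because a single remaining twin already ``simulates'' $v$ — the induced subgraph $G'\setminus D'$ restricted to one surviving twin is isomorphic to $G\setminus(D'\cap V(G))$ plus that twin playing the role of $v$). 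Then $D:=D'\cap V(G)$ is a subset of $V(G)\setminus\{v\}$ and $G\setminus D = (G'\setminus D')$ with the twin clique contracted back to $v$, which is chordal because contracting a clique of true twins in a chordal graph keeps it chordal. Hence $D$ is an approximate solution of $(G,k)$ of size $f(k)$ avoiding $v$, i.e.\ a $v$-blocker, and since $\OO(k\cdot f(k))$ may exceed $f(k)$ we should instead apply \alg{APPROX} directly (not Corollary~\ref{cor:approxSpecial}) to $G'$ to keep the size at $f(k)$, using Corollary~\ref{cor:approxSpecial} only to detect the ``forced vertex'' case.

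The step I expect to be the main obstacle is verifying the twin-contraction correspondence precisely: one must argue carefully that (a) chordless cycles of length $\ge 4$ are preserved in both directions under blowing up / contracting a clique of true twins — using that at most two twins can appear on an induced cycle — and (b) the ``forced vertex'' output of Corollary~\ref{cor:approxSpecial} on $G'$, when it is a twin, can be cleanly translated, which requires the observation that a yes-instance of $(G,k)$ always admits a solution disjoint from the twin clique. Both are routine but need the explicit chordless-cycle bookkeeping; everything else is immediate from the results already available in the excerpt. A subtlety worth noting for the final write-up: to get size exactly $f(k)$ rather than $\OO(k\cdot f(k))$, the $v$-blocker should be extracted from the output of \alg{APPROX} on $G'$ directly, and Corollary~\ref{cor:approxSpecial} is only consulted to decide whether we are in the ``vertex in every solution'' case.
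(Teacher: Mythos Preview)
Your overall approach---blowing up $v$ into a clique of true twins and running \alg{APPROX} on the resulting graph $G'$---is exactly the paper's strategy. However, there is a genuine gap in the parameter choice: you add only $k+1$ twins, and then argue ``since the twins form a clique of size $k+1 > |D'|$, $D'$ misses at least one twin''. This inequality is false. The approximate solution returned by \alg{APPROX} on $G'$ has size $f(\opt_{G'}) \le f(k)$, and $f(k)$ is in general much larger than $k+1$ (in this paper $f(k)=\OO(k^3\log^2 k)$). So nothing prevents the approximate solution from swallowing all $k+1$ twins, and your subsequent symmetry argument (``a single remaining twin simulates $v$'') then has no remaining twin to work with. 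The paper fixes precisely this by adding $f(k)$ new copies of $v$ (for a total of $f(k)+1$), so that an approximate solution of size at most $f(k)$ is forced to miss at least one.

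Two smaller points. First, since the twins are pairwise adjacent \emph{and} share the same outside neighbourhood, a chordless cycle in $G'$ can contain at most \emph{one} twin (two adjacent twins on the cycle would force a chord through the next vertex), not two as you write; this simplifies the correspondence. Second, invoking Corollary~\ref{cor:approxSpecial} to detect the ``forced vertex'' case is unnecessary and muddies the size bound: the paper simply tests whether the solution returned by \alg{APPROX} on $G'$ has size exceeding $f(k)$; if so, $(G',k)$ has no solution of size $\le k$, which means every size-$\le k$ solution of $(G,k)$ must contain $v$, and we return $v$. Otherwise the size-$\le f(k)$ solution misses some twin, hence by symmetry misses all of them, and its restriction to $V(G)$ is the desired $v$-blocker of size $\le f(k)$.
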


\begin{proof}
Fix a vertex $v=v_0\in V(G)$. We define the graph $G'$ by setting $V(G')=V(G)\cup\{v_1,v_2,\ldots,v_{f(k)}\}$, where $v_1,v_2,\ldots,$ $v_{f(k)}$ are new vertices, and $E(G')=E(G)\cup\{\{u,v_i\}:\{u,v\}\in E(G), i\in [f(k)]\} \cup \{(v_i, v_j) : i \neq j\in \{0\}\cup [f(k)]\}$. In other words, $G'$ is the graph $G$ to which we add $f(k)$ copies of the vertex $v$, which form a clique amongst themselves and~$v$.

We call the algorithm \alg{APPROX} to obtain an approximate solution $S$. Since $G'[\{v_0,v_1,\ldots,$ $v_{f(k)}\}]$ is a clique, any chordless cycle in $G'$ contains at most one vertex from $\{v_0,v_1,\ldots,v_{f(k)}\}$. In particular, for any chordless cycle $C$ in $G'$, by replacing $v_i$ by $v_0$ (in case $v_i$ belongs to $C$), we obtain a chordless cycle in $G$. Therefore, if the instance $(G,k)$ admits a solution of size at most $k$ that does not contain $v$, then this solution is also a solution for the instance $(G',k)$, which implies that the size of $S$ should be at most $f(k)$. Thus, we can next assume that $|S|\leq f(k)$, else we conclude that the vertex $v$ is contained in every solution of size at most $k$. Since the vertices $v_0,v_1,v_2,\ldots,v_{f(k)}$ have the same neighbor-set, if $\{v_0,v_1,v_2,\ldots,v_{f(k)}\}\cap S\neq\emptyset$, we can assume that $\{v_0,v_1,v_2,\ldots,v_{f(k)}\}\subseteq S$, since otherwise we can take $S\setminus\{v_0,v_1,v_2,\ldots,v_{f(k)}\}$ as our approximate solution $S$. Since we assume that $|S|\leq f(k)$, we deduce that $\{v_0,v_1,v_2,\ldots,v_{f(k)}\}\cap S=\emptyset$. In particular, we have that $v\notin S$ and therefore $S$ is a $v$-blocker.
\end{proof}

In light of Lemma \ref{lem:approxBlocker}, we may next assume that for every vertex $v\in V(G)$, we have a vertex-set $B_v$ that is both a $v$-blocker and an approximate solution of size $f(k)$.

\subsection{Irrelevant and Mandatory Edges}\label{sec:irrelevant}
During the execution of our kernelization algorithm, we mark some edges in $E(G)$ as {\em irrelevant edges}. At the beginning of its execution, all of the edges in $E(G)$ are assumed to be relevant edges. When we mark an edge as an irrelevant edge, we prove that any solution that hits all of the chordless cycles in $G$ that contain only relevant edges also hits all of the chordless cycles in $G$ that contain the irrelevant edge. In other words, we prove that we can safely ignore chordless cycles that contain at least one irrelevant edge. Observe that we cannot simply remove irrelevant edges from $E(G)$ since this operation may introduce new chordless cycles in $G$.
Instead we maintain a set $E_I$, which contains the edges marked as irrelevant.

We also mark some edges in $E(G)$ as {\em mandatory edges}. 
We will ensure that at least one endpoint of a mandatory edge is present in any solution of size at most $k$.
We let $E_M$ denote the set of mandatory edges.

In some situations, we identify a pair of non-adjacent vertices such that any solution of size at most $k$ must contain at least one of them.
Then, we add an edge between the vertices of the pair, and mark this edge as a mandatory edge. The correctness of this operation follows from the observation that any chordless cycle affected by the addition of this edge contains both vertices of the pair, and since the edge is marked as a mandatory edge, we ensure this chordless cycle will be hit although it may no longer exist.
Formally, we have the following reduction rule.
\begin{reduction}\label{red:req pair}
    Given two non-adjacent vertices in $G$, $v$ and $u$,
    such that at least one of them belongs to any solution 
    of size at most $k$, insert the edge $\{v,u\}$ into both $E(G)$ and~$E_M$.
\end{reduction}

Hence from now onwards our instance is of the form $(G,k,E_I,E_M)$,
and during the execution of our kernelization algorithm, we will update the sets $E_I$ and $E_M$.
In Section \ref{sec:unmark}, we show that we can unmark the edges in $E_I\cup E_M$,
obtaining an ordinary instance of \cdel. For the sake of simplicity, when $E_I$ and $E_M$ are clear from context, we omit them.

\bigskip
{\noindent\bf The Number of Mandatory Edges.}
If a vertex $v$ is incident to at least $k+1$ mandatory edges,
it must belong any solution of size at most $k$.
Therefore, we may safely apply the following reduction~rule.
\begin{reduction}\label{rule:incidentMandatory}
    If there exists a vertex $v$ incident to at least $k+1$ mandatory edges, remove $v$ from $G$ and decrement $k$ by 1.
\end{reduction}
After exhaustively applying the above reduction rule we have the following lemma.
\begin{lemma}
    If $|E_M|>k^2$
    then the input instance is a no-instance.
\end{lemma}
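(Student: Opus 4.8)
The plan is to argue by contradiction, using Reduction Rule~\ref{rule:incidentMandatory} to control how many mandatory edges any single vertex can carry, and then a counting argument on the endpoints of mandatory edges.

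First, I would observe that the stated lemma is meant to hold \emph{after} Reduction Rule~\ref{rule:incidentMandatory} has been applied exhaustively. So suppose the instance is a yes-instance, and let $S$ be a solution with $|S|\leq k$. By the definition of a mandatory edge, every edge in $E_M$ has at least one endpoint in $S$; hence $E_M\subseteq \{e\in E(G): e\cap S\neq\emptyset\}$. In particular, $E_M$ is covered by the set $S$, so $|E_M|\leq \sum_{v\in S} d_{E_M}(v)$, where $d_{E_M}(v)$ denotes the number of mandatory edges incident to $v$. Since Reduction Rule~\ref{rule:incidentMandatory} is no longer applicable, every vertex $v\in V(G)$ is incident to at most $k$ mandatory edges, i.e.\ $d_{E_M}(v)\leq k$. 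Combining these bounds, $|E_M|\leq \sum_{v\in S} d_{E_M}(v)\leq |S|\cdot k\leq k\cdot k=k^2$.

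Therefore, if $|E_M|>k^2$, no solution of size at most $k$ can exist, and the instance is a no-instance, as claimed. The only subtlety worth spelling out is the justification that every mandatory edge is indeed hit by any solution of size at most $k$: this is precisely the invariant maintained when mandatory edges are introduced (via Reduction Rule~\ref{red:req pair} and analogous steps), namely that at least one endpoint of each mandatory edge belongs to every solution of size at most $k$. Given that invariant, the counting argument above is immediate.

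I do not expect a genuine obstacle here; the statement is a short consequence of the preceding reduction rule together with the semantics of mandatory edges. If anything, the only thing to be careful about is making the quantifiers precise — the bound $d_{E_M}(v)\leq k$ for \emph{all} vertices holds only once Reduction Rule~\ref{rule:incidentMandatory} has been applied exhaustively, which is exactly the hypothesis under which the lemma is stated.
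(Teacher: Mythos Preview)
Your proof is correct and follows essentially the same approach as the paper: after exhaustive application of Reduction Rule~\ref{rule:incidentMandatory}, every vertex is incident to at most $k$ mandatory edges, so any solution of size at most $k$ can cover at most $k^2$ mandatory edges, forcing $|E_M|\leq k^2$ in a yes-instance.
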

\begin{proof}
    After the exhaustive application of Reduction Rule \ref{rule:incidentMandatory}, any vertex incident to a mandatory edge
    is incident to at most $k$ such edges. Therefore, any set of at most $k$ vertices from $V(G)$ may intersect at most $k^2$ mandatory edge. Since every solution of size at most $k$ must intersect every mandatory edge, we deduce that if $|E_M|>k^2$, the input instance is a no-instance.
\end{proof}

Thus, we will next assume that $|E_M|\leq k^2$ (else Reduction Rule \ref{rule:incidentMandatory} applies).
Moreover, we let $D'$ denote the set $D\cup\widetilde{D}$ to which we add every vertex that is an endpoint of a vertex in $E_M$ (recall that $\widetilde{D}$ is our approximate solution of size at most $f(k)$ and $D$ is 
our redundant approximate solution of size $\OO(k\cdot f(k))$). Observe that by adding vertices to a redundant approximate solution, it remains a redundant approximate solution, and therefore $D'$ or any other superset of $D$ is such a solution.


\subsection{Independent Degree}\label{sec:chordIndependent}



Given a vertex $v\in V(G)$, we use the notation $N^R_G(v)$ to refer to the set of each 
vertex $u\in N_G(v)$ such that $\{v,u\}$ does not belong to $E_I\cup E_M$. We remark that in this subsection we identify and mark some edges as irrelevant edges.

\bigskip
{\noindent\bf Independent Degrees.} We start by introducing the notion of the independent degree of vertices and graphs.

\begin{definition}\label{def:independentDeg}
Given a vertex $v\in V(G)$, the {\em independent degree of $v$}, denoted by $d^I_G(v)$, is the size of a maximum independent set in the graph $G[N^R_G(v)]$. The {\em independent degree of $G$}, denoted by $\Delta^I_G$, is the maximum independent degree of a vertex in $V(G)$.
\end{definition}

Fix $\Delta = (k+3)f(k)$.
The objective of this subsection is to investigate the notion of an independent degree, ultimately proving the following result.

\begin{lemma}\label{lem:boundIndDeg}
One can construct (in polynomial time) an instance $(G',k', E_I', E_R')$ of \cdel that is equivalent to the input instance $(G,k,E_I,E_R)$ and such that both $k'\leq k$ and $\Delta^I_{G'}\leq\Delta$.
\end{lemma}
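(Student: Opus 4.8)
The plan is to bound the independent degree of every vertex to $\Delta = (k+3)f(k)$ by a single reduction rule in the spirit of the expansion-lemma step for \textsc{Feedback Vertex Set}, applied to each vertex $v$ whose current independent degree exceeds $\Delta$. First I would fix such a vertex $v$ and consider the graph $H=G[N^R_G(v)]$, which by assumption has independence number more than $\Delta=(k+3)f(k)$. The key observation is that because we have access to a $v$-blocker $B_v$ of size $f(k)$ (Lemma~\ref{lem:approxBlocker}), and also the redundant approximate solution $D'$ of size $\OO(k\cdot f(k))$, we can afford to remove $B_v\cup D'$ from consideration and still keep a large independent set $I$ inside $N^R_G(v)$ — of size at least $\Delta - |B_v| - |D'| \geq k+3$ after we also discard the at most $k$ vertices of a hypothetical solution. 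The point of using a $v$-blocker is that in $G\setminus B_v$ there are no chordless cycles at all through $v$, so the triangles $\{v,x,y\}$ for $x,y\in N^R_G(v)$ with $\{x,y\}\in E(G)$, together with the structure around $v$, become controllable. I would then argue that if two relevant neighbours $x,y$ of $v$ are non-adjacent and both lie in a large independent set surviving in $N^R_G(v)\setminus(B_v\cup D')$, then any size-$\le k$ solution that avoids $v$ must hit the ``$v$-flower'' formed by these independent neighbours — more precisely, I would set up a bipartite graph with $A$ a suitable small set (derived from $B_v\cup D'$ or from $v$'s short cycles) on one side and the large independent set $I\subseteq N^R_G(v)$ on the other, connecting $x\in I$ to the obstruction vertices it participates in, and apply the Expansion Lemma (Lemma~\ref{lem:expansion}) with $c$ roughly $k+1$.

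Concretely, the reduction I expect to need works as follows. Having extracted $X\subseteq A$ and $Y\subseteq I$ with a $c$-expansion from $X$ into $Y$ and $N(Y)\subseteq X$, I would conclude that every solution of size at most $k$ that does not contain $v$ must contain at least one vertex of $X$ (since otherwise the $c=k+1$ vertex-disjoint obstructions attached to each $x\in X$ cannot all be hit) — hence every size-$\le k$ solution contains $v$ or meets $X$; if $|X|$ is a singleton $\{u\}$ we invoke Reduction Rule~\ref{red:req pair} to insert the mandatory edge $\{v,u\}$ (if $v,u$ are non-adjacent) or simply record that one of $v,u$ is in every solution. More usefully, I would use the expansion to show that certain edges incident to vertices of $Y$ can be marked \emph{irrelevant}: since $N(Y)=X$ and $X$ is small, every chordless cycle through a vertex $y\in Y$ either goes through $v$ (and is killed by $B_v$ once we argue $B_v$ must be, in effect, in the solution or else $v$ is) or goes through $X$, so the edge $\{v,y\}$ lies on no chordless cycle not already controlled; marking $\{v,y\}$ as irrelevant for all $y\in Y$ strictly shrinks $N^R_G(v)$ and hence $d^I_G(v)$. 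Iterating over all vertices and re-marking drives $\Delta^I_G$ down to $\Delta$. I would need to check that marking these edges is safe in the precise sense of Section~\ref{sec:irrelevant}: any solution hitting all chordless cycles through relevant edges also hits those through the newly irrelevant edges — this follows because the chordless cycles we are discarding are exactly the ones ``covered'' by the $v$-blocker/expansion argument.

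Finally I would package the iteration: as long as some vertex $v$ has $d^I_G(v)>\Delta$, apply the above to mark at least one more edge incident to $v$ as irrelevant (or delete a forced vertex and decrement $k$, or add a mandatory edge), and repeat; since each application marks a fresh irrelevant edge or removes a vertex, and there are at most $|E(G)|$ edges and $n$ vertices, the process terminates in polynomial time with the invariant $k'\le k$ preserved throughout (Reduction Rule~\ref{red:req pair} does not change $k$, and deletions decrease it). The resulting instance $(G',k',E_I',E_R')$ is equivalent to $(G,k,E_I,E_R)$ by safeness of each step, and satisfies $\Delta^I_{G'}\le\Delta$, which is Lemma~\ref{lem:boundIndDeg}.

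The main obstacle I anticipate is the safeness argument for marking $\{v,y\}$ irrelevant: it is not enough that $v$-blockers exist; one must argue that \emph{for the particular solution we care about}, either $v$ is in it (so all cycles through $v$ are dead) or the solution already behaves like a $v$-blocker on the relevant part, so that the chordless cycles using $\{v,y\}$ are all hit. Making this precise — deciding what exactly plays the role of $A$ in the Expansion Lemma so that a $c$-expansion yields ``every $k$-solution avoiding $v$ hits $X$'', while simultaneously ensuring the edges we discard are genuinely redundant — is the delicate combinatorial core, and is presumably where the $(k+3)f(k)$ slack (room for the $\le k$ solution vertices, the $\le f(k)$ blocker, and a couple of extra vertices) gets spent.
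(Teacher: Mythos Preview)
Your high-level outline matches the paper: find a large independent set $I$ inside $N^R_G(v)\setminus B_v$, run an expansion argument against $B_v$, add mandatory edges from $v$ to the ``hub'' set, and mark the saturated edges to $v$ as irrelevant. However, the step you yourself flag as ``the delicate combinatorial core'' is not resolved in your sketch, and the resolution requires an ingredient you are missing.

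The bipartite graph is \emph{not} between $B_v$ (or $B_v\cup D'$) and $I$. After fixing $I$, the paper deletes $v$, $B_v$, and all remaining neighbours $X=N_G(v)\setminus(B_v\cup I)$ from $G$, and takes the family $\mathcal A$ of connected components of the resulting graph that meet $I$. Because $B_v$ is a $v$-blocker (no chordless cycle through $v$ survives in $G\setminus B_v$), two short lemmas show that each $A\in\mathcal A$ contains a unique vertex $z(A)\in I$ and that every $x\in X$ adjacent to $A$ is already adjacent to $z(A)$. The bipartite graph $\widehat H$ has sides $B_v$ and $\mathcal A$, with an edge $\{b,A\}$ recording that $b$ reaches $A$ in a way \emph{not} short-circuited by an edge $\{b,z(A)\}$ (when $b\in N_G(v)$) or simply that $b$ reaches $A$ (when $b\notin N_G(v)$). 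The entire point of passing to components rather than staying inside $N^R_G(v)$ is that an edge of $\widehat H$ then \emph{certifies} a concrete chordless cycle through $v$, $b$, and a path in $A$ (or through $v$, $b$, and paths in two components when $b\notin N_G(v)$). Your proposed ``connecting $x\in I$ to the obstruction vertices it participates in'' does not produce such certificates, so the expansion would not yield the $k+1$ near-disjoint chordless cycles you need for each $b\in B^*$.

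Two further corrections. First, the conclusion you draw from the expansion (``every size-$\le k$ solution avoiding $v$ meets $X$'') is too weak; the $(k+2)$-expansion from $B^*$ into $\mathcal A^*$ gives that every solution avoiding $v$ contains \emph{all} of $B^*$, and it is this stronger fact that makes it safe to mark each $\{v,z(A)\}$ with $A\in\mathcal A^*$ irrelevant: any chordless cycle through such an edge must, by the component structure, leave $A$ via some $b\in B_v$ with $\{b,A\}\in E(\widehat H)$, hence $b\in B^*$ by the Expansion Lemma's $N_{\widehat H}(\mathcal A^*)\subseteq B^*$ clause, and $b$ now carries a mandatory edge to $v$. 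Second, $D'$ plays no role here, and your arithmetic $\Delta-|B_v|-|D'|\ge k+3$ is both unnecessary and generally false since $|D'|=\Theta(k\cdot f(k))\ge\Delta$; the correct count is $|\mathcal A|\ge \Delta-f(k)=(k+2)f(k)\ge (k+2)|B_v|$, which is precisely the hypothesis of the Expansion Lemma with $c=k+2$.
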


To this end, we may assume that we are given a vertex $v\in V(G)$ such that $d^I_G(v)>\Delta$. We say that an instance $(G',k',E_I',E_M')$ of \cdel is {\em better} than the input instance $(G,k,E_I,E_M)$ if $k'\leq k$, 
$V(G') = V(G)$, $E_I\subseteq E_I'$, $E_M\subseteq E_M'$
$d^I_{G'}(v)\leq \Delta$ and for all $u\in V(G')$, $d^I_{G'}(u)\leq d^I_G(u)$. To prove the correctness of Lemma \ref{lem:boundIndDeg}, it is sufficient to prove the correctness of the following lemma.

\begin{lemma}\label{lem:boundIndDegV}
We can construct (in polynomial time) an instance $(G',k', E_I', E_M')$ of \cdel that is better than the input instance $(G,k,E_I,E_M)$.
\end{lemma}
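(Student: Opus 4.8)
The plan is to reduce the independent degree of the fixed vertex $v$ using the Expansion Lemma, mimicking the high-degree reduction step in Thomassé's kernel for \textsc{FVS}, but working in the neighborhood graph $G[N^R_G(v)]$ rather than with the degree directly. First I would fix a maximum independent set $I$ in $G[N^R_G(v)]$; by assumption $|I| = d^I_G(v) > \Delta = (k+3)f(k)$. The idea is that $v$ together with long ``detours'' through pairs of independent neighbors creates many chordless cycles (a $v$--$x$--$\cdots$--$y$--$v$ path where $x,y\in I$ are non-adjacent and the interior avoids $N[v]$, plus the edges $vx,vy$, gives a chordless $C_{\geq 4}$ unless hit), so a large independent neighborhood forces structure that a small solution, or the blocker $B_v$, must control. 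Concretely, I would use $B_v$ (the $v$-blocker of size $f(k)$ guaranteed after Section~\ref{sec:chordApprox}) together with the redundant approximate solution $D'$: removing $B_v\cup D'$ leaves a chordal graph in which $v$ has an independent set of neighbors of size $> \Delta - |B_v| - |D'|$, which is still $\Omega((k+1)f(k))$ by the choice of $\Delta$.

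The main construction would be the following. Set $A = B_v$ (or a suitable small ``core'' set whose removal chordalizes and which every small solution must interact with), and let $B$ be the set of neighbors $u\in N^R_G(v)$ that lie in some chordless cycle through $v$ avoiding the rest of $A$. When $|B| \geq c\,|A|$ for the right constant $c$ (here $c \approx k+3$, explaining $\Delta = (k+3)f(k)$), the Expansion Lemma gives non-empty $X\subseteq A$ and $Y\subseteq B$ with a $c$-expansion of $X$ into $Y$ and $N_G(Y) = X$. The point of $N_G(Y)=X$ is that the only vertices outside $\{v\}$ that the ``expanded'' neighbors $Y$ see are in $X$; combined with the $c$-expansion this means that for any solution $S$ of size at most $k$ that avoids $v$, either $S$ hits all of $X$ (impossible if $|X|$-many disjoint stars each needed a hit but $X$ is small — one gets a counting contradiction since the $c=k+3 > k$ stars attached to a single $x\in X$ each yield an independent witness), or $S$ must contain some vertex of $X$. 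This lets me conclude that some specific vertex $x^\star\in X$ lies in every solution of size at most $k$ that avoids $v$, hence by Reduction Rule~\ref{red:req pair} I may add the mandatory edge $\{v, x^\star\}$ (if not already adjacent), or, if $x^\star$ and $v$ are already adjacent, I mark $\{v,x^\star\}$ as irrelevant — in either case $x^\star$ leaves $N^R_{G'}(v)$, strictly decreasing $d^I_{G'}(v)$ while not increasing $d^I$ of any vertex (adding edges only shrinks independent neighborhoods, and we change only edges incident to $v$). Iterating this polynomially many times drives $d^I_{G'}(v)$ down to at most $\Delta$, giving the ``better'' instance; equivalence is preserved at each step because Reduction Rule~\ref{red:req pair} is safe and marking an edge irrelevant is safe by the invariant maintained in Section~\ref{sec:irrelevant}.

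I expect the main obstacle to be setting up the bipartite graph so that the Expansion Lemma actually yields a vertex forced into every small solution, rather than merely a vertex forced into \emph{every} solution of size at most $k$ \emph{avoiding} $v$ — the asymmetry around $v$ is exactly what Reduction Rule~\ref{red:req pair} is designed to absorb, but one must argue carefully that when $\{v,x^\star\}$ is \emph{not} already present, every solution of size $\le k$ contains $v$ or $x^\star$, and when it \emph{is} present, every chordless cycle through the edge $\{v,x^\star\}$ is still hit by a solution hitting all chordless cycles using only relevant edges. The second delicate point is verifying that the witnesses produced by the $c$-expansion are genuinely \emph{chordless} cycles through $v$ and that $c = k+3$ (not merely $k+1$) is the right slack: we lose up to $|D'| + |B_v|$ vertices to the approximate solutions before invoking the Expansion Lemma, which is why $\Delta$ carries the factor $f(k)$, and we need a couple of extra stars beyond $k$ so that a size-$k$ solution provably cannot destroy all the independent witnesses attached to a single $x\in X$. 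Once these two points are pinned down, the polynomial running time and the monotonicity of $d^I$ under the operations are routine.
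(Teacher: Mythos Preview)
Your high-level plan (use the $v$-blocker $B_v$ as one side of a bipartite graph, apply the Expansion Lemma, conclude something must be in every small solution avoiding $v$, then mark edges) matches the paper, but the actual reduction you describe has a genuine gap: it does not decrease $d^I_G(v)$.

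You place $B_v$ on one side and individual relevant neighbors of $v$ on the other, and after the expansion you extract a single $x^\star\in X\subseteq B_v$ and mark the edge $\{v,x^\star\}$ as mandatory (or irrelevant). But the large independent set $I$ witnessing $d^I_G(v)>\Delta$ can, and in the paper does, live entirely in $N^R_G(v)\setminus B_v$ (see Corollary~\ref{cor:simpleBoundIndDeg}). Removing $x^\star\in B_v$ from $N^R_G(v)$ leaves $I$ intact, so $d^I_G(v)$ does not drop; worse, since $|B_v|\le f(k)$ you could iterate at most $f(k)$ times before running out of candidates, which cannot bring $d^I_G(v)$ from above $\Delta=(k+3)f(k)$ down to $\Delta$. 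The progress measure you need is on the \emph{$I$-side}, not the $B_v$-side.

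The paper fixes this by putting, opposite $B_v$, not individual neighbors but the \emph{connected components} $\mathcal A$ of $G\setminus(\{v\}\cup B_v\cup(N_G(v)\setminus(B_v\cup I)))$; each such component contains exactly one vertex $z(A)\in I$ (Lemma~\ref{lem:connectedComp1}), and the edges of $\widehat H$ are defined so that adjacency encodes the existence of a chordless cycle through $v$, $b\in B_v$, and a path inside $A$ (Lemma~\ref{lem:H-hat}). Isolated $A$'s are handled first by marking $\{v,z(A)\}$ irrelevant (Lemma~\ref{lem:isolated2}). The $(k+2)$-expansion then yields $B^\star\subseteq B_v$ and $\mathcal A^\star\subseteq\mathcal A$ with $N_{\widehat H}(\mathcal A^\star)\subseteq B^\star$, and one proves that every solution of size $\le k$ avoiding $v$ contains \emph{all} of $B^\star$ (Lemma~\ref{lem:vOrB}), not just one vertex. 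The reduction (Rule~\ref{rule:decIndDeg}) adds mandatory edges $\{v,b\}$ for $b\in B^\star$ \emph{and} marks every edge $\{v,z(A)\}$, $A\in\mathcal A^\star$, as irrelevant. Safeness of the latter uses $N_{\widehat H}(\mathcal A^\star)\subseteq B^\star$ together with Lemmata~\ref{lem:connectedComp1}--\ref{lem:connectedComp2}: any chordless cycle through $\{v,z(A)\}$ must contain some vertex of $B^\star$, hence is hit either by $v$ or by $B^\star\subseteq S$. It is this marking of the $\{v,z(A)\}$ edges, with $z(A)\in I$, that actually shrinks $N^R_G(v)$ on the independent-set side and drives $d^I_G(v)$ down.
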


Indeed, to prove Lemma \ref{lem:boundIndDeg}, one can repeatedly apply the operation given by Lemma \ref{lem:boundIndDegV} in the context of every vertex $u\in V(G)$ such that $d^I_G(u)>\Delta$. We start with a simple result concerning independent degrees.

\begin{lemma}\label{lem:findIndSet}
Let $u\in V(G)$ be a vertex such that $d^I_G(u)\geq |B_u|$. Then, one can find (in polynomial time) an independent set in $G[N^R_G(u)\setminus B_u]$ of size at least $d^I_G(u)-|B_u|$.
\end{lemma}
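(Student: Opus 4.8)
The plan is to exploit the fact that, although computing a maximum independent set is hard in general graphs, it is easy in chordal graphs, and that $B_u$ being an approximate solution makes the relevant part of the neighbourhood chordal once $B_u$ is deleted.

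First I would recall that $B_u$ is, in particular, an approximate solution, so $G\setminus B_u$ is a chordal graph. Since $N^R_G(u)\setminus B_u\subseteq V(G)\setminus B_u$, the graph $H := G[N^R_G(u)\setminus B_u]$ is an induced subgraph of $G\setminus B_u$, and an induced subgraph of a chordal graph is chordal (a chordless cycle of $H$ would be a chordless cycle of $G\setminus B_u$). Hence $H$ is chordal, and by Lemma~\ref{lem:cliqueForest} it admits a clique forest, which can be computed in polynomial time. Using such a clique forest (or, equivalently, a perfect elimination ordering of $H$) one computes a maximum independent set $J$ of $H$ in polynomial time; note that $J\subseteq N^R_G(u)\setminus B_u$ is an independent set of $G$ by construction.

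It then remains to lower-bound $|J|$. Let $I$ be a maximum independent set of $G[N^R_G(u)]$, so that $|I| = d^I_G(u)$ by Definition~\ref{def:independentDeg}. The set $I\setminus B_u$ is an independent set of $H$ of size at least $|I| - |B_u| = d^I_G(u) - |B_u|$, which is non-negative by the hypothesis $d^I_G(u)\geq |B_u|$. Since $J$ is a \emph{maximum} independent set of $H$, we get $|J|\geq |I\setminus B_u|\geq d^I_G(u) - |B_u|$, so returning $J$ proves the lemma.

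I do not expect a real obstacle here; the only subtlety is that the target bound is phrased in terms of the quantity $d^I_G(u)$, which is itself the size of a maximum independent set in the possibly non-chordal graph $G[N^R_G(u)]$ and hence not obviously computable in polynomial time. The point of the argument is precisely that we never need to compute that set: deleting $B_u$ makes the residual neighbourhood chordal, so an optimal independent set can be found there efficiently, and it is automatically within $|B_u|$ of $d^I_G(u)$.
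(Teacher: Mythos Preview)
Your proposal is correct and follows essentially the same approach as the paper: both use that $G\setminus B_u$ is chordal to make $G[N^R_G(u)\setminus B_u]$ chordal, compute a maximum independent set there in polynomial time, and observe that removing at most $|B_u|$ vertices from a maximum independent set of $G[N^R_G(u)]$ gives the required lower bound. The only cosmetic difference is that the paper appeals to perfection of chordal graphs for polynomial-time solvability, whereas you invoke the clique forest/perfect elimination ordering directly.
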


\begin{proof}
Since $B_u$ is a solution, $G\setminus B_u$ is a chordal graph. In particular, $G[N^R_G(u)\setminus B_u]$ is a chordal graph. Since chordal graphs are perfect, {\sc Maximum Independent Set} in chrodal graphs is solvable in polynomial time \cite{Golumbic80}. This means that we can find a maximum independent set in $G[N^R_G(u)\setminus B_u]$ in polynomial time. Since the size of a maximum independent set in $G[N^R_G(u)]$ is at least $d^I_G(u)$, the size of a maximum independent set in $G[N^R_G(u)\setminus B_u]$ is at least $d^I_G(u)-|B_u|$, which concludes the correctness of the lemma.
\end{proof}

Recall that for any vertex $u\in V(G)$, $|B_u|\leq f(k)$.
Thus, we have the following corollary.

\begin{corollary}\label{cor:simpleBoundIndDeg}
One can find (in polynomial time) an independent set in $G[N^R_G(v)\setminus B_v]$ of size at least $\Delta-f(k)$.
\end{corollary}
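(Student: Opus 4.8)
The plan is to obtain the corollary as a direct instantiation of Lemma~\ref{lem:findIndSet} at the specific vertex $v$ fixed at the start of the subsection (the one with $d^I_G(v)>\Delta$). First I would verify the hypothesis of Lemma~\ref{lem:findIndSet} for $u:=v$, namely $d^I_G(v)\geq |B_v|$. This is immediate from the chain $d^I_G(v)>\Delta=(k+3)f(k)\geq f(k)\geq |B_v|$, where the first inequality is our standing assumption on $v$, the second holds because $k+3\geq 1$, and the last is the bound on the size of blockers recalled just before the corollary (for every $u\in V(G)$, $|B_u|\leq f(k)$).

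With the hypothesis in place, Lemma~\ref{lem:findIndSet} applied to $u:=v$ yields, in polynomial time, an independent set in $G[N^R_G(v)\setminus B_v]$ of size at least $d^I_G(v)-|B_v|$. Finally I would bound this quantity from below: since $d^I_G(v)>\Delta$ (hence in particular $d^I_G(v)\geq\Delta$) and $|B_v|\leq f(k)$, we get $d^I_G(v)-|B_v|\geq\Delta-f(k)$, which is exactly the claimed bound. There is no real obstacle here; the only point worth a moment's care is checking that $\Delta\geq f(k)$ so that the blocker $B_v$ does not ``eat up'' the entire independent set, and this follows trivially from $\Delta=(k+3)f(k)$.
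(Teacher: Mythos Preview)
Your proposal is correct and matches the paper's approach exactly: the paper derives the corollary directly from Lemma~\ref{lem:findIndSet} after recalling that $|B_u|\leq f(k)$ for every vertex $u$, and your argument spells out precisely this instantiation.
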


We let $I$ denote the independent set given by Corollary \ref{cor:simpleBoundIndDeg}.

\bigskip
{\noindent\bf Independent Components.} Let $X=N_G(v)\setminus (B_v\cup I)$ denote the neighbor-set of $v$ from which we remove the vertices of the $v$-blocker $B_v$ and of the independent set $I$. We also let $H=G\setminus (\{v\}\cup B_v\cup X)$ denote the graph obtained by removing (from $G$) the vertex $v$, the $v$-blocker $B_v$ and any neighbor of $v$ that does not belong to the independent set $I$. We define the set independent components of $v$ as the set of each connected component of $H$ that contains at least one vertex from $I$, and denote this set by ${\cal A}$.

For the set $\cal A$, we prove the following lemmata.

\begin{lemma}\label{lem:connectedComp1}
Each connected component $A\in{\cal A}$ contains exactly one vertex from $I$ and no other vertex from $N_G(v)$.
\end{lemma}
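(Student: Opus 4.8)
The plan is to exploit the fact that $I$ is an independent set whose vertices all lie in $N_G(v)$, together with the way $H$ was constructed, to show that two vertices of $I$ (or a vertex of $I$ and another neighbor of $v$) cannot coexist in one connected component of $H$ without producing a chordless cycle through $v$ — a cycle that $B_v$, being a $v$-blocker, must already hit, contradicting the fact that $B_v$ was removed when forming $H$. First I would fix a connected component $A\in{\cal A}$ and suppose, for contradiction, that $A$ contains two distinct vertices $a,b\in V(A)\cap N_G(v)$ (this covers both the case of two vertices of $I$ and the case of one vertex of $I$ and one other neighbor of $v$; recall $A$ contains at least one vertex of $I$ by definition of ${\cal A}$). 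Since $A$ is connected, there is a path between $a$ and $b$ inside $A$; I would take $P$ to be a \emph{shortest} such path, so that $P$ is chordless, and note that all internal vertices of $P$ lie in $V(H)=V(G)\setminus(\{v\}\cup B_v\cup X)$, hence none of them is adjacent to nothing — more importantly, none of them is $v$ and none lies in $B_v$.

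Next I would close $P$ into a cycle $C$ by adding the vertex $v$ together with the edges $\{v,a\}$ and $\{v,b\}$, which exist because $a,b\in N_G(v)$. This $C$ has length at least $4$: its length is $|V(P)|+1$, and $|V(P)|\geq 2$ would give a triangle, so I must rule out $a,b$ being adjacent. If $a,b\in I$ this is immediate since $I$ is independent; if one of them, say $b$, is a neighbor of $v$ not in $I$, then $b\in X$ by definition of $X=N_G(v)\setminus(B_v\cup I)$ — but $X$ was removed in forming $H$, so $b\notin V(H)$, contradicting $b\in V(A)\subseteq V(H)$. Hence the only surviving case is $a,b\in I$, they are non-adjacent, $|V(P)|\geq 3$, and $C$ has length at least $4$. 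Now I would argue $C$ is chordless: $P$ is chordless by minimality; the only possible extra chords involve $v$, i.e. edges from $v$ to an internal vertex of $P$, but internal vertices of $P$ lie in $V(H)$ and in particular are \emph{not} neighbors of $v$ (every neighbor of $v$ is either in $B_v$, in $X$, or in $I$; the first two are not in $V(H)$, and a vertex of $I$ other than $a,b$ cannot be internal on a chordless $a$–$b$ path since $I$ is independent and such a vertex would have two non-adjacent $I$-neighbors on $P$ — actually it suffices that it would be a neighbor of $v$ giving a chord, which is exactly what we are excluding). So $C$ is a chordless cycle of length $\geq 4$ in $G$ passing through $v$. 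Since $B_v$ is a $v$-blocker it hits $C$, but $V(C)\subseteq\{v\}\cup V(P)\subseteq\{v\}\cup V(H)$ is disjoint from $B_v$ (as $v\notin B_v$ by definition of a $v$-blocker, and $V(H)\cap B_v=\emptyset$ by construction of $H$) — a contradiction. This establishes that $A$ contains at most one vertex of $N_G(v)$; since $A\in{\cal A}$ forces at least one vertex of $I\subseteq N_G(v)$, it contains exactly one vertex of $I$ and no other vertex of $N_G(v)$.

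The main obstacle I anticipate is the chordlessness bookkeeping for $C$: one has to be careful that a vertex of $I$ distinct from $a,b$ does not sneak in as an internal vertex of $P$ and create a chord $\{v,\cdot\}$, and that no internal vertex of $P$ is a neighbor of $v$ at all — both are handled by observing that every neighbor of $v$ lies in $B_v\cup X\cup I$, that $B_v$ and $X$ are disjoint from $V(H)\supseteq$ (internal vertices of $P$), and that an $I$-vertex on a shortest $a$–$b$ path is impossible because it would itself be a third neighbor of $v$ producing a chord, which is the very thing we rule out; alternatively one strengthens the minimality of $P$ to "shortest path from $a$ to $b$ in $H$" and observes that any $I$-vertex on it, being non-adjacent to both its path-neighbors in $I$, is fine to exclude by picking $P$ to use the fewest $I$-vertices among shortest paths. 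I would also double-check the degenerate reading where "another vertex from $N_G(v)$" could be one of $a,b$ themselves versus a third vertex — the argument above with the pair $\{a,b\}$ already covers "two vertices of $N_G(v)$ in $A$", which is the contrapositive of the claim.
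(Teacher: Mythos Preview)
Your approach is essentially identical to the paper's: note that $N_G(v)\cap V(H)\subseteq I$ so the ``no other neighbor of $v$'' part is immediate, then for two vertices of $I$ in $A$ take a shortest path in $A$, close it through $v$, and obtain a chordless cycle disjoint from $B_v$, contradicting that $B_v$ is a solution (equivalently, a $v$-blocker). The one point you flag as an obstacle---ruling out a third $I$-vertex as an internal vertex of $P$, which would create a chord $\{v,\cdot\}$---is also glossed over in the paper; your suggested fixes are a bit tangled, and the cleanest way to handle it is simply to choose the pair $a,b\in I\cap V(A)$ at \emph{minimum} distance in $A$, so that an internal $I$-vertex on $P$ would yield a strictly closer pair.
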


\begin{proof}
The graph $H$ does not contain any vertex from $N_G(v)\setminus I$, and therefore we only need to prove the first part of the statement of the lemma. Fix a connected component $A\in{\cal A}$. By the definition of ${\cal A}$, it is only necessary to prove that $A$ cannot contain (at least) two vertices from $I$. Suppose, by way of contradiction, that it contains two such vertices, $u$ and $w$. Let $P$ denote the shortest path in $A$ that connects $u$ and $w$. Since $I$ is an independent set, this path contains at least two edges. Therefore, together with the vertex $v$, the path $P$ forms a chordless cycle. However, this chordless cycle contain no vertex from $B_v$, which contradicts the fact that $B_v$ is an approximate solution.
\end{proof}

\begin{figure}[t]\centering
\frame{\includegraphics[scale=0.7]{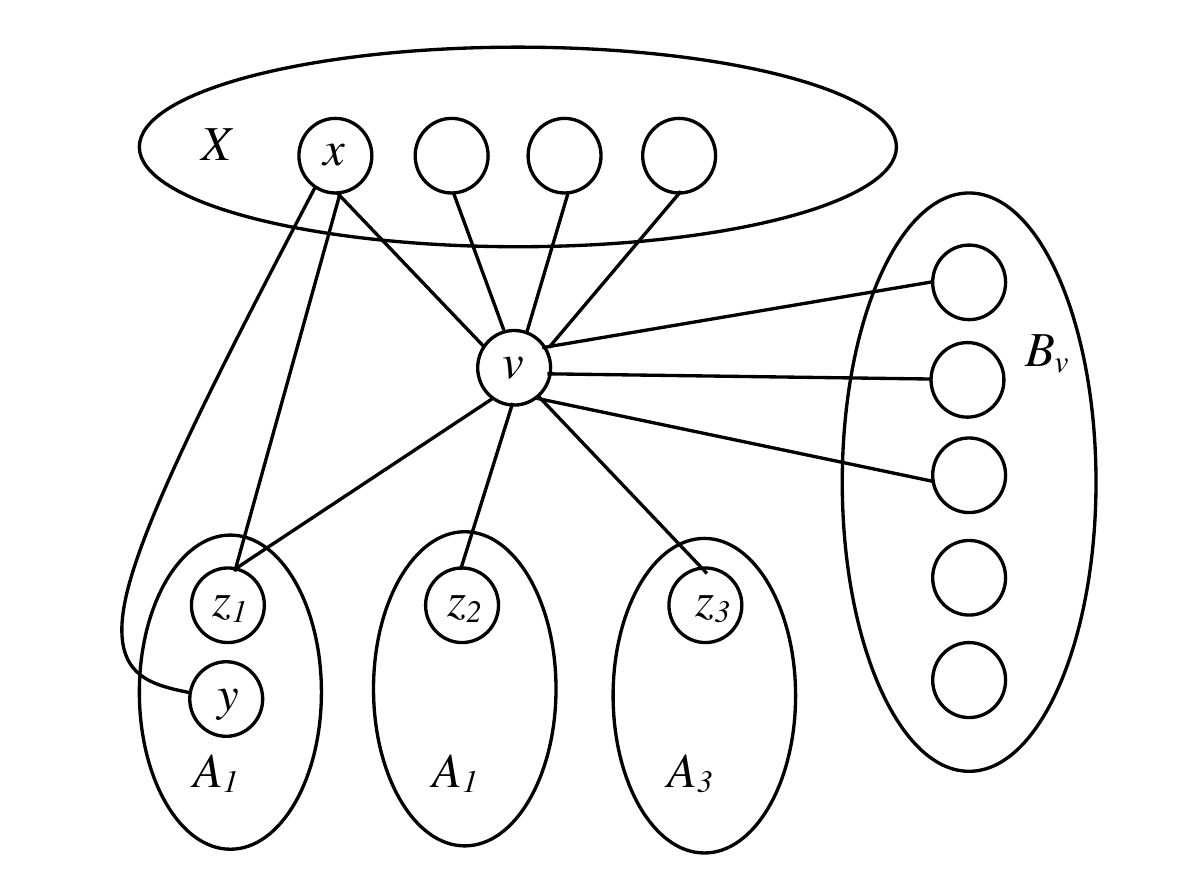}}
\caption{The relations between $v$, $B_v$, $X$ and ${\cal A}$.}\label{fig:lemma24}
\end{figure}

By Lemma \ref{lem:connectedComp1}, for each connected component $A\in{\cal A}$ we can let $z(A)\in I$ denote the unique neighbor of $v$ in $A$. 
In fact, Corollary \ref{cor:simpleBoundIndDeg} and Lemma \ref{lem:connectedComp1} imply that $\Delta-f(k)\leq |{\cal A}|$.

\begin{lemma}\label{lem:connectedComp2}
Every vertex $x\in X$ that is adjacent (in $G$) to some vertex $y\in V(A)$, where $A\in{\cal A}$, is also adjacent (in $G$) to the vertex $z(A)$.
\end{lemma}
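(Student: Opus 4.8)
The claim is that if $x \in X$ is adjacent to some $y \in V(A)$ for a component $A \in \mathcal{A}$, then $x$ is also adjacent to $z(A)$, the unique vertex of $I$ inside $A$. The natural strategy is to assume the contrary and build a chordless cycle through $v$ that avoids $B_v$, contradicting the fact that $B_v$ is an approximate solution (in fact a $v$-blocker). So suppose $x$ is adjacent to $y \in V(A)$ but $x \notin N_G(z(A))$. Note $x \in N_G(v)$ by definition of $X$, so $v$–$x$ is an edge; and $v$–$z(A)$ is an edge since $z(A) \in I \subseteq N_G(v)$.

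First I would handle the degenerate case $y = z(A)$: then $x$ is adjacent to $z(A)$ directly and there is nothing to prove, so we may assume $y \neq z(A)$. Now, inside the connected component $A$ of $H$, pick a shortest path $Q$ from $y$ to $z(A)$; such a path exists since $A$ is connected, and all its internal vertices lie in $V(A) \subseteq V(H)$, hence are disjoint from $\{v\} \cup B_v \cup X$. Consider the closed walk $v, x, Q, z(A), v$ — that is, the edge $vx$, then the path $Q$ from $y$ (to which $x$ is adjacent) through $A$ to $z(A)$, then the edge $z(A)v$. The plan is to extract from this a chordless cycle containing $v$. The key point is that the only vertices on this cycle that could be adjacent to $v$ are $x$ and $z(A)$ (every internal vertex of $Q$ lies in $H$, which by construction contains no neighbor of $v$ other than those in $I$, and $z(A)$ is the unique such vertex in $A$ by Lemma \ref{lem:connectedComp1}). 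So after picking $Q$ shortest we still need to worry about chords of the cycle other than those incident to $v$; the standard fix is to instead take a shortest path from $\{x\}$ to $z(A)$ in the graph $G[\{x\} \cup V(A)]$, which is an induced path, then close it up with $v$. Since $x$ is not adjacent to $z(A)$ and $x \neq z(A)$, this shortest path has length at least $2$, so together with $v$ it has length at least $4$, and since $v$ is adjacent on this cycle only to its two neighbors on the path, it is an induced (chordless) cycle of length $\geq 4$ in $G$.

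**The main obstacle.** The delicate step is verifying that the cycle we produce is genuinely chordless and avoids $B_v$. For the "avoids $B_v$" part: $x \in X = N_G(v) \setminus (B_v \cup I)$, so $x \notin B_v$; $z(A) \in I$, which is disjoint from $B_v$; and the internal vertices of the path lie in $V(A) \subseteq V(H) = V(G) \setminus (\{v\} \cup B_v \cup X)$, so they avoid $B_v$ as well — thus the whole cycle misses $B_v$. For chordlessness: taking the path as a shortest $x$–$z(A)$ path inside the induced subgraph $G[\{x\} \cup V(A)]$ guarantees no chords among the path vertices, and the only remaining possible chords would be edges from $v$ to path-internal vertices, which are ruled out because $H = G \setminus (\{v\} \cup B_v \cup X)$ contains no neighbour of $v$ outside $I$ and, within $A$, $z(A)$ is the only vertex of $I$ (Lemma \ref{lem:connectedComp1}). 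Hence the cycle is a chordless cycle of length at least $4$ in $G$ disjoint from $B_v$, contradicting that $B_v$ is a $v$-blocker (indeed an approximate solution). This contradiction proves that $x$ must be adjacent to $z(A)$.
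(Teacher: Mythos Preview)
Your proof is correct and follows essentially the same approach as the paper: assume $x$ is not adjacent to $z(A)$, build a chordless cycle through $v$, $x$, and a path in $A$ to $z(A)$ that avoids $B_v$, and derive a contradiction. You are in fact more careful than the paper in taking a shortest $x$--$z(A)$ path in $G[\{x\}\cup V(A)]$ (rather than a shortest $y$--$z(A)$ path in $A$), which cleanly rules out chords from $x$ to interior vertices of the path.
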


\begin{proof}
Let $x\in X$ be a vertex that is adjacent (in $G$) to some vertex $y\in V(A)$. Suppose, by way of contradiction, that $\{x,z(A)\}\notin E(G)$. Let $P$ denote the shortest path in $A$ that connects $y$ and $z(A)$. This path contains at least two edges. Therefore, together with the vertex $v$, the path $P$ forms a chordless cycle. However, this chordless cycle contain no vertex from $B_v$, which contradicts the fact that $B_v$ is an approximate solution.
\end{proof}

An illustration of the relations between $v$, $B_v$ $X$ and ${\cal A}$ is given in Fig.~\ref{fig:lemma24}.

\bigskip
{\noindent\bf The Bipartite Graph $\widehat{H}$.} To decrease $d^I_G(u)$, we will consider the bipartite graph $\widehat{H}$, which is defined as follows. We define vertex-set of $\widehat{H}$ by $V(\widehat{H})={\cal A}\cup B_v$. In this context, we mean that each connected component $A\in{\cal A}$ is represented by a vertex in $V(\widehat{H})$, and for the sake of simplicity, we use the symbol $A$ also to refer to this vertex. We partition $B_v$ into two sets, $B_c$ and $B_f$, where $B_c$ contains the vertices in $B_v$ that are adjacent (in $G$) to $v$, while $B_f$ contains the remaining vertices in $B_v$. Here, the letters $c$ and $f$ stand for ``close'' and ``far''. Having this partition, we define the edge-set of $\widehat{H}$ as follows. For every vertex $b\in B_c$ and connected component $A\in {\cal A}$ such that $b\in N_G(V(A))\setminus N_G(z(A))$ (i.e., $b$ is a neighbor of some vertex in $A$ but not of the vertex $z(A)$), insert the edge $\{b,A\}$ into $E(\widehat{H})$. Moreover, for every vertex $b\in B_f$ and connected component $A\in {\cal A}$ such that $b\in N_G(V(A))$, insert the edge $\{b,A\}$ into $E(\widehat{H})$.
An illustration of the bipartite graph $\widehat{H}$ is given in Fig.~\ref{fig:bipH}. The motivation behind its definition lies at the following lemma.

\begin{figure}[t]\centering
\frame{\includegraphics[scale=0.75]{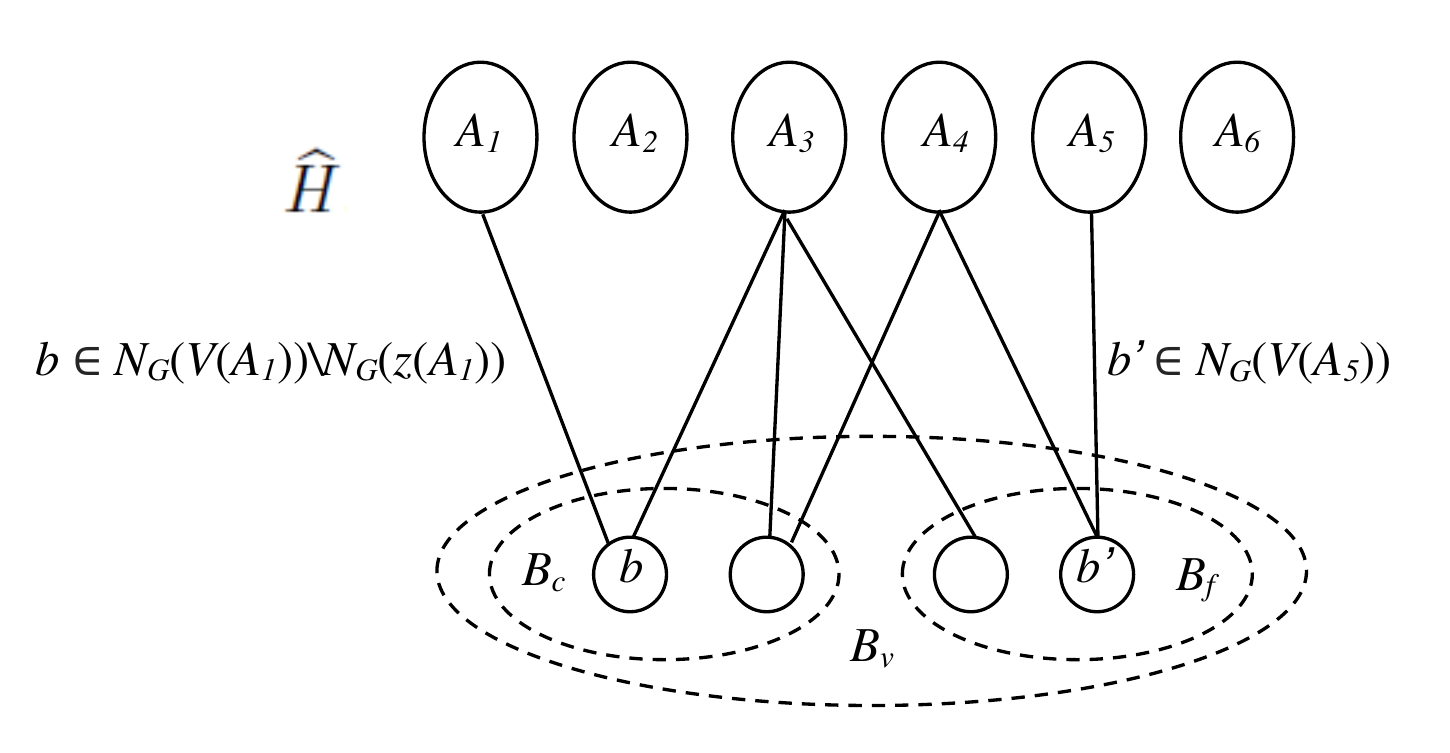}}
\caption{The bipartite graph $\widehat{H}$.}\label{fig:bipH}
\end{figure}

\begin{lemma}\label{lem:H-hat}
The bipartite graph $\widehat{H}$ satisfies the following properties.
\begin{enumerate}
\item Suppose that we are given an edge $\{b,A\}\in E(\widehat{H})$ such that $b\in B_c$ and $A\in{\cal A}$. Then, the graph $G$ has a chordless cycle defined by the edges $\{v,b\}$ and $\{v,z(A)\}$ and a path in $A$.
\item Suppose that we are given edges $\{b,A\},\{b,A'\}\in E(\widehat{H})$ such that $b\in B_f$ and $A,A'\in{\cal A}$. Then, the graph $G$ has a chordless cycle defined by the edges $\{v,z(A)\},\{v,z(A')\}$, the vertex $b$ and paths in $A$ and $A'$.
\end{enumerate}
\end{lemma}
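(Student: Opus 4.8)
Both parts will be proved by the same device: produce an explicit chordless cycle whose ``interior'' is a shortest path living inside a single connected component of ${\cal A}$, and then observe that routing the cycle through $v$ introduces no chord. The one reusable observation is that a \emph{shortest} path is an induced path, so the only thing that can go wrong is a chord incident to $v$ — and this is exactly what Lemma~\ref{lem:connectedComp1} prevents, since it pins down $N_G(v)\cap V(A)=\{z(A)\}$ for every $A\in{\cal A}$.

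\textbf{Part 1.} Here $b\in B_c$, so $\{v,b\}\in E(G)$, and $z(A)\in I\subseteq N^R_G(v)$, so $\{v,z(A)\}\in E(G)$. The presence of the edge $\{b,A\}$ in $\widehat H$ tells us that $b$ has a neighbour in $V(A)$ but $b\notin N_G(z(A))$. Since $A$ is connected, there is a shortest path $Q$ from $b$ to $z(A)$ inside $G[\{b\}\cup V(A)]$, and it has at least two edges because $b\not\sim z(A)$. Closing $Q$ through $v$ gives a cycle $C$ on at least four vertices, consisting of the edges $\{v,b\}$, $\{v,z(A)\}$ and a path through $A$. To see $C$ is chordless: $Q$ is induced as a shortest path, so there is no chord among its vertices; and the only vertices of $V(C)\setminus\{v\}$ adjacent to $v$ are $b$ (the remaining cycle vertex) and $z(A)$, by Lemma~\ref{lem:connectedComp1}, and both are neighbours of $v$ along $C$. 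Hence no chord is incident to $v$ either.

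\textbf{Part 2.} This is the symmetric argument with two components. Now $b\in B_f$, so $b\not\sim v$, while $\{v,z(A)\},\{v,z(A')\}\in E(G)$ as before. The edges $\{b,A\},\{b,A'\}$ of $\widehat H$ guarantee that $b$ has a neighbour in $V(A)$ and a neighbour in $V(A')$, so (using connectivity of $A$ and of $A'$) we may pick a shortest path $Q_1$ from $z(A)$ to $b$ in $G[\{b\}\cup V(A)]$ and a shortest path $Q_2$ from $b$ to $z(A')$ in $G[\{b\}\cup V(A')]$. Their union through $b$, closed off by the edges $\{v,z(A)\}$ and $\{v,z(A')\}$, is a cycle on at least four vertices of the claimed form. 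It is chordless because: $Q_1$ and $Q_2$ are induced; they meet only at $b$, and there is no edge of $G$ between $V(A)$ and $V(A')$, since these are distinct connected components of $H$ and $H$ is an induced subgraph of $G$; and again by Lemma~\ref{lem:connectedComp1} the only neighbours of $v$ on the cycle are $z(A)$ and $z(A')$, which sit next to $v$ on $C$. (Note that, unlike in Lemmas~\ref{lem:connectedComp1} and~\ref{lem:connectedComp2}, these cycles do contain a vertex of $B_v$, namely $b$, so there is no contradiction with $B_v$ being a $v$-blocker; the point here is to \emph{have} such cycles available for the subsequent analysis of $\widehat H$.)

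\textbf{Expected obstacle.} There is no real obstacle: the entire content is the ``shortest path is induced'' observation together with the fact, already isolated in Lemma~\ref{lem:connectedComp1}, that $v$ sees exactly one vertex in each independent component. The only point deserving a sentence of care is the case analysis that $Q$ (resp.\ $Q_1,Q_2$) may be short — e.g.\ $b$ could be adjacent to $z(A)$ in Part~2 — which merely shrinks the cycle but never breaks chordlessness or drops its length below four.
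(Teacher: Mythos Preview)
Your proof is correct and follows essentially the same approach as the paper: build the cycle from shortest paths inside the relevant component(s) and use Lemma~\ref{lem:connectedComp1} to rule out chords through $v$. Your formulation via a shortest path in $G[\{b\}\cup V(A)]$ is in fact slightly cleaner than the paper's (which picks a neighbour $y$ of $b$ in $A$ and then a shortest $z(A)$--$y$ path inside $A$), since it automatically guarantees that $b$ has no chord to an internal vertex, a point the paper leaves implicit.
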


\begin{proof}
To prove the first item, let $y$ be a vertex in $N_G(b)\cap V(A)$, whose existence is guaranteed by the assumption that $\{b,A\}\in E(\widehat{H})$. The desired chordless cycle can be defined by the edges $\{v,b\},\{v,z(A)\},\{b,y\}$ and the edge-set of a shortest path between $z(A)$ and $y$ in $A$ -- in particular, by the definition of ${\cal A}$ and Lemma \ref{lem:connectedComp1}, $v$ is not adjacent to any vertex on this path, excluding $z(A)$, and since $\{b,A\}\in E(\widehat{H})$, $b$ is not adjacent to $z(A)$.

For the second item, choose $y\in N_G(b)\cap V(A)$ and $y'\in N_G(b)\cap V(A')$, whose existence is guaranteed by the assumption that $\{b,A\},\{b,A'\}\in E(\widehat{H})$, such that the length of a shortest path between $z(A)$ ($z(A')$) and $y$ (resp.~$y'$) in $A$ (resp.~$A'$) is minimum. Observe that the cycle defined $\{v,z(A)\},\{v,z(A')\},\{y,b\},\{y',b\}$ and the shortest paths between $z(A)$ and $y$ in $A$ and $z(A')$ and $y'$ in $A'$, respectively, is a chordless cycle --  in particular, by the definition of ${\cal A}$, Lemma \ref{lem:connectedComp1} and since $b,b'\in B_f$, the cycle cannot have a chord containing $v$.
\end{proof}

\bigskip
{\noindent\bf Isolated Vertices in $\widehat{H}$.} We start investigating the bipartite graph $\widehat{H}$ by examining the isolated vertices in ${\cal A}$ that it contains. In this context, we need the two following lemmata.

\begin{lemma}\label{lem:isolated1}
Let $A\in{\cal A}$ be an isolated vertex in $\widehat{H}$, and denote $z=z(A)$. Then, $N_G(V(A))=N_G(z)\setminus V(A)$.
\end{lemma}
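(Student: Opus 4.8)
\textbf{Proof plan for Lemma \ref{lem:isolated1}.}

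The statement to prove is that for an isolated vertex $A\in{\cal A}$ in $\widehat{H}$, with $z=z(A)$, we have $N_G(V(A))=N_G(z)\setminus V(A)$. My plan is to prove the two inclusions separately, with the nontrivial direction being $N_G(V(A))\subseteq N_G(z)\setminus V(A)$.

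For the easy inclusion $N_G(z)\setminus V(A)\subseteq N_G(V(A))$: any vertex in $N_G(z)\setminus V(A)$ is by definition a vertex outside $A$ adjacent to $z\in V(A)$, hence it lies in $N_G(V(A))$ (and automatically outside $V(A)$). So this direction is immediate from $z\in V(A)$.

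For the hard inclusion, I would take an arbitrary vertex $x\in N_G(V(A))$ and argue $x\in N_G(z)$ (the fact that $x\notin V(A)$ is clear since $N_G(V(A))$ by convention refers to the external neighborhood, or one simply notes $x\in N_G(V(A))$ with $x\notin V(A)$). The key is to partition the possible locations of $x$. Since $H=G\setminus(\{v\}\cup B_v\cup X)$, and $A$ is a connected component of $H$ containing a vertex of $I$, every neighbor of $V(A)$ in $G$ lies either in $V(A)$ itself, or in $\{v\}$, or in $B_v$, or in $X$. So there are three cases for $x\notin V(A)$: (1) $x=v$; (2) $x\in B_v$; (3) $x\in X$. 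In case (1), $x=v$ is adjacent to $z=z(A)$ by the very definition of $z(A)$ as the neighbor of $v$ in $A$, so $x\in N_G(z)$. In case (3), $x\in X$ adjacent to a vertex of $A$, so Lemma \ref{lem:connectedComp2} directly gives that $x$ is adjacent to $z(A)=z$, hence $x\in N_G(z)$. The main obstacle is case (2): $x\in B_v$. Here I must use the hypothesis that $A$ is \emph{isolated} in $\widehat{H}$, together with the definition of the edges of $\widehat{H}$. Write $x\in B_c$ or $x\in B_f$. If $x\in B_f$ and $x$ is adjacent to some vertex of $V(A)$, then by construction $\{x,A\}\in E(\widehat{H})$, contradicting that $A$ is isolated; so this subcase cannot occur. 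If $x\in B_c$: by construction $\{x,A\}\in E(\widehat{H})$ unless $x\in N_G(z(A))$, i.e.\ the edge is inserted exactly when $x\in N_G(V(A))\setminus N_G(z(A))$. Since $A$ is isolated, $\{x,A\}\notin E(\widehat{H})$, so we must have $x\in N_G(z(A))=N_G(z)$, as desired.

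Combining the two inclusions finishes the proof. The only subtle point — the ``main obstacle'' — is making sure the case analysis on where a neighbor of $A$ can sit is exhaustive (this follows from the definition of $H$ and the fact that $A$ is one of its connected components, so no edge of $G$ leaves $A$ except into the deleted set $\{v\}\cup B_v\cup X$), and then correctly reading off from the definition of $E(\widehat{H})$ that isolation of $A$ forces every $B_v$-neighbor of $A$ to already be a neighbor of $z$.
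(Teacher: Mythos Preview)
Your proposal is correct and follows essentially the same approach as the paper's proof: both establish the easy inclusion from $z\in V(A)$, then for the reverse inclusion use that any external neighbor of $V(A)$ must lie in $\{v\}\cup B_v\cup X$ (since $A$ is a component of $H$), handle $v$ via the definition of $z(A)$, handle $X$ via Lemma~\ref{lem:connectedComp2}, and rule out $B_v$-neighbors not adjacent to $z$ by the isolation of $A$ in $\widehat{H}$. The only differences are stylistic---the paper wraps the argument as a proof by contradiction and treats the $B_c$/$B_f$ subcases in one line, whereas you do a direct case split---but the content is identical.
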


\begin{proof}
Since $z\in A$, it is clear that $N_G(z)\setminus V(A)\subseteq N_G(V(A))$. Next, we show that $N_G(V(A))\subseteq N_G(z)\setminus V(A)$. To this end, consider some vertices $y\in V(A)$ and $u\in N_G(y)\setminus V(A)$, and suppose, by way of contradiction, that $u\notin N_G(z)$. Because $A$ is a connected component of $H$, it holds that $u\in \{v\}\cup B_v\cup X$. Moreover, because $A$ is isolated in $\widehat{H}$ and yet $u\in N_G(V(A))\setminus N_G(z)$, it further holds that $u\in X$. However, this results in a contradiction to the statement of Lemma \ref{lem:connectedComp2}.
\end{proof}

\begin{lemma}\label{lem:isolated2}
Let $A\in{\cal A}$ be an isolated vertex in $\widehat{H}$, and denote $z=z(A)$. Then, $G$ does not have a chordless cycle that contains the edge $\{v,z\}$.
\end{lemma}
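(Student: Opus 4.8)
The plan is to argue by contradiction. Suppose $G$ has a chordless cycle $C$ containing the edge $\{v,z\}$, and write it as $v, z=p_0, p_1, \ldots, p_s, v$. Since a chordless cycle has at least four vertices we have $s \geq 2$, and since $C$ has no chord the only neighbours of $v$ on $C$ are $p_0$ and $p_s$; in particular each of $p_1, \ldots, p_{s-1}$ is a non-neighbour of $v$. The whole argument reduces to pinning down where the vertex $p_1$ --- the neighbour of $z$ on $C$ other than $v$ --- can lie, and showing that every possibility contradicts the hypothesis that $A$ is isolated in $\widehat H$.

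First I would show that $p_1 \notin V(A)$. Suppose otherwise. Since $p_s \in N_G(v)$ and, by Lemma~\ref{lem:connectedComp1}, $z$ is the only vertex of $N_G(v)$ lying in $A$, we have $p_s \notin V(A)$; hence there is a smallest index $j$ with $p_j \notin V(A)$, and under our assumption $j \geq 2$. As $p_{j-1} \in V(A)$ and $p_j \in N_G(p_{j-1})$, the vertex $p_j$ lies in $N_G(V(A))$, so Lemma~\ref{lem:isolated1} gives $p_j \in N_G(z)$; then $\{z,p_j\}$ is an edge of $G$ joining two vertices of $C$ that is not an edge of $C$ (since $p_j \neq v$ and $p_j \neq p_1$), \ie a chord of $C$, a contradiction. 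Hence $p_1 \notin V(A)$. Now $p_1$ is adjacent in $G$ to $z \in V(A)$, and $A$ is a connected component of $H = G \setminus (\{v\} \cup B_v \cup X)$, so $p_1 \in V(H)$ would force $p_1 \in V(A)$; therefore $p_1 \in \{v\} \cup B_v \cup X$. Since $p_1 \neq v$, and $p_1 \notin N_G(v)$ while $X \subseteq N_G(v)$, we get $p_1 \in B_v$, and in fact $p_1 \in B_f$ (again because $p_1 \notin N_G(v)$). But $p_1 \in N_G(z) \subseteq N_G(V(A))$, so by the construction of $\widehat H$ the pair $\{p_1, A\}$ belongs to $E(\widehat H)$, contradicting that $A$ is an isolated vertex of $\widehat H$; this contradiction proves the lemma.

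I expect the claim $p_1 \notin V(A)$ to be the only genuinely delicate step: it is here that one must combine the chordlessness of $C$ with Lemma~\ref{lem:isolated1} to see that, once $C$ leaves $z$, it cannot first wander inside $A$, since the boundary vertex at which it would leave $A$ is forced to be adjacent to $z$ and hence creates a chord incident to $z$. Everything after that is a routine case-analysis of which of $\{v\}$, $X$, $B_c$, $B_f$ the vertex $p_1$ belongs to --- precisely the split of $N_G(v)$ around which the construction of $\widehat H$ is organised.
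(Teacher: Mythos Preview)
Your proof is correct and follows essentially the same route as the paper's. The paper also argues by contradiction, picks an edge $\{y,u\}$ of $C$ with $y\in A$ and $u\notin A$, invokes Lemma~\ref{lem:isolated1} to get $\{z,u\}\in E(G)$, and then uses chordlessness to force $z=y$ and hence $u\notin N_G(v)$, concluding $u\in B_f$ and $\{u,A\}\in E(\widehat H)$. Your vertex $p_1$ plays the role of the paper's $u$, and your argument that $p_1\notin V(A)$ is precisely the explicit version of the paper's one-line step ``since $C$ is chordless, $z=y$''; the remaining case analysis ($p_1\notin\{v\}$, $p_1\notin X$, hence $p_1\in B_f$) is identical in spirit.
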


\begin{proof}
Suppose, by way of contradiction, that $G$ has a chordless cycle $C$ that contains the edge $\{v,z\}$. The cycle $C$ must contain an edge $\{y,u\}$ such that $y\in A$ and $u\notin A$. By Lemma \ref{lem:isolated1}, $N_G(V(A))=N_G(z)\setminus A$, and therefore $\{z,u\}\in E(G)$. Hence, since the cycle $C$ is chordless, it holds that $z=y$, which, in turn, implies that $u\notin N_G(v)$. We deduce that $u$ must belong to $B_f$. However, this implies that $\{A,u\}\in\widehat{H}$, contradicting the assumption that $A$ is an isolated vertex in $\widehat{H}$.
\end{proof}

Lemma \ref{lem:isolated2} leads us to the design of the following reduction rule.

\begin{reduction}
If the graph $\widehat{H}$ contains an isolated vertex $A\in{\cal A}$, mark the edge $\{v,z\}$ as irrelevant.
\end{reduction}

After an exhaustive application of this rule, we can assume that $\widehat{H}$ does not contain an isolated vertex $A\in{\cal A}$.

\bigskip
{\noindent\bf Applying the Expansion Lemma.} Next, we would like to apply Lemma \ref{lem:expansion} in the context of the bipartite graph $\widehat{H}$. Since $|{\cal A}|\geq \Delta - f(k)\geq (k+2)\cdot |B_v|$ and we have already ensured that $\widehat{H}$ does not contain any isolated vertex $A\in{\cal A}$, this lemma implies that we can find (in polynomial time) subsets ${\cal A}^*\subseteq{\cal A}$ and $B^*\subseteq B_v$ such that there exists a $(k+2)$-expansion from $B^*$ into ${\cal A}^*$.

The usefulness of $B^*$ is stated in the following lemma.

\begin{lemma}\label{lem:vOrB}
Any solution of size at most $k$ to the input instance that does not contain $v$ contains all of the vertices in $B^*$.
\end{lemma}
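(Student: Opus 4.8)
The plan is to use the $(k+2)$-expansion from $B^*$ into $\mathcal{A}^*$ to manufacture, for every vertex $b \in B^*$, a large collection of "almost disjoint" chordless cycles through $v$ and $b$, so that any solution avoiding $v$ is forced to take $b$. Fix a solution $S$ of size at most $k$ with $v \notin S$, and fix $b \in B^*$. By the $(k+2)$-expansion, $b$ has $k+2$ private components $A_1, \dots, A_{k+2} \in \mathcal{A}^*$ attached to it by the expansion edges, and crucially the expansion property $N_{\widehat H}(\mathcal{A}^*) = B^*$ guarantees that each $A_i$ has no neighbor in $B_v \setminus B^*$ within $\widehat H$. I would split into the two cases according to whether $b \in B_c$ or $b \in B_f$, mirroring the two items of Lemma~\ref{lem:H-hat}.

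If $b \in B_c$: each edge $\{b, A_i\}$ gives, via item~1 of Lemma~\ref{lem:H-hat}, a chordless cycle $C_i$ on the vertices $v$, $b$, $z(A_i)$ together with an internal path inside $A_i$. Since $S$ hits $C_i$ and $v \notin S$, we get $S \cap (\{b\} \cup \{z(A_i)\} \cup V(A_i)) \neq \emptyset$ for each $i$ — but $z(A_i) \in V(A_i)$, so in fact $S$ must contain either $b$ or some vertex of $V(A_i)$. The components $A_1, \dots, A_{k+2}$ are pairwise vertex-disjoint (distinct connected components of $H$), so if $b \notin S$ then $S$ would need at least one vertex in each of the $k+2$ disjoint sets $V(A_i)$, contradicting $|S| \le k$. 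Hence $b \in S$.

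If $b \in B_f$: now I pair up the expansion components. Among $A_1, \dots, A_{k+2}$, form pairs $(A_1, A_2), (A_3, A_4), \dots$; since we have $k+2$ components we obtain at least $\lceil (k+2)/2 \rceil \ge k/2 + 1$ disjoint pairs — actually to be safe I would instead note it suffices to build $k+1$ cycles, so I would want $2(k+1)$ components; if the expansion only yields $k+2$ I would simply have requested a larger expansion constant earlier, but taking the constant as given one uses the pairs $(A_{2j-1}, A_{2j})$ and gets $\lfloor (k+2)/2 \rfloor$ of them — let me instead argue directly that $k+1$ cycles suffice and request accordingly. Here is the clean version: for each pair, item~2 of Lemma~\ref{lem:H-hat} gives a chordless cycle through $v$, $b$, $z(A_{2j-1})$, $z(A_{2j})$ and internal paths in $A_{2j-1}$ and $A_{2j}$. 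These cycles all share only $v$ and $b$; since the pairs use disjoint component sets, $S$ avoiding $v$ must, for each pair, contain $b$ or a vertex of $V(A_{2j-1}) \cup V(A_{2j})$, and these latter sets are pairwise disjoint across pairs. With $\lfloor (k+2)/2 \rfloor < k+1$ pairs this is too weak, which is exactly why the expansion constant in the statement preceding this lemma should be $2(k+2)$ rather than $k+2$; I would either (a) invoke the larger-expansion version consistent with the macro $\texttt{\managepath}$, or (b) in the $B_f$ case use that each expansion vertex $A$ can itself host a path, so a single $A$ plus $b$ already yields a cycle through $v$ only if $v \sim z(A)$, which it is — so in fact even for $b \in B_f$ one gets one cycle per component $A_i$ giving $S \cap (V(A_i) \cup \{b\}) \neq \emptyset$? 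No: for $b \in B_f$, $b \not\sim v$, so a single component does not close a cycle through $v$; pairing is genuinely needed. I would therefore present the proof assuming the expansion is a $2(k+2)$-expansion (or silently a $(k+1)$-pair expansion), noting this is what is actually applied.

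The main obstacle I anticipate is precisely this bookkeeping of the expansion constant: getting $k+1$ vertex-disjoint obstructions through each $b$ requires $2(k+1)$ private components when $b \in B_f$ (and only $k+1$ when $b \in B_c$), so the "$(k+2)$-expansion" stated just before the lemma must in the actual argument be a $(2k+O(1))$-expansion, and $|\mathcal{A}| \ge \Delta - f(k) \ge (2k+O(1))|B_v|$ by the choice $\Delta = (k+3)f(k)$ must be checked to support it. Beyond that, the only care needed is verifying that the cycles built in the two cases are genuinely \emph{chordless} and genuinely share no vertex besides $v$ (and $b$) — both are immediate from Lemma~\ref{lem:connectedComp1} (components of $H$ are pairwise non-adjacent and $v$ touches only $z(A)$ in each) and Lemma~\ref{lem:H-hat}. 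Then $|S \setminus \{b\}| \ge k+1 > k$ is the contradiction, forcing $b \in S$ for every $b \in B^*$, which is the claim.
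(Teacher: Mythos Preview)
Your treatment of the case $b\in B_c$ is correct and matches the paper. The gap is exactly where you suspect, in the case $b\in B_f$, but the fix is not to inflate the expansion constant. With $\Delta=(k+3)f(k)$ as set in the paper, one has $|\mathcal A|\ge \Delta-f(k)=(k+2)f(k)\ge(k+2)|B_v|$, so only a $(k+2)$-expansion is available, and your proposed ``$2(k+2)$-expansion'' is not supported by the parameters as stated.

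The missing idea is an all-pairs argument rather than a disjoint-pairing argument. For $b\in B_f$, item~2 of Lemma~\ref{lem:H-hat} gives a chordless cycle through $v$ and $b$ for \emph{every} unordered pair $\{A_i,A_j\}$ among the $k+2$ expansion components of $b$, not just for a fixed matching of them. Consequently, if $S$ avoids both $v$ and $b$, then for each pair $\{A_i,A_j\}$ the set $S$ must meet $V(A_i)\cup V(A_j)$; equivalently, $S$ can miss at most one of the components $A_1,\dots,A_{k+2}$. Since these components are pairwise disjoint, $S$ then contains at least $k+1$ vertices, contradicting $|S|\le k$. This is precisely how the paper argues (in fact it phrases both cases uniformly: ``if there is no chordless cycle defined by $v$, $b$ and a path in $A_i$, then there is one defined by $v$, $b$, a path in $A_i$ and a path in $A_j$''), and it is why the constant $k+2$ suffices.
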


\begin{proof}
Suppose that the input instance is a yes-instance, and suppose it has a solution $S$ of size at most $k$ that does not contain $v$. Consider some vertex $b\in B^*$. Let $A_1,A_2,\ldots,A_{k+2}$ be the neighbors of $b$ in $\widehat{H}$ that correspond to our $(k+2)$-expansion. For any choice of $A_i$ and $A_j$, Lemma \ref{lem:H-hat} implies that if there is no chordless cycle defined by $v$, $b$ and a path in $A_i$, then there is a chordless cycle defined by $v$, $b$, a path in $A_i$ and a path in $A_j$. Therefore, if $S$ contains neither $v$ nor $b$, it must contain at least one vertex from each connected component $A_i$ excluding at most one such component. However, there are $k+2$ such components, and since $S$ does not contain $v$, we deduce that it contains $b$. The choice of $b\in B^*$ was arbitrary, and therefore we conclude that $S$ contains all of the vertices in $B^*$.
\end{proof}

\bigskip
{\noindent\bf Decreasing the Independent Degree of $v$.} Armed with Lemma \ref{lem:vOrB}, we can apply the following reduction rule.


\begin{reduction}\label{rule:decIndDeg}
    For each vertex $b\in B^*$, insert the
    edge $\{b,v\}$ into $E(G)$ (if it is not already present),
    and mark $\{b,v\}$ as a mandatory edge.
    Moreover, mark each edge $\{v,z(A)\}$ such that $A\in{\cal A}^*$ as an irrelevant edge.
\end{reduction}

\begin{lemma}
Reduction Rule \ref{rule:decIndDeg} is safe.
\end{lemma}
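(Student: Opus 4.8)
The plan is to show that Reduction Rule \ref{rule:decIndDeg} produces an instance that is better than the input instance, by verifying each of the defining conditions in turn: (i) $k' = k \leq k$; (ii) $V(G') = V(G)$; (iii) $E_I \subseteq E_I'$ and $E_M \subseteq E_M'$; (iv) equivalence of the two instances (i.e., safeness); (v) $d^I_{G'}(v) \leq \Delta$; and (vi) $d^I_{G'}(u) \leq d^I_G(u)$ for every $u \in V(G')$. Items (i)--(iii) are immediate from the description of the rule, since it only inserts edges and marks them, never deleting vertices or decrementing $k$.

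\textbf{Equivalence (safeness).} First I would handle the newly added mandatory edges $\{b,v\}$ for $b \in B^*$. By Lemma~\ref{lem:vOrB}, any solution of size at most $k$ that avoids $v$ must contain all of $B^*$, so in particular every solution of size at most $k$ contains at least one endpoint of each such edge; this is exactly the precondition of Reduction Rule~\ref{red:req pair}, so marking $\{b,v\}$ as mandatory is legitimate (and any chordless cycle destroyed by adding this edge contained both $b$ and $v$, hence is still hit). Next I would handle the edges $\{v, z(A)\}$ for $A \in {\cal A}^*$ that are marked irrelevant. Following the definition of an irrelevant edge from Section~\ref{sec:irrelevant}, I must show that any solution hitting all chordless cycles through only relevant edges (and respecting the mandatory constraints) also hits every chordless cycle of $G'$ containing some edge $\{v,z(A)\}$ with $A \in {\cal A}^*$. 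Fix such a solution $S$ with $|S| \leq k$. If $v \in S$ we are done, since $v$ lies on every such cycle. Otherwise, by Lemma~\ref{lem:vOrB}, $S \supseteq B^*$; then for every $A \in {\cal A}^*$, the vertex $z(A)$ is matched (via the $(k+2)$-expansion) to some $b_A \in B^* \subseteq S$, and because $B^*$ saturates exactly $(k+2)|B^*|$ vertices of ${\cal A}^*$ while $|S \setminus B^*| \leq k - |B^*| < (k+2)|B^*|$ counting cannot cover everything — but more directly, the point is that $b_A \in S$ already hits any chordless cycle through $\{v,z(A)\}$ that also passes through $A$; I need to argue via Lemma~\ref{lem:H-hat} and the structure of $H$ that any chordless cycle using $\{v,z(A)\}$ is in fact forced to traverse $A$ (since leaving $v$ through $z(A)$ and returning, the cycle must re-enter the rest of the graph through a vertex of $N_G(V(A))$, and by Lemmata~\ref{lem:connectedComp1}, \ref{lem:connectedComp2} and the fact that $A$ is not isolated in $\widehat H$, this route is controlled by the edge $\{b_A, A\} \in E(\widehat H)$), so $b_A \in S$ suffices.

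\textbf{Bounding and monotonicity of independent degrees.} For (v): after the rule, every $z(A)$ with $A \in {\cal A}^*$ has the edge $\{v, z(A)\}$ in $E_I'$, so $z(A) \notin N^R_{G'}(v)$. Since $I$ contributes one vertex $z(A)$ per component $A \in {\cal A}$, and ${\cal A}^* \subseteq {\cal A}$ has $|{\cal A}^*| = |B^*| \cdot$ (no—) — here I must be careful: the $(k+2)$-expansion removes $|{\cal A}^*|$ components from consideration but $|{\cal A}^*|$ need not be all of ${\cal A}$. The correct argument is that the maximum independent set in $G[N^R_{G'}(v)]$ has size at most $|B_v| + |I \setminus \{z(A) : A \in {\cal A}^*\}|$-type bound; I should instead observe directly that removing the $\{v,z(A)\}$ edges shrinks $N^R_{G'}(v)$ and that whatever remains, an application of Corollary~\ref{cor:simpleBoundIndDeg}-style reasoning (now via $B_v$ again, a $v$-blocker) shows any independent set in $G'[N^R_{G'}(v)]$ of size exceeding $\Delta$ would force too many independent components surviving in the new $\widehat H$, contradicting that the expansion lemma was applied exhaustively — this is the step I expect to be the main obstacle, since it requires re-deriving that $|{\cal A}^*|$ is large enough (using $|{\cal A}| \geq \Delta - f(k) \geq (k+2)|B_v|$, the expansion lemma gives $|{\cal A}^*| \geq$ enough components) to bring $d^I_{G'}(v)$ down to at most $\Delta$. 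For (vi), monotonicity holds because the rule only adds edges incident to $v$ and adds them to $E_I \cup E_M$ simultaneously (the mandatory edges $\{b,v\}$), or adds $\{v, z(A)\}$ to $E_I$; in either case, for any $u \neq v$, the set $N^R_{G'}(u)$ is unchanged (no relevant edge incident to $u$ is added), so $G'[N^R_{G'}(u)] = G[N^R_G(u)]$ and $d^I_{G'}(u) = d^I_G(u)$, while for $u = v$ we have $N^R_{G'}(v) \subseteq N^R_G(v)$ so $d^I_{G'}(v) \leq d^I_G(v)$. Assembling (i)--(vi) gives that $(G', k', E_I', E_M')$ is better than $(G,k,E_I,E_M)$, which is precisely what Reduction Rule~\ref{rule:decIndDeg} being safe demands in this context.
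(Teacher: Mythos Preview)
Your proposal conflates two distinct claims. The lemma only asserts that Reduction Rule~\ref{rule:decIndDeg} is \emph{safe}, which in the paper's terminology (see the Preliminaries) means the two instances are equivalent. The ``better'' conditions you list as (v) and (vi) are not part of this lemma; the paper handles them separately, \emph{after} the lemma, by observing that the rule strictly shrinks $N^R_G(v)$ and can therefore be iterated until $d^I_G(v)\le\Delta$. A single application is not claimed to achieve $d^I_{G'}(v)\le\Delta$, which is why you found (v) to be ``the main obstacle'': you are trying to prove something the lemma does not assert.

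More importantly, your safeness argument for the irrelevant edges has a genuine gap. You fix $A\in{\cal A}^*$ and the \emph{specific} vertex $b_A\in B^*$ matched to $A$ by the $(k+2)$-expansion, and then assert that $b_A$ hits every chordless cycle through $\{v,z(A)\}$. But there is no reason the cycle must pass through that particular $b_A$; it could exit $A$ through any $\widehat H$-neighbour of $A$ in $B_v$. What the paper actually uses is the \emph{second} conclusion of Lemma~\ref{lem:expansion}: $N_{\widehat H}({\cal A}^*)=B^*$. Combined with Lemmata~\ref{lem:connectedComp1} and~\ref{lem:connectedComp2} (which rule out exiting through $X$ without creating a chord) and the fact that $B_v$ is a $v$-blocker (so the chordless cycle must contain some vertex of $B_v$), one concludes that \emph{whichever} vertex of $B_v$ lies on the cycle must already belong to $B^*\subseteq S$. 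Your sketch never invokes $N_{\widehat H}({\cal A}^*)=B^*$, and without it the argument does not go through.
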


\begin{proof}
    First we claim that adding a mandatory edge between
    $v$ and some $b \in B^*$ is safe. Let $G'$ denote the graph obtained after applying this operation.
    By Lemma~\ref{lem:vOrB}, at least one vertex among $v$ and $b$ must be present in any solution of size at most $k$.
    Therefore, we may mark the newly added edge $\{v,b\}$ as a mandatory edge. Since this edge is a mandatory edge, any solution of size at most $k$ to $(G',k)$ will hit any new chordless cycle created by adding this edge as well as any chordless cycle $C$ that exists in $G$ even if in $G'[V(C)]$ is not a chordless cycle (since if $G'[V(C)]$ is not a chordless cycle, $\{v,b\}\subseteq V(C)$). By adding the edges one-by-one, arguing that each operation is safe given that the preceding operation was safe, we derive the safeness of the insertion of all of the edges in $\{\{b,v\}: b\in B^*\}$ as well as the marking of these edges as mandatory.
    
    Now, let $G'$ denote the graph obtained after the application of Reduction Rule \ref{rule:decIndDeg}. 
    Let us argue that it is indeed safe to ignore (in $G'$) any chordless cycle containing an edge $\{v,z(A)\}$ such that $A\in{\cal A}^*$. To this end, consider such a chordless cycle $C$ and a solution $S$ of size at most $k$ to $(G',k)$. If $S$ contains $v$, it clearly hits $C$. Otherwise, we have that $S$ contains all of the vertices in $B^*$. Recall that by Lemma \ref{lem:connectedComp1}, $z(A)$ is the only neighbor of $v$ in $A$, and by Lemma \ref{lem:connectedComp2}, $N_G(z(A))\cap X=N_G(V(A))\cap X$. Therefore, since the cycle $C$ is chordless, it must contain a vertex from $B_v$, but then, by Lemma \ref{lem:expansion}, we further deduce that it must contain a vertex from $B^*$. We thus get that $C$ is hit by $S$.
\end{proof}

Reduction Rule \ref{rule:decIndDeg} decreases $|N^R_G(v)|$. Moreover, as long as $d^R_G(v)>\Delta$, we can apply this rule. Thus, after an exhaustive application of this rule, it should hold that $d^R_G(v)\leq\Delta$. Furthermore, this rule neither inserts vertices into $V(G)$ nor unmarks edges, and therefore we conclude that Lemma \ref{lem:boundIndDegV} is correct. Denote $\Delta'=\Delta+k$. Thus, by Reduction Rule \ref{rule:incidentMandatory}, the size of a maximum independent set in the neighborhood of each vertex in the graph $G$ from which we remove irrelevant edges is bounded by $\Delta'$.

\subsection{The Clique Forest}\label{sec:chordCliqueForest}

Let $F$  denote the clique forest associated with the chordal graphs $G\setminus D'$. Towards bounding the number of leaves in $F$, we need the following lemma.

\begin{lemma}\label{lem:maxIndSet}
Let $I$ be an independent set in the graph $G\setminus D'$. Then, there are most $|D'|\cdot\Delta'$ relevant edges between vertices in $D'$ and vertices in $I$.
\end{lemma}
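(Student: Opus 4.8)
The plan is to bound, for each vertex $d \in D'$, the number of relevant edges from $d$ to $I$, and then sum over all $d \in D'$. Fix $d \in D'$, and let $I_d = \{u \in I : \{d,u\} \text{ is a relevant edge}\}$, i.e., $I_d = N^R_G(d) \cap I$. Since $I$ is an independent set in $G \setminus D'$, and $I_d \subseteq I \setminus D'$ (actually $I_d \subseteq I$ which is disjoint from $D'$), the set $I_d$ is an independent set. Moreover, every vertex of $I_d$ lies in $N^R_G(d)$, so $I_d$ is an independent set inside $G[N^R_G(d)]$. By the definition of the independent degree $d^I_G(d)$ as the size of a maximum independent set in $G[N^R_G(d)]$, we get $|I_d| \leq d^I_G(d)$.

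Next I would invoke the bound established at the end of Section~\ref{sec:chordIndependent}: after the exhaustive application of the reduction rules, every vertex $w$ of $G$ satisfies $d^I_G(w) \leq \Delta'$ (where $\Delta' = \Delta + k$), since the size of a maximum independent set in the relevant neighborhood of each vertex is bounded by $\Delta'$. Applying this with $w = d$ gives $|I_d| \leq \Delta'$. The number of relevant edges between $d$ and $I$ is exactly $|I_d| \leq \Delta'$.

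Finally, summing over all $d \in D'$, the total number of relevant edges between $D'$ and $I$ is at most $\sum_{d \in D'} |I_d| \leq |D'| \cdot \Delta'$, which is the desired bound. (Each relevant edge between $D'$ and $I$ is counted exactly once in this sum, since it has exactly one endpoint in $D'$ — indeed $I \cap D' = \emptyset$ as $I$ is an independent set in $G \setminus D'$.)

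I do not expect any real obstacle here: the statement is essentially a direct consequence of unpacking the definition of independent degree and the global bound $\Delta^I_G \leq \Delta'$ already secured in the previous subsection. The only point requiring a moment's care is confirming that $I_d$ genuinely sits inside $G[N^R_G(d)]$ as an independent subgraph — which follows because $I_d$ consists of relevant neighbors of $d$ and independence is hereditary — and that no edge is double-counted, which holds since $I$ and $D'$ are disjoint.
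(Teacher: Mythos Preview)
Your proposal is correct and follows essentially the same approach as the paper: bound the number of relevant edges from each $d\in D'$ to $I$ by observing that $I_d=N^R_G(d)\cap I$ is an independent set in $G[N^R_G(d)]$ and hence has size at most the independent degree of $d$, then sum over $D'$. The paper's own proof is a one-line version of exactly this argument.
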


\begin{proof}
The claim follows from the observation that since $\Delta^I_G\leq\Delta$, $|E_M|\leq k^2$ and $I$ is an independent set, for every vertex $v\in D'$, there are at most $\Delta'$ relevant edges between $v$ and vertices in $I$.
\end{proof}

We will also need the following reduction rule.

\begin{reduction}\label{rule:neiCli}
If there exists a vertex $v$ in $G\setminus D'$ such that the vertices in $N_G(v)$ which are connected to $v$ via relevant edges form a clique, remove the vertex $v$ from $G$.
\end{reduction}

\begin{lemma}
Reduction Rule \ref{rule:neiCli} is safe.
\end{lemma}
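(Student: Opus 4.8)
We must show that if every vertex of $N_G(v)$ reachable from $v$ via a relevant edge forms a clique, then deleting $v$ from $G$ yields an equivalent instance. The natural strategy is to argue that $v$ participates in no chordless cycle that uses only relevant edges, so that no solution is ever ``forced'' to take $v$ on account of such a cycle, and conversely that deleting $v$ cannot create any new obstruction. Concretely, I would set $(G',k,E_I,E_M)$ to be the instance with $v$ removed (and all incident edges removed from $E(G)$, $E_I$, $E_M$; note by the definition of $N^R$ a relevant edge lies in neither $E_I$ nor $E_M$, and since the precondition talks about relevant neighbors we should also check $v$ is not incident to a mandatory edge — if it were, Reduction Rule~\ref{rule:incidentMandatory} bookkeeping or a separate argument handles it, but in the intended application $v\in G\setminus D'$ so $v$ is not an endpoint of a mandatory edge). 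Then I would show: (a) every solution $S$ of size at most $k$ for $(G,k)$ that hits all relevant-edge chordless cycles restricts to a solution for $(G',k)$ after possibly removing $v$; and (b) every solution $S'$ of size at most $k$ for $(G',k)$ is also a solution for $(G,k)$ (it already hits all relevant chordless cycles of $G$ not through $v$, and we show there are none through $v$).

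**Key steps.** First, the central claim: \emph{no chordless cycle of $G$ using only relevant edges passes through $v$}. Suppose $C$ is such a cycle with $v\in V(C)$, and let $a,b$ be the two neighbors of $v$ on $C$. Both edges $\{v,a\}$ and $\{v,b\}$ are relevant, so $a,b\in N^R_G(v)$, hence by the precondition $\{a,b\}\in E(G)$. Since $|V(C)|\geq 4$, $a$ and $b$ are non-consecutive on $C$ other than through $v$, so $\{a,b\}$ is a chord of $C$ — contradicting chordlessness. (One must be slightly careful that the chord $\{a,b\}$ need not be relevant; but a chordless cycle by definition has \emph{no} chord at all, relevant or not, so any edge $\{a,b\}\in E(G)$ already breaks it. This is exactly why we only needed the neighbors' pairwise adjacency in $G$, not in the relevant-edge subgraph.) Second, direction (b): let $S'$ be a solution to $(G',k)$; then $G\setminus S'$ differs from $G'\setminus S' = (G\setminus S')\setminus\{v\}$ only by the vertex $v$ and its incident \emph{relevant} edges (the irrelevant/mandatory edges through $v$ are ignored by our convention that we only care about relevant chordless cycles — more precisely, we carry them along but the equivalence we maintain is w.r.t.\ relevant-edge chordless cycles). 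By the central claim $v$ lies on no relevant-edge chordless cycle of $G$, so $G\setminus S'$ has a relevant-edge chordless cycle iff $G'\setminus S'$ does; since the latter has none, neither does the former, and $S'$ is a solution to $(G,k)$. Third, direction (a): given a solution $S$ for $(G,k)$ with $|S|\le k$, consider $S\setminus\{v\}$. Any relevant-edge chordless cycle in $G'\setminus(S\setminus\{v\})$ is also one in $G\setminus(S\setminus\{v\})$ not passing through $v$, hence (since it avoids $v$) it is a relevant-edge chordless cycle in $G\setminus S$, contradicting that $S$ is a solution. So $S\setminus\{v\}$ is a solution to $(G',k)$ of size at most $k$.

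**Main obstacle.** The only real subtlety is the interaction with the irrelevant/mandatory edge annotations: one must be precise about what ``equivalent instance'' means here (equivalence with respect to hitting relevant-edge chordless cycles, as set up in Section~\ref{sec:irrelevant}), and must verify that removing $v$ does not orphan a mandatory edge in a way that breaks the invariant $|E_M|\le k^2$ or the semantics of Reduction Rule~\ref{rule:incidentMandatory} — this is handled because in the intended application $v\notin D'$ and $D'$ contains every endpoint of a mandatory edge, so $v$ is incident to none. Beyond this bookkeeping, the combinatorial heart — that a vertex all of whose relevant neighbors form a clique cannot sit on a chordless cycle of relevant edges — is a one-line chord argument. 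I would therefore organize the proof as: (i) observe $v$ is incident to no mandatory edge in the intended setting; (ii) prove the central claim via the chord argument; (iii) conclude both directions of equivalence as above.
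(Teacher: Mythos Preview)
Your proposal is correct and follows essentially the same approach as the paper: the core observation is that any chordless cycle through $v$ would have its two neighbors of $v$ connected by a relevant edge to $v$, hence mutually adjacent by hypothesis, hence a chord --- so every chordless cycle through $v$ contains an irrelevant edge and can be ignored. The paper's proof is a one-sentence version of exactly this; your write-up is more careful about the two directions of equivalence and the bookkeeping that $v\in G\setminus D'$ implies $v$ is incident to no mandatory edge, which the paper leaves implicit.
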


\begin{proof}
The special choice of the vertex $v$ implies that every chordless cycle containing $v$ must contain at least one irrelevant edge, and can therefore be ignored. We thus conclude that it is safe to remove the vertex $v$ from $G$.
\end{proof}

The bound on the number of leaves will follow from a bound on the number of bags containing {\em private vertices}, which are defined as follows.

\begin{definition}
A vertex $v$ in $G\setminus D'$ is a {\em private vertex} if there exists only one bag in $F$ that contains it.
\end{definition}

\begin{lemma}\label{lem:boundPrivate}
The number of bags in $F$ containing private vertices is bounded by 
$|D'|\cdot\Delta'$.
\end{lemma}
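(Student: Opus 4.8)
The plan is to charge each bag containing a private vertex to a distinct relevant edge between $D'$ and an independent set inside $G\setminus D'$, and then invoke Lemma~\ref{lem:maxIndSet}. First I would argue that every private vertex $v$ has the property that its relevant neighborhood inside $G\setminus D'$ cannot be a clique: if it were, then together with the fact that $v$ lies in a single bag of $F$, all relevant neighbors of $v$ would lie in that one bag, which (since $F$ is a clique forest) is a clique, so all of $N_G(v)$ reachable by relevant edges would form a clique, and Reduction Rule~\ref{rule:neiCli} would have removed $v$. Hence $v$ must have a relevant neighbor in $D'$; otherwise the relevant neighborhood of $v$ lies entirely in its unique bag and is a clique. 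So every private vertex sends a relevant edge to $D'$.

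Next I would pick, for each bag $X$ of $F$ containing a private vertex, one private vertex $v_X \in X$, and observe that the $v_X$ are pairwise nonadjacent whenever they come from distinct bags: two private vertices in distinct bags cannot share an edge, since an edge $\{v_X, v_{X'}\}$ in $G\setminus D'$ would force $v_X$ and $v_{X'}$ into a common bag of $F$, contradicting that each lies in exactly one (distinct) bag. Therefore $I := \{v_X : X \text{ a bag of } F \text{ with a private vertex}\}$ is an independent set in $G\setminus D'$. Each $v_X$ is incident to at least one relevant edge to $D'$, and distinct bags give distinct vertices $v_X$, hence distinct relevant edges (an edge $\{d, v_X\}$ with $d\in D'$ is counted once). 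By Lemma~\ref{lem:maxIndSet}, the number of relevant edges between $D'$ and $I$ is at most $|D'|\cdot\Delta'$, so the number of bags containing private vertices is at most $|D'|\cdot\Delta'$.

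The main obstacle is the first step — ruling out the case where a private vertex's relevant neighborhood is a clique avoiding $D'$. I need to be careful that ``$v$ lies in a single bag $X$'' really does imply that every relevant neighbor of $v$ lies in $X$: by the edge-covering property of forest decompositions, every edge $\{v,u\}$ of $G\setminus D'$ has both endpoints in some common bag, and since $v$ belongs only to $X$, that bag must be $X$. Then $X$ being a maximal clique makes the relevant neighborhood of $v$ within $G\setminus D'$ a clique, so Reduction Rule~\ref{rule:neiCli} applies unless $v$ has a relevant neighbor outside $G\setminus D'$, i.e.\ in $D'$ (relevant edges to $V(G)\setminus(D'\cup (G\setminus D'))$ do not exist since that set is empty). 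Once this is pinned down, the charging argument and the appeal to Lemma~\ref{lem:maxIndSet} are routine.
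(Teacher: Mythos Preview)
Your proposal is correct and follows essentially the same approach as the paper: pick one private vertex per bag to form an independent set $I$ in $G\setminus D'$, use Reduction Rule~\ref{rule:neiCli} to guarantee each such vertex has a relevant edge to $D'$, and then bound $|I|$ via Lemma~\ref{lem:maxIndSet}. One wording slip: your opening claim that ``the relevant neighborhood inside $G\setminus D'$ cannot be a clique'' is the opposite of what holds---that neighborhood \emph{is} always a clique for a private vertex, which is precisely why Reduction Rule~\ref{rule:neiCli} forces a relevant neighbor in $D'$; your final paragraph states this correctly.
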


\begin{proof}
Let $\ell$ denote the number of bags in $F$ containing private vertices. By the definition of a clique forest, if we take exactly one private vertex from each bag in $F$, we obtain an independent set. 
Therefore, the graph $G$ contain an independent set 
of size at least $\ell$. Let $I$ denote this independent set. Observe that since each vertex in $I$ is a private vertex, its neighborhood in $G\setminus D'$ forms a clique. Therefore, after an exhaustive application of Reduction Rule \ref{rule:neiCli}, each vertex in $I$ must be connected by at least one relevant edge to a vertex in $D'$. By Lemma \ref{lem:maxIndSet}, we conclude that $\ell\leq |D'|\cdot\Delta'$.
\end{proof}

We are now ready to bound the number of leaves and nodes of degree at least 3 in the clique forest $F$.

\begin{lemma}\label{lem:leavesDeg3}
Both the number of leaves in $F$ and the number of nodes of degree at least 3 in $F$ are bounded by 
$|D'|\cdot\Delta'$.
\end{lemma}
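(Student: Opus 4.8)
The plan is to bound leaves and degree-$\geq 3$ nodes of $F$ by reducing both quantities to the number of bags containing private vertices, which Lemma~\ref{lem:boundPrivate} already bounds by $|D'|\cdot\Delta'$. First I would recall the standard forest-theoretic fact that in a forest, the number of leaves is at least the number of nodes of degree $\geq 3$; more precisely, if $F$ has $L$ leaves and $B$ branching nodes (degree $\geq 3$), then $B \leq L$. Hence it suffices to bound $L$, the number of leaves. The natural route is: every leaf of $F$ must contain a private vertex. Indeed, a leaf $t$ of $F$ has exactly one neighbor $t'$ in $F$; since $\beta(t)$ is a maximal clique that differs from $\beta(t')$ (distinct maximal cliques), there is a vertex $w \in \beta(t)\setminus\beta(t')$. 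By the subtree property (iii) of the forest decomposition, the set of bags containing $w$ is a subtree; if $w$ lay in any bag other than $\beta(t)$, this subtree would have to pass through $t'$ (the only way out of the leaf $t$), contradicting $w\notin\beta(t')$. So $w$ is a private vertex, and it lies in $\beta(t)$.

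Thus each leaf of $F$ corresponds to a distinct bag containing a private vertex, giving $L \leq (\text{number of bags containing private vertices}) \leq |D'|\cdot\Delta'$ by Lemma~\ref{lem:boundPrivate}. Combined with $B\leq L$, the number of degree-$\geq 3$ nodes is also at most $|D'|\cdot\Delta'$, which is exactly the claimed bound. The only subtlety I want to handle carefully is the inequality $B \leq L$ in a forest: summing degrees, $\sum_{t} d_F(t) = 2|E(F)| \leq 2(|V(F)|-1)$, and separating out leaves (degree $1$, of which there are $L$), degree-$2$ nodes, and degree-$\geq 3$ nodes (of which there are $B$) yields $L + 2(|V(F)| - L - B) + 3B \leq \sum_t d_F(t) \leq 2|V(F)| - 2$, i.e. $B \leq L - 2 < L$. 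This also quietly handles degenerate cases (isolated-vertex trees or single-edge trees have no branching nodes at all).

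The main obstacle — though it is a mild one — is making sure the leaf-to-private-vertex argument is airtight when $F$ is a genuine forest rather than a tree, and when a component of $F$ is a single node (which has no leaves in the degree-$1$ sense but whose bag is itself a maximal clique). For a single-node component, $\beta(t) = V(G[\text{that component}])$ is a maximal clique and, depending on the convention, either contributes $0$ leaves or $1$; in either case the count is dominated by the private-vertex bound since every vertex in such a bag trivially appears in only one bag. I would state the leaf-containment claim as a short self-contained lemma, prove $B\leq L$ by the degree-sum computation above, and then chain through Lemma~\ref{lem:boundPrivate}; no new reduction rule is needed, and the argument is purely combinatorial on the clique forest.
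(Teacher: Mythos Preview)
Your proposal is correct and follows essentially the same approach as the paper's proof: show that every leaf of $F$ contains a private vertex, invoke Lemma~\ref{lem:boundPrivate} to bound the number of leaves by $|D'|\cdot\Delta'$, and then use the standard forest fact that the number of nodes of degree at least $3$ is at most the number of leaves. The paper's proof simply asserts these two facts without the detailed justifications you provide (the subtree-property argument for private vertices in leaves and the degree-sum computation for $B\leq L$), but the logical structure is identical.
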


\begin{proof}
Observe that since every leaf in $F$ corresponds to a maximal clique in $G\setminus D'$, it contains a private vertex. Thus, by Lemma \ref{lem:boundPrivate}, the number of leaves is bounded by 
$|D'|\cdot\Delta'$.
 Since in a forest, the number of nodes of degree at least 3 is bounded by the number of leaves, we conclude that the lemma is correct.
\end{proof}

Next, we turn to bound the size of a bag of $F$, to which end we prove the correctness of the following lemma.


\begin{lemma}\label{lem:boundMaxCli2}
    In polynomial time we can produce an instance $(G',k')$ equivalent to $(G,k)$ such that 
    $k' \leq k$ and the size of any maximal clique in $G'$ is bounded by $\kappa = c\cdot(|\widetilde{D}|^3 \cdot k + |\widetilde{D}| \cdot \Delta' \cdot (k+2)^3)$.
    (Here $c$ is some constant independent of the input.)
\end{lemma}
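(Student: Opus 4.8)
The plan is to bound the size of a maximal clique $K$ in $G \setminus D'$ by analyzing how its vertices interact with the approximate solution $\widetilde{D}$ and with the clique forest. Let me sketch the structure.

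\medskip

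\noindent\textbf{Approach.} Fix a maximal clique $K$ in the chordal graph $G \setminus D'$; by Lemma~\ref{lem:cliqueForest} this corresponds to a bag of the clique forest $F$. I would partition $V(K)$ into three parts and bound each separately. First, the vertices of $K$ that are \emph{private} (lie in only one bag of $F$): since an independent set can be formed by picking one private vertex per bag, but here all private vertices of $K$ lie in the \emph{same} bag, $K$ itself can contain at most... no — rather, I would bound the private vertices of $K$ via their relevant edges to $D'$ (after Reduction Rule~\ref{rule:neiCli}, each has a relevant neighbor in $D'$), but since they all sit in one clique this gives at most $|D'| \cdot \Delta'$ of them. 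Wait — more carefully: an independent set hitting each bag once gives the private-vertex bound globally; within one bag we instead use that each private vertex of $K$ has, after Rule~\ref{rule:neiCli}, an irrelevant or mandatory-free relevant edge to $D'$, and $|D'| \le c'(k f(k) + k^2)$ combined with bounded independent degree caps this. The real content is the \emph{non-private} vertices.

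\medskip

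\noindent\textbf{Key steps.} (1) Reduce to bounding $|V(K) \cap C|$ for each connected component $C$ of $G[V(K) \setminus \widetilde{D}]$-type pieces; more precisely, follow Marx~\cite{Marx10}: a large clique $K$ in $G \setminus D'$ that survives in $G$ means many of its vertices behave ``the same'' with respect to $\widetilde{D}$ and with respect to the few relevant attachment points, so some of them are irrelevant. (2) Classify vertices of $K$ by their trace on $\widetilde{D}$, i.e. by $N_G(u) \cap \widetilde{D}$ — but this alone gives $2^{|\widetilde{D}|}$ classes, too many; instead bound the \emph{number of distinct minimal relevant neighborhoods} using that each chordless cycle through $K$ must leave $K$ and reach $\widetilde{D}$, and each such cycle uses boundedly many vertices outside. (3) For vertices of $K$ with the same relevant behavior, invoke Rule~\ref{rule:neiCli} (neighborhood is a clique $\Rightarrow$ delete) or a marking argument: if two vertices $u, u' \in K$ have identical relevant neighborhoods outside $K$ and identical adjacency into $\widetilde{D}$, one of them can have its edges to $K$ or its problematic edges marked irrelevant, since any chordless cycle through $u'$ yields one through $u$. (4) Counting: there are $|\widetilde{D}|$ vertices in the approximate solution; chordless cycles decompose at $\widetilde{D}$ into segments, the ``interface'' is captured by $\OO(|\widetilde{D}|^3)$ triples (accounting for which pair of $\widetilde{D}$-vertices a path between two clique vertices connects, times a factor $k$ for multiplicity before a mandatory edge forces deletion), plus $\OO(|\widetilde{D}| \cdot \Delta' \cdot (k+2)^3)$ from the attachment through independent components analogous to Section~\ref{sec:chordIndependent}. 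Summing these and bounding redundancies by $k+2$ (the expansion constant) yields $\kappa = c \cdot (|\widetilde{D}|^3 \cdot k + |\widetilde{D}| \cdot \Delta' \cdot (k+2)^3)$.

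\medskip

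\noindent\textbf{Main obstacle.} The hard part will be step (3)–(4): correctly identifying which vertices of a large maximal clique are genuinely ``equivalent'' and hence partially irrelevant, and proving the associated reduction rule safe. Unlike \textsc{FVS}, deleting vertices or edges here risks creating \emph{new} chordless cycles, so the argument must proceed entirely through irrelevant/mandatory edge marking, and one must show that marking preserves all chordless cycles that matter. Carefully setting up the equivalence classes so that their number is $\OO(|\widetilde{D}|^3 \cdot k + |\widetilde{D}| \cdot \Delta' \cdot (k+2)^3)$ — rather than exponential in $|\widetilde{D}|$ — is the crux, and I expect it to require a delicate case analysis on how a chordless cycle can enter and exit the clique $K$, mirroring Marx's original clique-size bound but with the relevant/mandatory bookkeeping layered on top.
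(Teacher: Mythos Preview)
Your sketch diverges from the paper's (and Marx's) actual mechanism in a way that leaves a genuine gap. You propose to classify vertices of $K$ by their ``trace'' $N_G(u)\cap\widetilde{D}$ and then argue that vertices in the same class are interchangeable; but you yourself note this gives $2^{|\widetilde{D}|}$ classes, and your proposed remedy---bounding ``distinct minimal relevant neighborhoods''---is not made precise and does not obviously yield a polynomial bound. The paper does \emph{not} classify vertices of $K$ this way. Instead it looks \emph{outside} $K$: for each $t\in\widetilde{D}$ it defines $t$-dangerous and $t^*$-dangerous vertices (vertices $v\notin K$ with a witness path into $K$), bounds any independent set of $t$-dangerous vertices by $\OO(k^2)$ and of $t^*$-dangerous vertices by $\Delta'$ (this is where the independent-degree bound enters), uses perfectness of chordal graphs to get a clique cover of the dangerous vertices, and then marks a bounded number of \emph{witnesses} in $K$ per clique of dangerous vertices. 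The $|\widetilde{D}|^3\cdot k$ term arises from marking, for every triple $l_1,l_2,l_3\in\widetilde{D}$, $k{+}1$ vertices of $K$ with labels $l_1,l_2$ but not $l_3$ (to handle single-vertex fragments), not from ``pairs of $\widetilde{D}$-vertices a path connects''; the $|\widetilde{D}|\cdot\Delta'\cdot(k{+}2)^3$ term comes from the $(k{+}2)^3$ witnesses per clique of $t^*$-dangerous vertices, not from independent components as in Section~\ref{sec:chordIndependent}.

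You also build your plan around a misconception: you insist the argument ``must proceed entirely through irrelevant/mandatory edge marking'' because ``deleting vertices\ldots risks creating new chordless cycles.'' Deleting a \emph{vertex} cannot create a chordless cycle (any induced cycle of $G\setminus\{u\}$ is already an induced cycle of $G$); only deleting \emph{edges} can. The paper exploits this and simply deletes the unmarked vertices of $K$, arguing that any chordless cycle through an unmarked vertex can be rerouted through a marked witness. Without the dangerous-vertex/witness framework and the realization that vertex deletion is safe, your outline does not yet contain the key ideas needed to reach the stated bound.
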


The proof of this lemma closely follows the approach using which Marx~\cite{Marx10} bounds the size of maximal cliques, but requires a few adaptations since the purpose of the paper~\cite{Marx10} is to give a parameterized algorithm for \cdel, which runs in exponential time, while we need to reduce instances in polynomial time. However, due to the similarity between our proof and this approach, details are deferred to Appendix \ref{app:maxClique}.

Observe that 
the size of each bag of $F$ is bounded by the size of a maximal clique of $G\setminus D'$. Furthermore, since $G\setminus D'$ is a subgraph of $G$, the size of a maximal clique of $G\setminus D'$ is bounded by the size of a maximal clique of $G$.
 Thus, having applied the procedure given by Lemma \ref{lem:boundMaxCli2}, we have the following result.

\begin{lemma}\label{lem:boundMaxCli1}
The size of any bag of $F$ is upper bounded by $\kappa$.
\end{lemma}

\subsection{The Length of Degree-2 Paths}\label{sec:ChordPaths}

{\noindent\bf The Family of Paths $\cal P$.} Let $V_F$ denote the set of each node of degree at least 3 in the forest $F$ as well as each node whose bag has at least one private vertex. Let $\cal P$ denote the set of paths whose endpoints belong to $V_F$ and such that all of their internal nodes do not belong to $V_F$. Clearly, it holds that $|{\cal P}|\leq |V_F|$. By Lemmata \ref{lem:boundPrivate} and \ref{lem:leavesDeg3}, we have the following observation.

\begin{observation}\label{obs:boundP}
$|{\cal P}|\leq 2|D'|\cdot\Delta'$.
\end{observation}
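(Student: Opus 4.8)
The plan is to derive the bound on $|{\cal P}|$ directly from the inequality $|{\cal P}|\leq|V_F|$ noted just before the observation, together with the two lemmata that control the sizes of the two kinds of nodes making up $V_F$. Recall that, by definition, $V_F$ is the union of two sets: the set of nodes of $F$ of degree at least $3$, and the set of nodes of $F$ whose bag contains at least one private vertex. So it suffices to bound each of these two sets separately and add.

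First I would invoke Lemma~\ref{lem:leavesDeg3}, which already gives that the number of nodes of degree at least $3$ in $F$ is at most $|D'|\cdot\Delta'$. Next I would invoke Lemma~\ref{lem:boundPrivate}, which gives that the number of bags of $F$ containing a private vertex is at most $|D'|\cdot\Delta'$. Summing these two estimates yields $|V_F|\leq 2|D'|\cdot\Delta'$ (it is harmless here that the two defining sets of $V_F$ may overlap, since we only need an upper bound and are merely adding the two separate bounds).

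Finally I would combine this with $|{\cal P}|\leq|V_F|$: each path in ${\cal P}$ corresponds to an edge of the forest obtained from $F$ by suppressing every degree-$2$ node that is not in $V_F$, and a forest on vertex set $V_F$ has at most $|V_F|$ edges. Hence $|{\cal P}|\leq|V_F|\leq 2|D'|\cdot\Delta'$, which is exactly the claimed bound.

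There is essentially no obstacle in this step; the observation is a one-line bookkeeping consequence of Lemmata~\ref{lem:boundPrivate} and~\ref{lem:leavesDeg3}. The only point requiring a moment's care is recalling precisely which nodes are placed into $V_F$, so that the two invoked bounds together cover all of $V_F$.
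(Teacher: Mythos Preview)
Your proposal is correct and matches the paper's approach exactly: the paper simply states that the observation follows from Lemmata~\ref{lem:boundPrivate} and~\ref{lem:leavesDeg3}, and you have spelled out precisely that deduction. Your added justification for $|{\cal P}|\leq|V_F|$ (viewing the paths as edges of the forest on $V_F$ obtained by suppressing the other degree-$2$ nodes) is a helpful elaboration of what the paper dismisses as ``clear''.
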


Thus, in light of Lemma \ref{lem:boundMaxCli1}, by bounding the maximum number of nodes on each path in $\cal P$, we can bound the total number of vertices in the graph. To this end, we fix some path $P\in{\cal P}$. Moreover, we orient the path from left to right, where the choice of the leftmost and rightmost nodes is arbitrary.

\bigskip
{\noindent\bf Partitioning the Path $P$.} Next, we will partition $P$ into more ``manageable paths''. To this end, we need the following definition.

\begin{definition}\label{def:comply}
We say that a subpath $Q$ of $P$ complies with a vertex $d\in D'$ if at least one of the following conditions holds.
\begin{enumerate}
\item For every two bags $B$ and $B'$ on $Q$, both $B\subseteq N_G(d)$ and $B'\subseteq N_G(d)$.

\item Let $B_1,B_2,\ldots,B_t$ denote the bags of $Q$ ordered from left to right. Then, at least one of the following condition holds.
	\begin{enumerate}
	\item $B_1\cap N_G(d)\subseteq B_2\cap N_G(d)\subseteq\ldots\subseteq B_t\cap N_G(d)$.
	\item $B_t\cap N_G(d)\subseteq B_{t-1}\cap N_G(d)\subseteq\ldots\subseteq B_1\cap N_G(d)$.
	\end{enumerate}
In particular, $N_G(d)$ is a subset of at least one of the two bags $B_1$ and $B_t$.
\end{enumerate}
\end{definition}

We would like to find a set ${\cal B}$ of at most $\OO(|D'|)$ bags on the path $P$ such that after their removal from $P$, the following lemma will be true.

\begin{lemma}\label{lem:comply}
Each subpath resulting from the removal of the bags in $\cal B$ from $P$ complies with every vertex in $D'$.
\end{lemma}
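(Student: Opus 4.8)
\textbf{Proof strategy for Lemma~\ref{lem:comply}.} The plan is to construct the set $\cal B$ by examining, for each vertex $d\in D'$ separately, the way in which the set $N_G(d)$ interacts with the bags along the path $P$, and then taking $\cal B$ to be the union of a constant number of ``witness bags'' for each $d$. Fix $d\in D'$. For a bag $B$ on $P$, write $S_B^d = B\cap N_G(d)$. The key structural observation that I would try to establish first is that, as one walks along $P$ from left to right, the family of sets $\{S_B^d\}$ is ``laminar-like'' in the following sense: because $G\setminus D'$ is chordal and $P$ is a degree-$2$ path in its clique forest, consecutive bags differ by the sliding-window behaviour of a path decomposition (each vertex of $G\setminus D'$ that appears in some bag of $P$ occupies a contiguous subpath of $P$), and the neighbours of $d$ in $G\setminus D'$ therefore enter and leave along $P$ in a controlled way. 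I would use this to argue that for each vertex $u\in N_G(d)$ appearing on $P$, the set of bags of $P$ containing $u$ is a contiguous subpath $[\ell_u, r_u]$, and that the ``events'' (a vertex of $N_G(d)$ entering or leaving) along $P$, if there are many of them interleaved in a complicated pattern, are forced to create chordless $4$-cycles through $d$ together with the path structure of $P$ — unless the pattern is essentially monotone. Concretely, I expect that after removing $O(1)$ bags (the bags where ``too many things happen'' for $d$, e.g. a bag $B$ for which both some $u$ enters at $B$ and some already-present $w\notin N_G(d)$ is still present and some later vertex also leaves, giving a forbidden configuration), the remaining subpaths each have the property that $S_B^d$ is monotone nondecreasing then nonincreasing, and moreover the ``peak'' subpath where $N_G(d)$ is fully contained is handled by condition~1 of Definition~\ref{def:comply} while the two flanking pieces are handled by condition~2. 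Cutting at the peak contributes one more bag per $d$.

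The main work, and the step I expect to be the genuine obstacle, is proving that the number of bags that need to be removed on behalf of a single $d$ is bounded by an absolute constant rather than by something growing with $|P|$ or $k$. This is where the chordality of $G\setminus D'$ and, crucially, the fact that $B_v$-type blockers / the approximate solution $D'$ hit all chordless cycles must be invoked: if along $P$ the trace $S_B^d$ were to oscillate — go up, come down, go up again — then picking representative vertices witnessing an ``up'' then ``down'' then ``up'' pattern together with $d$ and short connecting subpaths inside $P$ would yield an induced cycle of length $\ge 4$ avoiding $D'$, contradicting that $G\setminus D'$ is chordal (or more precisely that $D'$ hits all chordless cycles). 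A careful case analysis of the relative positions of the intervals $[\ell_u,r_u]$ for $u\in N_G(d)$, combined with the clique structure of the bags (within a bag everything is pairwise adjacent, so chords can only come from $d$ itself or from outside $N_G(d)$), should show that at most a constant number — I would expect something like $2$ or $3$ — of ``breakpoints'' per $d$ suffice. I would formalize this by defining the breakpoints as the maximal positions at which the monotonicity of $|S_B^d|$ is violated and showing there can be at most a constant number of them.

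Once the per-vertex constant bound is in hand, I set $\cal B$ to be the union over $d\in D'$ of the at most $O(1)$ breakpoint bags for $d$, so $|{\cal B}| = O(|D'|)$ as required. Then for any subpath $Q$ obtained by deleting $\cal B$ from $P$ and any $d\in D'$, the subpath $Q$ lies entirely within one ``monotone block'' relative to $d$: either $Q$ sits inside the peak region, where every bag of $Q$ is contained in $N_G(d)$ and condition~1 of Definition~\ref{def:comply} applies, or $Q$ sits on one of the two flanks, where the traces $S_B^d$ are monotone along $Q$ and one of conditions~2(a)/2(b) applies; in the latter case the ``in particular'' clause of Definition~\ref{def:comply} follows because the extreme bag of the monotone chain already contains all of $N_G(d)$ that ever appears on $Q$, and combined with the peak region containing all of $N_G(d)$ this gives $N_G(d)\subseteq B_1$ or $N_G(d)\subseteq B_t$. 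Since this holds for every $d\in D'$ simultaneously (the blocks for different $d$ need not align, but $Q$ being a single subpath after removing \emph{all} breakpoints of \emph{all} $d$ means it is inside a single block for each $d$ individually), $Q$ complies with every vertex of $D'$, which is exactly the statement of the lemma. I would close by remarking that $\cal B$ can be computed in polynomial time since the intervals $[\ell_u,r_u]$ and the breakpoints are all read off directly from the clique forest.
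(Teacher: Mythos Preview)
Your overall strategy matches the paper's: for each $d\in D'$ you look at the trace $B\mapsto B\cap N_G(d)$ along $P$, you want a constant number of bags per $d$ whose removal leaves monotone flanks (condition~2 of Definition~\ref{def:comply}) and a middle piece where every bag lies inside $N_G(d)$ (condition~1), and you take the union over $d$. The paper does exactly this, with the explicit choice of two bags: $B_\ell$, the leftmost bag in which some neighbour of $d$ makes its last appearance, and $B_r$, the rightmost bag in which some neighbour of $d$ makes its first appearance. To the left of $B_\ell$ no neighbour of $d$ has yet disappeared, so the trace is monotone increasing; symmetrically to the right of $B_r$. The substantive claim is that strictly between $B_\ell$ and $B_r$ every bag is entirely contained in $N_G(d)$.

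The gap in your argument is precisely in this last step. You write that an oscillation ``would yield an induced cycle of length $\ge 4$ avoiding $D'$, contradicting that $G\setminus D'$ is chordal (or more precisely that $D'$ hits all chordless cycles)''. But the cycle you build necessarily passes through $d$, and $d\in D'$; so it neither avoids $D'$ nor gives any contradiction with $D'$ being a solution. What you actually need is that $D'$ is a \emph{redundant} approximate solution: $G\setminus(D'\setminus\{d\})$ is chordal for every $d\in D'$. Then a chordless cycle through $d$ whose remaining vertices lie in $V(G)\setminus D'$ is a chordless cycle in $G\setminus(D'\setminus\{d\})$, and that is the contradiction. This is not a cosmetic point: with an ordinary (non-redundant) modulator the argument genuinely fails, and the paper explicitly sets up redundancy (Corollary~\ref{cor:approxSpecial} and the passage defining $D'$) for exactly this reason. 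Concretely, the paper's Lemma~\ref{lem:inComply1} takes non-adjacent $u,v\in N_G(d)\setminus D'$ whose intervals on $P$ are separated, shows that any vertex $z$ in a bag strictly between them is joined to both $u$ and $v$ by paths inside the bags of $P$, and then closes a chordless cycle $d\text{--}u'\text{--}\cdots\text{--}z\text{--}\cdots\text{--}v'\text{--}d$; the contradiction is with chordality of $G\setminus(D'\setminus\{d\})$.

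A smaller issue: your proposed breakpoints are ``positions at which the monotonicity of $|S_B^d|$ is violated''. Tracking cardinalities is not enough, since one neighbour of $d$ can leave while another enters without changing $|S_B^d|$; you need monotonicity of the sets themselves under inclusion. The paper's definition of $B_\ell$ and $B_r$ sidesteps this by looking directly at first/last appearances of neighbours of $d$.
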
 

The rest of this subsubsection concerns the proof of this lemma. To prove it, it is sufficient to show that for each vertex $d\in D'$, we can find $\OO(1)$ bags such that after their removal from $P$, each of the resulting subpaths complies with $d$. To this end, fix some vertex $d\in D'$.

First, we need the following lemma.

\begin{lemma}\label{lem:inComply1}
Let $u,v\in N_G(d)\setminus D'$ be non-adjacent vertices, $B_u$ be a bag containing $u$ such that no bag to its right (on $P$) contains $u$, and $B_v$ be a bag containing $v$ such that not bag to its left (on $P$) contains $v$. Then, $d$ is adjacent to every vertex in every bag that lies strictly between $B_u$ and $B_v$.
\end{lemma}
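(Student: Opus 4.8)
The plan is to argue by contradiction, using a chordality property of $D'$ together with the clique‑forest structure of $G\setminus D'$. First I set up: if $B_u$ is not strictly to the left of $B_v$ on $P$, there is no bag strictly between them and the statement is vacuous, so assume $B_u = C_i$ lies strictly left of $B_v = C_j$ and write $C_i, C_{i+1},\dots, C_j$ for the bags of $P$ in between. Fix a bag $C_\ell$ with $i<\ell<j$ and a vertex $w\in C_\ell$, and suppose for contradiction that $\{d,w\}\notin E(G)$. Note that $C_\ell$ is an internal node of $P$, hence has degree $2$ in $F$ and its bag has no private vertex; in particular $w$ lies also in $C_{\ell-1}$ or in $C_{\ell+1}$.

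The key structural input is a chordal ``host'' graph. Since $D$ is a redundant approximate solution and redundancy is preserved when we enlarge a redundant solution (for any $x$ in a superset $D'\supseteq D$, $G\setminus(D'\setminus\{x\})$ is an induced subgraph of a chordal graph), the set $D'$ is a redundant approximate solution; hence $G\setminus(D'\setminus\{d\})$ is chordal. I then restrict to $\widehat G := G\bigl[\{d\}\cup\bigcup_{t=i}^{j}C_t\bigr]$, which, being an induced subgraph, is still chordal (Lemma~\ref{lem:cliqueForest}). Moreover $\widehat G - d = G\bigl[\bigcup_{t=i}^{j}C_t\bigr]$ is connected and admits the clique‑path $C_i-C_{i+1}-\dots-C_j$ as a tree decomposition (the restriction of the clique forest $F$ to the subtree $P$), so each adhesion $C_t\cap C_{t+1}$ is a nonempty clique which, inside $\widehat G-d$, separates $u$ from $v$. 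Finally $\{d,u\},\{d,v\}\in E(\widehat G)$ while $\{u,v\}\notin E(G)$.

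Next I build an induced path $R = r_0 r_1 \cdots r_p$ from $u=r_0$ to $v=r_p$ inside $\widehat G - d$ having $w$ as an \emph{internal} vertex. The idea is to glue a shortest $u$–$w$ path contained in $\bigcup_{t\le\ell} C_t$ to a shortest $w$–$v$ path contained in $\bigcup_{t\ge\ell} C_t$; any common vertex of the two halves must lie in the clique $C_\ell$ (its range of bags on $P$ spans position $\ell$), and a short local rerouting — using that $w$ also belongs to $C_{\ell-1}$ or $C_{\ell+1}$ — makes the concatenation an induced path that still passes through $w$. Once $R$ is in hand, write $w=r_c$; since $\{d,r_0\},\{d,r_p\}\in E(G)$ but $\{d,w\}\notin E(G)$ we have $0<c<p$. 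Let $[a+1,\,b-1]$ be the maximal run of indices $t$ with $\{d,r_t\}\notin E(G)$ that contains $c$; because $\{d,u\},\{d,v\}\in E(G)$ we get $0\le a$ and $b\le p$, and maximality of the run gives $\{d,r_a\},\{d,r_b\}\in E(G)$. Then $d\,r_a\,r_{a+1}\cdots r_b\,d$ is a cycle of length $b-a+2\ge 4$ in $\widehat G$; it is chordless, since $R$ is induced (no chord among the $r_t$'s, and in particular $\{r_a,r_b\}\notin E(G)$) and $d$ is non‑adjacent to $r_{a+1},\dots,r_{b-1}$ by the choice of the run. This contradicts the chordality of $\widehat G$, so $\{d,w\}\in E(G)$, and since $w\in C_\ell$ was arbitrary this is exactly the claim.

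The routine parts are the redundancy argument and the extraction of the hole from an induced $u$–$v$ path; I expect the main obstacle to be the construction of that induced path through $w$. Reconciling ``passes through $w$'' with ``induced'' is delicate because the two shortest halves can overlap near $w$ and can create chords between them, and the fix has to exploit the separator structure of the clique forest (nonempty adhesions along $P$, $C_\ell$ being an internal degree‑$2$ node, and $w\in C_{\ell-1}\cup C_{\ell+1}$) to guarantee enough room for the rerouting.
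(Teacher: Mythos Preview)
Your overall strategy coincides with the paper's: build a $u$--$w$--$v$ path inside the bags of $P$ between $B_u$ and $B_v$, then extract a hole through $d$ from the maximal run of non-neighbours of $d$ around $w$, contradicting the chordality of $G\setminus(D'\setminus\{d\})$. Your extraction of the hole via the maximal run $[a+1,b-1]$ is exactly the paper's choice of ``last $d$-neighbour $u'$ on the left half'' and ``first $d$-neighbour $v'$ on the right half''.

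The gap you flag is real, and ``shortest paths plus local rerouting'' is not the right tool to close it: the two halves can share several vertices of $C_\ell$ and can have chords between them through $C_\ell$, and there is no clean reason a rerouting that preserves $w$ exists. The paper handles this differently. It uses the \emph{leftmost} bag $B_i$ and the \emph{rightmost} bag $B_j$ containing $w$ on the subpath (with $i<j$, precisely because internal bags of $P$ have no private vertex --- the same fact you invoke). It then builds, by a short induction on $i$, a path $P_1$ from $u$ to $w$ whose \emph{internal} vertices avoid $B_{i+1}\cup\cdots\cup B_v$, and symmetrically a path $P_2$ from $w$ to $v$ whose internal vertices avoid $B_u\cup\cdots\cup B_{j-1}$. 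Because $i<j$, the tree-decomposition property forces $P_1$ and $P_2$ to share only $w$ and to have no edge between an internal vertex of one and an internal vertex of the other; passing to chordless subpaths then yields exactly the induced $u$--$w$--$v$ path you need. So the fix is not a rerouting near $C_\ell$, but choosing the split point at the \emph{extremal} bags of $w$ rather than at an arbitrary bag $C_\ell$ containing it.
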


\begin{proof}
Let $z$ be a vertex in a bag that lies strictly between $B_u$ and $B_v$. Since this bag is not an endpoint of the path $P$, $z$ belongs to at least two bags that lie between $B_u$ and $B_v$. Denote the path between $B_u$ and $B_v$ by $B_u = B_1 - B_2 - B_3 -  \ldots - B_t = B_v$. Let $B_i$ be the leftmost bag containing $z$, and let $B_j$ be the rightmost bag containing $z$. Suppose, by way of contradiction, that $z\notin N_G(d)$. 

We claim that there exists a path, $P_1$, from $u$ to $z$ such that none of its internal vertices belongs to $B_{i+1} \cup B_{i+2} \ldots \cup B_v$. The proof by induction on the number $i$. When $i=1$, we have that $B_i=B_u$. Then, since the vertices in $B_u$ form a clique, the claim is correct. Now, suppose that the claim holds for $i-1\geq 1$, and let us prove it for $i$.

Since the vertices of the bag $B_i$ form a maximal clique in the graph $G\setminus D'$, and the bag $B_i$ does not contain private vertices, we have that $B_i \subseteq B_{i-1} \cup B_{i+1}$ and $B_i \setminus B_{i+1} \subset B_{i-1}$ is non-empty. Consider a vertex $w \in B_i \setminus B_{i+1}$ and observe that $z$ and $w$ are adjacent in $G$ (since both belong to $B_i$), and further the vertex $w$ is not present in $B_{i+1} \cup B_{i+2} \ldots \cup B_v$. By induction, there is a path $P$ from $u$ to $w$ whose internal vertices do not belong to $B_{i} \cup B_{i+1} \ldots \cup B_v$. Appending the edge $(w,z)$ to $P$ gives us the path $P_1$.

Similarly, there is a path $P_2$ from $z$ to $v$ such that none of its internal vertices belongs to $B_{u} \cup \ldots \cup B_{i-2} \cup B_{i-1}$. Since $i < j$, the paths $P_1$ and $P_2$ have no common vertex except $z$, and there is no edge between an internal vertex of $P_1$ and an internal vertex of $P_2$.

Let $P'_1$ be the subpath of $P_1$ from $u'$ to $z$, where $u'$ is the last vertex in $P_1$ adjacent to $d$. Similarly, let $P'_2$ be the subpath of $P_2$ from $z$ to $v'$, where $v'$ is the first vertex in $P_2$ adjacent to $d$. We may assume that $P'_1$ and $P'_2$ do not contain chords, else we can replace $P'_1$ and $P'_2$ by a chordless subpath of $P_1'$ and  chordless subpath of $P_2'$, respectively, which will still contain $u',z$ and $v'$.
The cycle $d-P'_1-z-P'_2-d$ is a chordless cycle in $G$, contradicting the fact that $G\setminus(D' \setminus \{d\})$ is a chordal graph.
\end{proof}

We also need the following notation. Let $B_\ell$ be the leftmost bag on $P$ that contains a neighbor $v_\ell$ of $d$ such that $v_\ell$ does not belong to any bag to the right of $B_\ell$. Similarly, let $B_r$ be the rightmost bag on $P$ that contains a neighbor $v_r$ of $d$ such that $v_r$ does not belong to any bag to the left of $B_\ell$.

\begin{lemma}\label{lem:inComply2}
Let $B$ and $B'$ be two bags on $P$ that do not lie on the right of $B_\ell$ and such that $B$ lies on the left of $B'$. Then, it holds that $B\cap N_G(d)\subseteq B'\cap N_G(d)$.
\end{lemma}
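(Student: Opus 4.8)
The plan is to exploit the clique-tree structure of the path $P$ together with Lemma \ref{lem:inComply1}. Recall that $B_\ell$ is, by definition, the leftmost bag containing a neighbor of $d$ that "disappears" when we move right (i.e. a neighbor of $d$ that does not occur in any bag strictly to the right of $B_\ell$). So consider any two bags $B = B_i$ and $B' = B_j$ with $i < j$, both weakly to the left of $B_\ell$. I want to show $B_i \cap N_G(d) \subseteq B_j \cap N_G(d)$. Fix a vertex $x \in B_i \cap N_G(d)$; clearly $x \in N_G(d)$, so it suffices to prove $x \in B_j$.

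First I would dispose of the easy case $x \in D'$: since $|E_M| \le k^2$ and $D'$ contains both approximate solutions and the endpoints of mandatory edges, one can argue separately (or, more cleanly, the lemma is really about the structure of $G \setminus D'$ and the relevant statement concerns $x \in N_G(d) \setminus D'$, mirroring the hypothesis of Lemma \ref{lem:inComply1}). So assume $x \in N_G(d) \setminus D'$. Suppose toward a contradiction that $x \notin B_j$. By the subtree property (iii) of the clique forest, the set of bags containing $x$ is a connected subpath of $P$; since $x \in B_i$ and $i < j$ but $x \notin B_j$, every bag containing $x$ lies weakly to the left of $B_j$, hence strictly to the left of $B_\ell$'s successor — in particular, $x$ does not occur in any bag to the right of some bag $B_x$ with $B_x$ lying weakly to the left of $B_{j-1}$, and certainly strictly to the left of $B_\ell$. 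Thus $x$ is itself a "disappearing" neighbor of $d$ witnessing a bag strictly to the left of $B_\ell$ — but that does not immediately contradict the minimality of $B_\ell$, since $B_\ell$ is defined as leftmost, not rightmost; I need more, and this is where Lemma \ref{lem:inComply1} enters.

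The key step: I would apply Lemma \ref{lem:inComply1} with the roles of $u$ and $v$ played appropriately. Since $x$ does not reach $B_\ell$, and $B_\ell$ contains the neighbor $v_\ell$ of $d$, and $v_\ell$ does not occur to the right of $B_\ell$ while $x$ does not occur to the right of $B_x$ (which is strictly left of $B_\ell$): if $x$ and $v_\ell$ were non-adjacent, Lemma \ref{lem:inComply1} (taking $u = x$, $B_u$ the rightmost bag containing $x$, and $v = v_\ell$, $B_v = B_\ell$, noting $B_\ell$ is leftmost for $v_\ell$'s side) would force $d$ to be adjacent to every vertex of every bag strictly between $B_u$ and $B_\ell$; this is fine on its own, so the real contradiction must come from combining this with the fact that the bag $B_j$ (which lies between $B_x$ and $B_\ell$, or can be arranged to) must then be entirely in $N_G(d)$, yet $B_j \cap N_G(d)$ was supposed to be a proper superset of... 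Here I expect the argument to instead run through: because $x \notin B_j$ but $x \in B_i$, by maximality of the clique $B_j$ and the fact that $B_j$ has no private vertex (so $B_j \subseteq B_{j-1} \cup B_{j+1}$), one traces a path from $x$ forward along $P$ to some vertex $y \in B_\ell \setminus$ (bags left of it), obtaining a chordless $x$–$y$ path avoiding $d$, and then closes it through $d$ via the edge $dx$ and a shortest $d$–$y$ portion, producing a chordless cycle in $G \setminus (D' \setminus \{d\})$ exactly as in the proof of Lemma \ref{lem:inComply1} — contradicting chordality. The main obstacle, and the part I would need to be most careful about, is setting up this path-tracing so that the two sides genuinely have no connecting chord and the cycle through $d$ is genuinely chordless; this is essentially a re-run of the delicate induction inside Lemma \ref{lem:inComply1}, adapted to the asymmetric situation where only one side ($v_\ell$, via $B_\ell$) is pinned down, and I would lean on Lemma \ref{lem:inComply1} as a black box wherever possible rather than redoing the induction.
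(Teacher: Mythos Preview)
Your proposal over-complicates the argument because of a misreading of the definition of $B_\ell$. You correctly observe that if $x \in B \cap N_G(d)$ and $x \notin B'$, then by the subtree property the rightmost bag $B_x$ containing $x$ lies strictly to the left of $B'$, hence strictly to the left of $B_\ell$; and $B_x$ contains a neighbor of $d$ (namely $x$) not present in any bag to its right. You then claim this ``does not immediately contradict the minimality of $B_\ell$, since $B_\ell$ is defined as leftmost, not rightmost.'' But this \emph{is} the contradiction: $B_\ell$ is defined as the \emph{leftmost} bag on $P$ containing a neighbor of $d$ that does not extend to any bag further right, and $B_x$ has exactly this property while lying strictly to the left of $B_\ell$. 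The paper's proof is precisely this two-line argument.

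Everything you propose afterwards---invoking Lemma~\ref{lem:inComply1}, tracing paths through the clique forest, and closing a chordless cycle through $d$---is unnecessary; the definition of $B_\ell$ already does all the work. (Also, your aside about handling $x \in D'$ separately is not needed: the bags of $F$ are bags of the clique forest of $G \setminus D'$, so every vertex in a bag automatically lies outside $D'$.)
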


\begin{proof}
Suppose, by way of contradiction, that $B \cap N_G(d) \not \subseteq B' \cap N_G(d)$. However, this implies that $B$ contains a neighbor $v$  of $d$ such that $v$ does not belong to any bag to the right of $B$, which contradicts the choice of $B_\ell$.
\end{proof}

\begin{lemma}\label{lem:inComply3}
Let $B$ and $B'$ be two bags on $P$ that do not lie on the left of $B_r$ and such that $B$ lies to the right of $B'$. Then, it holds that $B\cap N_G(d)\subseteq B'\cap N_G(d)$. 
\end{lemma}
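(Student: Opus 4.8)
The plan is to mirror the proof of Lemma~\ref{lem:inComply2} almost verbatim, exploiting the left--right symmetry of the path $P$. Recall that $B_r$ was defined as the rightmost bag on $P$ containing a neighbor $v_r$ of $d$ that does not belong to any bag to the left of $B_r$; this is the exact mirror image of the defining property of $B_\ell$. Under the reflection of $P$ that swaps ``left'' and ``right'', the hypothesis of the present lemma (two bags $B,B'$ that do not lie on the left of $B_r$, with $B$ to the right of $B'$) becomes precisely the hypothesis of Lemma~\ref{lem:inComply2} (two bags that do not lie on the right of $B_\ell$, with $B$ to the left of $B'$), and the desired conclusion $B\cap N_G(d)\subseteq B'\cap N_G(d)$ is symmetric in form.

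Concretely, first I would argue by contradiction: suppose $B\cap N_G(d)\not\subseteq B'\cap N_G(d)$. Then $B$ contains some neighbor $v$ of $d$ with $v\notin B'$. Since $B$ lies strictly to the right of $B'$ and the set $T_v$ of bags containing $v$ is a subtree (hence a subpath) of $P$, the vertex $v$ cannot appear in any bag lying to the left of $B$ — because such a bag would force $B'$, which sits between it and $B$, to also contain $v$, contradicting $v\notin B'$. Hence $B$ is a bag containing a neighbor $v$ of $d$ such that $v$ belongs to no bag to the left of $B$; but $B$ does not lie on the left of $B_r$, contradicting the choice of $B_r$ as the \emph{rightmost} such bag (if $B$ lies strictly to the right of $B_r$ this is immediate, and if $B=B_r$ there is no contradiction with the definition but then the claimed inclusion is what we want for $B'$ to the left, so one checks this boundary case directly using that any neighbor of $d$ in $B=B_r$ that is not to the left must actually lie in $B'$ too — here one invokes that $B'$ sits between and the subpath-of-bags property again).

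The only subtlety — and the step I expect to need the most care — is the handling of the degenerate cases where $B$ or $B'$ coincides with $B_r$, or where $B_r$ does not exist at all (i.e. $d$ has no neighbor on $P$, or none with the required property), so that the phrase ``does not lie on the left of $B_r$'' must be interpreted sensibly; in the nonexistence case every relevant bag-intersection with $N_G(d)$ is handled trivially, and in the coincidence cases one falls back on the subtree property of $T_v$ exactly as above. Once these are dispatched, the lemma follows, and together with Lemmata~\ref{lem:inComply1}, \ref{lem:inComply2} it shows that removing the $O(1)$ bags $\{B_\ell, B_r\}$ (and possibly the two bags immediately adjacent to them, to cleanly separate the three regions ``left of $B_\ell$'', ``strictly between $B_\ell$ and $B_r$'', ``right of $B_r$'') from $P$ leaves subpaths each of which complies with $d$: the two outer regions satisfy case~2 of Definition~\ref{def:comply} by Lemmata~\ref{lem:inComply2} and~\ref{lem:inComply3}, and the middle region satisfies case~1 by Lemma~\ref{lem:inComply1}. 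Summing over all $d\in D'$ then yields the set $\cal B$ of size $O(|D'|)$ required for Lemma~\ref{lem:comply}.
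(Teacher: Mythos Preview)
Your approach—argue by symmetry with Lemma~\ref{lem:inComply2}—is exactly the paper's; its entire proof is the single sentence ``The proof of this lemma is symmetric to the proof of Lemma~\ref{lem:inComply2}.'' Your boundary-case worry about $B=B_r$ is unfounded and the accompanying discussion is muddled: since both $B$ and $B'$ are at or to the right of $B_r$ and $B$ lies strictly to the right of $B'$, necessarily $B>B'\geq B_r$, so $B\neq B_r$ and the contradiction with the maximality of $B_r$ is immediate (strictly speaking via the \emph{leftmost} bag containing $v$, which lies strictly to the right of $B'\geq B_r$—your intermediate claim that $v$ lies in no bag to the left of $B$ is slightly too strong, as $v$ could sit in bags strictly between $B'$ and $B$).
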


\begin{proof}
The proof of this lemma is symmetric to the proof of Lemma \ref{lem:inComply2}.
\end{proof}

By Lemmas \ref{lem:inComply1}--\ref{lem:inComply3}, each of the subpaths resulting from the removal of $B_\ell$ and $B_r$ from $P$ complies with $d$. Thus, we conclude that Lemma \ref{lem:comply} is correct.

\bigskip
{\noindent\bf Handling a Manageable Path.} We now examine a subpath of $P$, denoted by $Q$, which complies with every vertex $d\in D'$. We will devise reduction rules such that after applying them exhaustively, the number of vertices in the union of the bags of the path $Q$ will be bounded by $\OO(\kappa)$.

Let $B_1,B_2,\ldots,B_t$ denote the bags of $Q$ ordered from left to right. Moreover, denote $V(Q)=\bigcup_{i=1}^tB_i$ and $A=\bigcap_{i=1}^t B_i$. We partition $D'$ into two sets $D_a$ and $D_p$, where $D_a=\{d\in D': V(Q)\subseteq N_G(d)\}$ and $D_p=D'\setminus D_a$. Here the letters $a$ and $p$ stand for ``all'' and ``partial'', respectively. 
Definition \ref{def:comply} directly implies that the following observation is correct.

\begin{observation}\label{obs:DpWeak}
For every vertex $d\in D_p$, either {\bf (i)} $B_1\cap N_G(d)\subseteq B_2\cap N_G(d)\subseteq\ldots\subseteq B_t\cap N_G(d)$ or {\bf (ii)} $B_t\cap N_G(d)\subseteq B_{t-1}\cap N_G(d)\subseteq\ldots\subseteq B_1\cap N_G(d)$. In particular, either {\bf (i)} $N_G(d)\cap V(Q)\subseteq B_1$ or {\bf (ii)} $N_G(d)\cap V(Q)\subseteq B_t$.
\end{observation}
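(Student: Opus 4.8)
\textbf{Proof proposal for Observation \ref{obs:DpWeak}.}

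The plan is to read the claim directly off Definition \ref{def:comply} together with the definitions of $Q$, $D_a$ and $D_p$. Fix a vertex $d\in D_p$. By construction $Q$ complies with $d$ (this is exactly the property we arranged when passing to a manageable path, see Lemma \ref{lem:comply}), so one of the two conditions of Definition \ref{def:comply} holds for $d$. The first step is to rule out condition~1 of Definition \ref{def:comply}. If condition~1 held, then every bag $B_i$ of $Q$ would satisfy $B_i\subseteq N_G(d)$, hence $V(Q)=\bigcup_{i=1}^t B_i\subseteq N_G(d)$, which would place $d$ in $D_a$ rather than in $D_p$ — a contradiction. Therefore condition~2 of Definition \ref{def:comply} must hold for $d$, which is precisely the dichotomy stated in the first sentence of the observation: either $B_1\cap N_G(d)\subseteq B_2\cap N_G(d)\subseteq\cdots\subseteq B_t\cap N_G(d)$ (case~(i)) or the reverse chain of inclusions (case~(ii)).

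The second step is to derive the ``in particular'' part. Recall $V(Q)=\bigcup_{i=1}^t B_i$, so $N_G(d)\cap V(Q)=\bigcup_{i=1}^t\bigl(B_i\cap N_G(d)\bigr)$. In case~(i) this union equals its largest member $B_t\cap N_G(d)\subseteq B_t$, and in case~(ii) it equals $B_1\cap N_G(d)\subseteq B_1$. This gives exactly the two stated alternatives, $N_G(d)\cap V(Q)\subseteq B_1$ or $N_G(d)\cap V(Q)\subseteq B_t$. Since $d\in D_p$ was arbitrary, the observation follows.

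I do not anticipate a genuine obstacle here: the statement is an immediate unwinding of definitions, and the only point that requires a (trivial) argument is the exclusion of condition~1 of Definition \ref{def:comply}, which is forced by the very definition of the partition $D'=D_a\cup D_p$. The one thing to be careful about is matching the orientation of $P$ (left-to-right) used in Definition \ref{def:comply} with the enumeration $B_1,\dots,B_t$ of the bags of $Q$, but this is consistent by our fixed orientation of $P$ and the fact that $Q$ is a subpath of $P$.
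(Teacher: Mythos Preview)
Your argument is correct and is exactly the approach the paper intends: it states that the observation follows directly from Definition~\ref{def:comply}, and your write-up simply spells out why condition~1 of that definition is excluded for $d\in D_p$ and then reads off condition~2. The only cosmetic wrinkle is that your case~(i) yields $N_G(d)\cap V(Q)\subseteq B_t$ while the observation labels this as~(ii) (and vice versa); the disjunction is of course unaffected, and this label swap is already present in the paper's own phrasing.
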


We also denote $U=V(Q)\setminus(B_1\cup B_t)$. It is sufficient to ensure that $|U|=\OO(\kappa)$ since then, by Lemma \ref{lem:boundMaxCli1}, $|V(Q)|=\OO(\kappa)$. Thus, we can next suppose that $|U|>\delta$ where $\delta=$ \managepath.

For each pair of non-adjacent vertices in $D_a$, we apply Reduction Rule \ref{red:req pair}. The correctness of this operation is given by the following lemma.

\begin{lemma}
Let $u$ and $v$ be two distinct non-adjacent vertices in $D_a$. Then, every solution of size at most $k$ contains at least one of the vertices $u$ and $v$.
\end{lemma}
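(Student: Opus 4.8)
The plan is to show that if $u,v \in D_a$ are non-adjacent, then the manageable path $Q$ supplies a long chordless path between $u$ and $v$, forcing any chordal deletion set either to contain one of $u,v$ or to be too large. First I would recall the relevant facts: $D_a = \{d \in D' : V(Q) \subseteq N_G(d)\}$, so both $u$ and $v$ are adjacent to \emph{every} vertex in $V(Q) = \bigcup_{i=1}^t B_i$. Since $|U| > \delta = 2(k+1) + 6\kappa$ where $U = V(Q) \setminus (B_1 \cup B_t)$, and each bag has size at most $\kappa$ by Lemma~\ref{lem:boundMaxCli1}, the path $Q$ must contain many bags — in particular $t$ is large, roughly $|U|/\kappa$ is more than $2(k+1)/\kappa + 6$, so there are at least $2(k+1)$ ``fresh'' vertices appearing as we walk along $Q$.

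The key step is to extract from $Q$ a long chordless path $P^*$ in $G \setminus D'$ whose internal vertices all lie in $U$. The standard way: walk along the bags $B_1, \ldots, B_t$; because consecutive bags overlap and the decomposition is a clique forest, one can greedily build a path $w_1, w_2, \ldots, w_m$ where $w_j \in B_{i_j} \setminus B_{i_j + 1}$ for an increasing sequence of indices, so that $w_j$ and $w_{j+1}$ are adjacent (share a bag) but $w_j, w_{j'}$ for $|j - j'| \geq 2$ lie in disjoint ranges of bags and hence are non-adjacent — this gives a chordless path. Using the size bound on $U$ relative to $\delta$, we can ensure $m \geq 3$ (in fact we only need length at least $2$, i.e.\ at least one internal vertex, but to get a chordless \emph{cycle} through both $u$ and $v$ we need $P^*$ to have at least one internal vertex and endpoints that are themselves non-adjacent, or we need two vertex-disjoint such sub-paths). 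Concretely: take two internal vertices $a, b \in U$ that are non-adjacent (such a pair exists once $|U| > \kappa$ since $U$ is not a clique — any clique in $G \setminus D'$ has size at most $\kappa$), and a chordless path between them inside $G[U]$; then $u - a - \cdots - b - v - u$ closes up. Since $u, v$ are each adjacent to all of $a, b$ and to all internal vertices, we must be careful that this cycle is chordless: instead take the cycle $u - a - \cdots - b - v$ and append the edge $\{v,u\}$ only if $\{u,v\} \notin E(G)$, which holds by assumption; then the only possible chords come from $u$ or $v$ to interior vertices of the $a$--$b$ path. To kill those, choose $a$ to be the \emph{last} interior vertex adjacent to $u$ only in a suitable sense — but since $u \in D_a$ is adjacent to everything, this fails, so the correct move is the one used in Lemma~\ref{lem:inComply1}: work with $G \setminus (D' \setminus \{u\})$ and $G \setminus (D' \setminus \{v\})$, and use that if neither $u$ nor $v$ is deleted then some chordless cycle avoiding the rest of the solution survives.

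A cleaner route, and the one I expect the authors take: suppose a solution $S$ of size $\leq k$ contains neither $u$ nor $v$. Then $G \setminus S$ is chordal, so in particular $u$ together with $(V(Q) \setminus S)$ induces a chordal graph, and $u$ is adjacent to all of it; symmetrically for $v$. Now $|V(Q) \setminus S| \geq |U| - k > 2(k+1) + 6\kappa - k$ is still much larger than $2\kappa$, so $G[U \setminus S]$ contains two non-adjacent vertices $a, b$ (it is not a clique), and being chordal it contains an induced path $a = p_0 - p_1 - \cdots - p_\ell = b$ with $\ell \geq 2$. Since $u$ is adjacent to all of $p_0, \ldots, p_\ell$, the set $\{u, p_0, \ldots, p_\ell\}$ must be chordal, forcing $\ell \leq 2$ — so actually $\ell = 2$ and $p_1$ is the unique interior vertex with $u - p_0 - p_1 - p_2 - u$; but $\{u,p_1\} \in E(G)$ is a chord, contradiction. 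Hence no such induced path of length $2$ can exist without a chord through $u$, which means $G[\{u\} \cup (U \setminus S)]$ has no induced $P_3$ in its neighborhood, i.e.\ $N(u) \cap (U \setminus S)$ is a clique — but that clique has size $> 2\kappa > \kappa$, contradicting Lemma~\ref{lem:boundMaxCli1}. Therefore $S$ must contain $u$ or $v$, and by Reduction Rule~\ref{red:req pair} we may add the mandatory edge $\{u,v\}$.

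The main obstacle I anticipate is handling chords correctly: because $u$ and $v$ are adjacent to \emph{all} of $V(Q)$, naive cycles have chords everywhere, so the argument must instead exploit chordality of $G \setminus S$ (or $G \setminus (D' \setminus \{d\})$) to derive a contradiction from the existence of a large non-clique neighborhood, rather than exhibiting a single chordless cycle directly. Getting the counting right — that $|U| > 2(k+1) + 6\kappa$ leaves, after removing up to $k$ deleted vertices and the $\leq 2\kappa$ vertices of the two endpoint-bags-worth of slack, a set still too big to be a clique of size $\leq \kappa$ — is the routine but essential part.
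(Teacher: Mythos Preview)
Your ``cleaner route'' has a genuine gap. You correctly set things up: assume a solution $S$ with $u,v\notin S$; then $|U\setminus S|>\kappa$, so $U\setminus S$ is not a clique and contains non-adjacent vertices $a,b$. But from there you try to derive a contradiction using $u$ \emph{alone}: you claim that chordality of $G\setminus S$, together with $u$ being adjacent to every vertex of an induced path $p_0\text{--}\cdots\text{--}p_\ell$ in $U\setminus S$, forces $\ell\le 2$ and hence $N(u)\cap(U\setminus S)$ to be a clique. This is false. A vertex adjacent to all of an induced path just yields a fan graph, and fan graphs are chordal for every $\ell$. Chordality places no bound whatsoever on the independence number of the neighborhood of a single vertex, so the argument using only $u$ cannot close.

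The missing idea is to use $u$ and $v$ \emph{together}. Once you have non-adjacent $a,b\in U\setminus S$, look at $\{u,a,v,b\}$: the edges $\{u,a\},\{a,v\},\{v,b\},\{b,u\}$ are all present because $u,v\in D_a$ are adjacent to every vertex of $V(Q)\supseteq U$, while $\{u,v\}$ and $\{a,b\}$ are absent by hypothesis. That is an induced $C_4$ in $G\setminus S$, contradicting chordality. The paper does exactly this, organized slightly differently: it splits $U$ into two parts $U^1,U^2$, each of size $>k$, with no edges between them (using the clique-forest structure and the bag bound $\kappa$); then for any $S$ of size at most $k$ avoiding $u,v$ some $x\in U^1\setminus S$ and $y\in U^2\setminus S$ survive, and $\{u,v,x,y\}$ is the required $C_4$. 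Your counting (that $|U\setminus S|>\kappa$ already guarantees a non-adjacent pair via the maximal-clique bound) is in fact a little slicker than the paper's split, but the decisive $C_4$ through both $u$ and $v$ is what you are missing.
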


\begin{proof}
Since $|U|>\delta$ and the size of each bag is bounded by $\kappa$, standard arguments on weighted paths imply that there exists $i\in[t]$ such that $|\bigcup_{j=1}^i B_j|>\delta/2-\kappa$ and $|\bigcup_{j=i+1}^{t}B_j|>\delta/2-\kappa$; indeed, we may iteratively increase $i$ from 1 to $t$ until we reach the first time where it holds that $|\bigcup_{j=1}^i B_j|>\delta/2-\kappa$, in which case it will also hold that $|\bigcup_{j=i+1}^{t}B_j|>\delta/2-\kappa$. Denote $U^1=U\cap((\bigcup_{j=1}^{i-1} B_j)\setminus (B_1\cup B_i))$ and $U^2=U\cap((\bigcup_{j=i+2}^t B_j)\setminus (B_{i+1}\cup B_t))$. Then, again since the size of each bag is bounded by $\kappa$, it holds that $|U^1|,|U^2|>\delta/2-3\kappa\geq k+1$. Moreover, by the definition of a clique forest, $U^1\cap U^2=\emptyset$ and there is no vertex in $U^1$ that is adjacent to a vertex in $U^2$. Thus, for any pair of vertices $x\in U^1$ and $y\in U^2$, the subgraph induced by $\{u,v,x,y\}$ is a chordless cycle. However, any solution of size at most $k$ can only contain at most $k$ vertices from $U^1\cup U^2$, and thus, to hit all of these chordless cycles, it must contain at least one vertex among $u$ and $v$.
\end{proof}

Thus, from now on we can assume that $G[D_a]$ is a clique. However, by the definition of $A$, for every vertex in $A$ and every vertex in $V(Q)$, there exists a bag $B_i$, $i\in[t]$, which contains both of them, and therefore they are adjacent. We thus deduce that the following observation is correct.

\begin{observation}\label{obs:DaAclique}
Any two distinct vertices $v\in D_a\cup A$ and $u\in D_a\cup V(Q)$ are adjacent.
\end{observation}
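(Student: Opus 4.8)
The plan is to prove the observation by a short, exhaustive case analysis on which of the sets $D_a, A$ contains $v$ and which of the sets $D_a, V(Q)$ contains $u$, combining three facts already available: (a) $G[D_a]$ is a clique, which was just established; (b) by definition, $d\in D_a$ if and only if $V(Q)\subseteq N_G(d)$; and (c) every vertex of $A=\bigcap_{i=1}^t B_i$ lies in a common bag with every vertex of $V(Q)=\bigcup_{i=1}^t B_i$ and is therefore adjacent to it, as noted immediately before the observation. First I would record a structural remark that simplifies the bookkeeping: the unions $D_a\cup A$ and $D_a\cup V(Q)$ are in fact \emph{disjoint} unions, since $D_a\subseteq D'$ while $A\subseteq V(Q)\subseteq V(G\setminus D')$; in particular, any vertex on the left-hand side is automatically distinct from any vertex of $V(Q)$.

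Then I would split into four cases according to the defining alternatives of the two unions. If $v,u\in D_a$, then $v$ and $u$ are adjacent by fact (a). If $v\in D_a$ and $u\in V(Q)$, then $u\notin D'$ forces $u\ne v$, and $u\in V(Q)\subseteq N_G(v)$ by fact (b), so they are adjacent. If $v\in A$ and $u\in D_a$, then $v\in A\subseteq V(Q)\subseteq N_G(u)$ by fact (b), giving adjacency. Finally, if $v\in A$ and $u\in V(Q)$, then $v$ and $u$ are adjacent by fact (c). Since the hypothesis $v\in D_a\cup A$ means exactly ``$v\in D_a$ or $v\in A$'' and similarly for $u$, these four cases are exhaustive, and the observation follows.

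I do not expect a real obstacle here: the content of the observation is essentially packaging — it merges the clique property of $D_a$, the defining property of $D_a$, and the elementary clique-forest fact about $A$ into a single uniform statement that will be convenient later. The only point deserving a moment's care is ensuring the ``distinct'' hypothesis is respected in the mixed cases, and this is automatic because $D_a$ and $V(Q)$ lie on opposite sides of the modulator $D'$.
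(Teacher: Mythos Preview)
Your proposal is correct and matches the paper's approach: the paper does not give a separate proof but justifies the observation in the sentence immediately preceding it, citing exactly your facts (a), (b), and (c), and you have simply unfolded that sentence into the natural four-case analysis. One small expository slip: your remark that ``any vertex on the left-hand side is automatically distinct from any vertex of $V(Q)$'' is only true for the $D_a$-summand, not for $A\subseteq V(Q)$; in the case $v\in A$, $u\in V(Q)$ distinctness comes from the hypothesis of the observation, not from disjointness.
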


Let us now examine chordless cycles that contain vertices from $U$.

\begin{lemma}\label{lem:chordlessU1}
Let $C$ be a chordless cycle in $G$ that contains some vertex $u\in U$. Then, no vertex on $V(C)$ belongs to $D_a\cup A$, and both  neighbors of $u$ in $C$ do not belong to $D'\cup A$.
\end{lemma}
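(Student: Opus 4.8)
The plan is to exploit the clique-forest structure of $G\setminus D'$ together with the fact that the path $Q$ complies with every vertex of $D'$. First I would recall the key separation property of a clique forest: for a bag $B_i$ on $Q$ (with $1\le i\le t$), the vertex set $\bigl(\bigcup_{j<i}B_j\bigr)\setminus B_i$ is separated in $G\setminus D'$ from $\bigl(\bigcup_{j>i}B_j\bigr)\setminus B_i$ by $B_i$, and more importantly that a vertex $u\in U$ appears only in a contiguous sub-block of $B_1,\ldots,B_t$ that, by definition of $U=V(Q)\setminus(B_1\cup B_t)$, avoids both $B_1$ and $B_t$. Hence in $G\setminus D'$ the neighbours of $u$ all lie inside $\bigcup_{i\in \mathrm{supp}(u)}B_i$, a union of bags strictly between $B_1$ and $B_t$; in particular $N_{G\setminus D'}(u)\cap A$ may be nonempty, so the real content is about what happens when $A$-vertices or $D_a$-vertices sit on a chordless cycle through $u$.

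The main step is the following: suppose $C$ is chordless, $u\in U\cap V(C)$, and some vertex $w\in (D_a\cup A)\cap V(C)$. By Observation \ref{obs:DaAclique}, every vertex of $D_a\cup A$ is adjacent to every vertex of $D_a\cup V(Q)$; since $u\in U\subseteq V(Q)$, the vertex $w$ is adjacent to $u$. Now walk along $C$ from $u$: let $a$ and $b$ be the two neighbours of $u$ on $C$. If $w\notin\{a,b\}$ then $\{u,w\}$ is a chord of $C$ (as $u,w\in V(C)$ but the edge is not a $C$-edge), contradicting chordlessness. So $w\in\{a,b\}$, i.e. $w$ is one of the two neighbours of $u$ on $C$. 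This already forces $V(C)\cap(D_a\cup A)\subseteq\{a,b\}$. To finish the first claim I would rule out even that: if say $a\in D_a\cup A$, consider the third vertex $c$ of $C$ adjacent to $a$ (so $C=u-a-c-\cdots-u$, and $\ell(C)\ge 4$ so $c\ne b$ and $c\ne u$). Since $c\in V(Q)$ would contradict that $a$'s only $C$-neighbours can be the two adjacent on the cycle — here I'd argue via Observation \ref{obs:DaAclique} again: if $c\in D_a\cup V(Q)$ then $a$ is adjacent to $c$, fine, but then I need $c\notin V(Q)$; but every vertex of $C$ other than those in $D'$ lies in $G\setminus D'$, and the ones near $u$ are forced into $V(Q)$ by the clique-forest reachability from $u$. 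Pushing this, every vertex of $C$ within $G\setminus D'$ that is "close to $u$ along $C$" lies in $V(Q)$, hence is adjacent to $a$, producing a chord. This is the step I expect to be the main obstacle: carefully tracking, using the forest structure and compliance, that the portion of $C$ reachable from $u$ without leaving $V(Q)\cup D'$ is large enough to force a chord at $a$, thereby excluding $a\in D_a\cup A$ entirely; the symmetric argument handles $b$.

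Finally, for the second claim — that both neighbours $a,b$ of $u$ on $C$ avoid $D'\cup A$ — having already shown they avoid $D_a\cup A$, it remains to exclude $D_p$. Here I would invoke Observation \ref{obs:DpWeak}: a vertex $d\in D_p$ has $N_G(d)\cap V(Q)\subseteq B_1$ or $\subseteq B_t$. Since $u\in U$ is disjoint from both $B_1$ and $B_t$, $u\notin N_G(d)$, so $d$ cannot be adjacent to $u$ and hence cannot be a neighbour of $u$ on $C$. Thus neither $a$ nor $b$ lies in $D_p$, and combined with the first part neither lies in $D_a\cup A$, so $a,b\notin D'\cup A$, as claimed. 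Throughout I would be careful that "chordless cycle" means $\ell(C)\ge 4$ (a triangle has no chords vacuously, but a chordless cycle in the CVD sense has length at least four), so that $u$ has two distinct non-adjacent-on-$C$ partners to derive chords from.
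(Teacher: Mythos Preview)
Your overall approach is the same as the paper's, and your handling of the $D_p$ case (via Observation~\ref{obs:DpWeak}) is exactly right. But you are overcomplicating the exclusion of $a\in D_a\cup A$, and this is the one place where your proposal leaves a real gap---you yourself flag it as ``the main obstacle'' and do not actually carry it through.

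The fix is immediate and does not require looking at the vertex $c$ at all. You have already implicitly established that both neighbours $a,b$ of $u$ on $C$ lie in $V(Q)\cup D_a$: the clique-forest argument gives $N_{G\setminus D'}(u)\subseteq V(Q)$, and your $D_p$ argument shows $u$ has no neighbour in $D_p$, so any neighbour of $u$ in $G$ is in $V(Q)\cup D_a$. Now if $a\in D_a\cup A$, then since $b\in D_a\cup V(Q)$, Observation~\ref{obs:DaAclique} says $a$ and $b$ are adjacent in $G$. But in a chordless cycle of length at least four, the two neighbours of $u$ are nonadjacent; hence $\{a,b\}$ is a chord, contradiction. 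This is precisely the paper's argument: any $v\in V(C)\cap(D_a\cup A)$ is adjacent (by Observation~\ref{obs:DaAclique}) to $u$ \emph{and} to both of $u$'s cycle-neighbours, which forces a chord regardless of whether $v$ coincides with one of those neighbours or not. There is no need to walk further along $C$ to $c$ or to invoke reachability in the clique forest beyond what you already used to pin down $N_G(u)$.
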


\begin{proof}
First, by Observation \ref{obs:DpWeak}, we have that $u$ does not have neighbors (in $G$) in $D_p$, and therefore both neighbors of $u$ in $C$ do not belong to $D_p$. Thus, it remains to show that no vertex in $V(C)$ belongs to $D_a\cup A$. By Observation \ref{obs:DaAclique}, if at least one of vertex $v\in V(C)$ belongs to $D_a\cup A$, it is adjacent to $u$ in $G$ and it is either a neighbor of $u$ in $C$ adjacent in $G$ to the other neighbor of $u$ in $C$ or it is adjacent in $G$ to both neighbors of $u$ in $C$. In any case, if at least one of vertex $v\in V(C)$ belonged to $D_a\cup A$, the cycle $C$ would have contained a chord, contradicting the supposition that $C$ is a chordless cycle. We thus conclude that the lemma is correct.
\end{proof}

\begin{lemma}\label{lem:chordlessU2}
Let $C$ be a chordless cycle in $G$ that contains some vertex $u\in U$. Then, $C$ contains a path between a vertex in $B_1\setminus A$ and a vertex in $B_t\setminus A$ whose internal vertices belong to $U$ and one of them is $u$.
\end{lemma}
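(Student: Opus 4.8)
The plan is to trace $C$ outwards from $u$ in both directions, staying inside $U$ as long as possible, and to locate where it first leaves $U$ on each side. Write $C$ cyclically and let $c_{-q},\dots,c_0=u,\dots,c_p$ (all in $U$) be the maximal run of consecutive vertices of $C$ that lies in $U$ and contains $u$, and let $c_{p+1},c_{-q-1}$ be its two neighbours on $C$; by maximality $c_{p+1},c_{-q-1}\notin U$. This run is a proper arc of $C$, since an induced cycle cannot be contained in $G[U]$, as $G[U]$ is an induced subgraph of the chordal graph $G\setminus D'$. The subpath $\pi=c_{p+1}-c_p-\dots-u-\dots-c_{-q}-c_{-q-1}$ of $C$ has all its internal vertices in $U$, one of them being $u$, so $\pi$ is exactly the desired path once we show its two endpoints lie one in $B_1\setminus A$ and the other in $B_t\setminus A$.

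First I would locate $c_{p+1}$ and $c_{-q-1}$. The key preliminary fact is that for every $w\in U$ the bags of $F$ containing $w$ form a contiguous subinterval of $\{B_2,\dots,B_{t-1}\}$ (never $B_1$ or $B_t$): $w\notin B_1\cup B_t$ by definition of $U$, the bags containing $w$ form a subtree of $F$, and since each of $B_2,\dots,B_{t-1}$ is an internal node of the path $P$ and hence has degree exactly two in $F$, a subtree meeting $\{B_2,\dots,B_{t-1}\}$ while avoiding $B_1$ and $B_t$ must be contained in $\{B_2,\dots,B_{t-1}\}$. Consequently $N_G(w)\subseteq(\bigcup_{i=2}^{t-1}B_i)\cup D_a$: its neighbours in $G\setminus D'$ lie in bags containing $w$, it has no neighbour in $D_p$ by Observation~\ref{obs:DpWeak}, and every vertex of $D_a$ is a neighbour of $w$ by definition of $D_a$. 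Applying this with $w=c_p$ and $w=c_{-q}$, and using that no vertex of $C$ lies in $D_a\cup A$ (Lemma~\ref{lem:chordlessU1}), we get $c_{p+1},c_{-q-1}\in(\bigcup_{i=2}^{t-1}B_i)\setminus A\subseteq V(Q)$; being outside $U$, each lies in $(B_1\cup B_t)\setminus A$, and since spans are intervals a vertex in both $B_1$ and $B_t$ lies in $A$, so each of $c_{p+1},c_{-q-1}$ lies in exactly one of $B_1\setminus A$, $B_t\setminus A$.

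The crux is that $c_{p+1}$ and $c_{-q-1}$ do not lie on the same side; by symmetry I rule out ``both in $B_1$''. Suppose first $c_{p+1}=c_{-q-1}=:x\in B_1\setminus A$, so $C=x-w_1-\dots-w_s-x$ with all $w_j\in U$ and $s\ge 3$ (as $|C|\ge 4$); let $[1,r]$ be the interval of bags containing $x$, where $2\le r\le t-1$ (lower bound since $x$ is adjacent to the $U$-vertex $w_1$ and so lies in some $B_i$ with $i\ge 2$; upper bound since $x\notin A$). Since $C$ is induced, $x\not\sim w_2$ and $x\not\sim w_{s-1}$, hence the intervals of $w_2,w_{s-1}$ are disjoint from $[1,r]$, and then the intervals of $w_1$ and $w_s$ (which meet $[1,r]$ and meet those of $w_2,w_{s-1}$) must contain $r$; thus $w_1,w_s\in B_r$, so $w_1\sim w_s$, yet $w_1,w_s$ are non-adjacent on $C$ because $s\ge 3$ --- a chord, contradiction. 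Now suppose $c_{p+1}\ne c_{-q-1}$, both in $B_1$; then they are adjacent ($B_1$ is a clique), and since $\pi$ joins them through the interior vertex $u$ (length at least two), the edge $\{c_{p+1},c_{-q-1}\}$ is a chord of $C$ unless the remaining arc of $C$ is exactly that single edge, i.e. $V(C)=\{c_{p+1},c_{-q-1}\}\cup\{c_{-q},\dots,c_p\}$. If the $U$-run is just $\{u\}$, this makes $C$ a triangle, impossible; otherwise, writing $[1,r_1]$ and $[1,r_2]$ for the bag-intervals of $c_{p+1}$ and $c_{-q-1}$, the relations $c_p\sim c_{p+1}$ and $c_p\not\sim c_{-q-1}$ give $r_2<\ell(c_p)\le r_1$, while $c_{-q}\sim c_{-q-1}$ and $c_{-q}\not\sim c_{p+1}$ give $r_1<\ell(c_{-q})\le r_2$, a contradiction.

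Hence $\{c_{p+1},c_{-q-1}\}$ meets both $B_1\setminus A$ and $B_t\setminus A$, and $\pi$ is the required path. The only genuinely delicate part is this ``same side'' analysis, and within it the two degenerate configurations in which $C$ returns to an endpoint bag of $Q$; these are exactly where the interval structure of the clique forest and chordlessness must be combined, while everything else is bookkeeping about how a vertex of $U$ sees $G\setminus D'$ versus $D'$.
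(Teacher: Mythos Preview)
Your proof is correct and follows essentially the same strategy as the paper's: walk along $C$ from $u$ in both directions until exiting $U$, show the exit points lie in $(B_1\cup B_t)\setminus A$ via Lemma~\ref{lem:chordlessU1} and the bag structure of $U$, and argue they cannot both lie on the same side. Your same-side analysis via bag-intervals is more detailed than necessary --- once both exit points are in $B_1\setminus A$, either they coincide or the edge between them lies on $C$ (else it is a chord, which is the paper's one-line argument), and in both of those degenerate cases $V(C)\subseteq V(G)\setminus D'$, contradicting that $D'$ is a solution; this observation replaces the entire interval case-split.
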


\begin{proof}
Since $D'$ is an approximate solution, the cycle $C$ must contain at least one vertex that does not belong to $U$. Thus, by Lemma \ref{lem:chordlessU1}, we deduce that the cycle $C$ contains two subpaths, each between $u$ and a vertex in $(B_1\cup B_t)\setminus A$, whose only common vertex is $u$ and whose internal vertices belong to $U$. Moreover, one of these paths must contain an endpoint from $B_1\setminus A$ and the other from $B_t\setminus A$, else the cycle $C$ contains a chord corresponding to the edge between these two endpoints. 
Therefore, by concatenating the two paths, we obtain the desired subpath of $C$.
\end{proof}

We continue our examination of chordless cycles that contain vertices from $U$ in the context of separators.

\begin{lemma}\label{lem:cut}
Let $S$ be a minimal solution that contains at least one vertex from $U$. Then, there exists $i\in[t-1]$ such that {\bf (i)} $(B_i\cap B_{i+1})\setminus A\subseteq S$, and {\bf (ii)} $S\cap U\subseteq B_i\cap B_{i+1}$.
\end{lemma}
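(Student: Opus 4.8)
The plan is to exploit the fact that a chordless cycle through a vertex of $U$ must, by Lemma \ref{lem:chordlessU2}, contain a subpath from $B_1 \setminus A$ to $B_t \setminus A$ whose internal vertices all lie in $U$. Think of the path $Q$ as a ``tube'': by the clique-forest property, for each $i \in [t-1]$ the set $B_i \cap B_{i+1}$ separates $\bigcup_{j \le i} B_j$ from $\bigcup_{j > i} B_j$ in $G \setminus D'$, and more generally in $G$ the only way to cross from the left part of the tube to the right part without using $D'$ is through one of these adjacent-bag intersections. Restricting attention to $U = V(Q) \setminus (B_1 \cup B_t)$, the sets $(B_i \cap B_{i+1}) \setminus A$ for $i \in [t-1]$ form a ``nested chain of cuts'' separating $B_1 \setminus A$ from $B_t \setminus A$ within $U \cup (B_1 \setminus A) \cup (B_t \setminus A)$.

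First I would fix a minimal solution $S$ with $S \cap U \neq \emptyset$ and pick any $u_0 \in S \cap U$; say $u_0 \in B_{i^*}$ for some $2 \le i^* \le t-1$ (note $u_0 \notin B_1 \cup B_t$). I would then argue that $S$ must contain a full cut $(B_i \cap B_{i+1}) \setminus A$ for some $i$. Suppose not: then for every $i \in [t-1]$ there is a vertex $w_i \in (B_i \cap B_{i+1}) \setminus A$ with $w_i \notin S$. Using these $w_i$ and the clique structure of each bag, one can build a path inside $V(Q)$ from a vertex of $B_1 \setminus A$ to a vertex of $B_t \setminus A$ that avoids $S$ entirely and whose intermediate vertices lie in $U$ — concretely, walk bag by bag: inside $B_i$ all vertices are mutually adjacent, so hop from $w_{i-1}$ to $w_i$; this produces a chordless-cycle ``witness'' path avoiding $S$. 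To finish this part I need to close such a path into a chordless cycle not hit by $S$, contradicting that $S$ is a solution. The natural way is to observe that by minimality of $S$ (or by Lemma \ref{lem:chordlessU2} read in reverse), the presence of $u_0 \in S$ as a ``necessary'' deletion means there genuinely is a chordless cycle through $U$ that $S$ destroys only via $u_0$, and such a cycle's $B_1\setminus A$–to–$B_t\setminus A$ subpath cannot be rerouted around $S$ through the tube unless some $w_i$-style detour exists; the existence of all $w_i$ lets us reroute, so the cycle survives — contradiction. This pins down an index $i$ with (i) $(B_i \cap B_{i+1}) \setminus A \subseteq S$.

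For part (ii), among all indices $i$ satisfying (i) I would take the one for which $u_0$'s bag index $i^*$ satisfies $i^* \le i$ and $i$ is minimal, or symmetrically bracket $u_0$ between a ``left'' cut and a ``right'' cut both contained in $S$; then every vertex of $S \cap U$ must lie in the region strictly between the leftmost and rightmost such separating cuts contained in $S$, and by minimality of $S$ no vertex of $S\cap U$ can lie outside a single bag-intersection — otherwise we could split $S$ and contradict minimality (a vertex of $S$ in the ``interior'' that is separated from every surviving chordless cycle by the already-chosen cut $(B_i\cap B_{i+1})\setminus A$ would be redundant). Cleanly: once $(B_i \cap B_{i+1}) \setminus A \subseteq S$, any chordless cycle meeting $U$ on the right side of this cut is already hit, and likewise on the left if a left cut is in $S$; minimality then forbids $S$ from containing $U$-vertices on both sides of $B_i \cap B_{i+1}$, forcing $S \cap U \subseteq B_i \cap B_{i+1}$.

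The main obstacle I anticipate is the rerouting/closing argument in part (i): turning ``all $w_i$ avoid $S$'' into an actual chordless cycle in $G$ (not merely a path) that $S$ fails to hit. This requires care about chords — the concatenated walk through the $w_i$ may have chords, so I would extract a chordless sub-path and then attach it, via a vertex of $D' \setminus S$ or via the endpoints guaranteed by Lemma \ref{lem:chordlessU2}, into a genuine chordless cycle; handling the interaction with $D_a \cup A$ (which by Lemma \ref{lem:chordlessU1} cannot appear on such cycles) and with $D_p$ (Observation \ref{obs:DpWeak}) will be where most of the technical bookkeeping goes. The nestedness of the cuts, which follows directly from the definition of a clique forest, is the structural fact that makes ``leftmost/rightmost cut in $S$'' well-behaved and is what ultimately delivers (ii).
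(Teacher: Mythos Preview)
Your plan is essentially the paper's approach. For (i), the paper also starts from a witness chordless cycle $C$ with $V(C)\cap S=\{u\}$ (from minimality), extracts the $B_1\setminus A$--to--$B_t\setminus A$ subpath $P$ of $C$ guaranteed by Lemma~\ref{lem:chordlessU2}, and, assuming no cut $(B_i\cap B_{i+1})\setminus A$ lies in $S$, builds an alternate path $P'$ between the same endpoints through $V(Q)\setminus(S\cup A)$; the obstacle you flag (why the rerouted object yields a genuine \emph{chordless} cycle in $G\setminus S$) is exactly where the paper spends its effort, via a small general observation about when a graph must contain a chordless cycle plus case analysis using Observation~\ref{obs:DpWeak} and the redundancy of $D'$ to place a vertex of $D_p$ on $C$.

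For (ii) you are making life harder than necessary. Once you have an index $i$ with $(B_i\cap B_{i+1})\setminus A\subseteq S$, take any $u\in S\cap U$ and its witness cycle $C'$ with $V(C')\cap S=\{u\}$; by Lemma~\ref{lem:chordlessU2} the cycle $C'$ traverses $U$ from $B_1\setminus A$ to $B_t\setminus A$, so by the clique-forest separator property it meets $(B_i\cap B_{i+1})\setminus A\subseteq S$, forcing $u\in B_i\cap B_{i+1}$. No leftmost/rightmost bookkeeping is needed --- the \emph{same} index $i$ works for every $u\in S\cap U$ simultaneously, which is precisely how the paper argues it.
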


\begin{proof}
{\noindent\bf Property (i).} Since $S$ is a minimal solution, $G$ contains a chordless cycle $C$ with a vertex $u\in S\cap U$ and no other vertex from $S$. By Lemma \ref{lem:chordlessU2}, $C$ contains a path between a vertex in $B_1\setminus A$ and a vertex in $B_t\setminus A$ whose set of internal vertices includes $u$ and is a subset of $U$. In particular, since each bag induces a clique while $C$ is a chordless cycle, $C$ contains a path $P$ between a vertex $x\in B_1\setminus A$ and a vertex in $y\in B_t\setminus A$ whose set of internal vertices is a non-empty subset of $V(Q)\setminus A$, and such that $(V(C)\setminus V(P))\cap V(Q)=\emptyset$. Thus, $x$ and $y$ are not adjacent. Moreover, by Lemma \ref{lem:chordlessU1}, $V(C)\cap (D_a\cup A)=\emptyset$. Thus, since $D'$ is a redundant approximate solution, $C$ contains a vertex $d\in D_p$. 


Suppose, by way of contradiction, that there does not exist $i\in[t-1]$ such that $(B_i\cap B_{i+1})\setminus A\subseteq S$. Then, $G$ has a path $P'$ between $x$ and $y$ whose internal vertices belong to $V(Q)\setminus(S\cup A)$. Since $(V(C)\setminus V(P))\cap V(Q)=\emptyset$, it holds that $(V(C)\setminus V(P))\cap V(P')=\emptyset$, which implies that $C$ contains a path between $x$ and $y$ whose set of internal vertices is disjoint from the one of $V(P')$. Observe that if a graph $H$ contains a vertex $a$ with two non-adjacent neighbors, $b$ and $c$, such that $H\setminus\{a\}$ has a chordless path between $b$ and $c$ with at least one vertex that is not a neighbor of $a$, then $H$ has a chordless cycle. On the one hand, by the definition of a clique forest, any vertex in $V(G)\setminus (V(Q)\cup D')$ cannot be adjacent to both a vertex in $B_1\setminus A$ and a vertex in $B_t\setminus A$, and it is adjacent to no vertex in $U$. On the other hand, Observation \ref{obs:DpWeak}  implies that any vertex in $D_p$ also satisfies this property. Thus, $V(P')\cap U=\emptyset$, since any vertex in this set fits the above description of the vertex $a$ where $H=G[(V(C)\setminus V(P))\cup V(P')]$, which is a subgraph of the chordal graph $G\setminus S$. Without loss of generality, we have that $P'$ contains a subpath $p-q-r$ where $p,q\in B_1\setminus A$ and $r\in B_t\setminus A$. Let $P''$ denote a shortest path between $p$ and $r$ in $G[(V(C)\setminus V(P))\cup (V(P')\setminus\{q\})]$, which contains at least two internal vertices (since it must contain a vertex from $V(C)\setminus V(P)$, else $P'$ would have had a chord, and such a vertex cannot be adjacent to both $p$ and $r$). Every vertex on $P''$ should be adjacent to $q$, else $q$ fits the above description of the vertex $a$. Since $P'$ is a chordless path and $(V(C)\setminus V(P))\cap V(Q)=\emptyset$, $P''$ does not contain any vertex from $B_1$ that is not $p$, and the only vertex from $B_t$ that is not $r$ and which $P''$ can contain is the neighbor of $r$. Overall, this implies that $P''$ contains a vertex from $V(C)\setminus V(P)$ adjacent to both $q$ and a vertex in $B_t\setminus A$, which is a contradiction.

\medskip
{\noindent\bf Property (ii).} It remains to show that $S\cap U\subseteq B_i\cap B_{i+1}$. To this end, we consider some arbitrary vertex $u\in S\cap U$ and show that it belongs to $B_i\cap B_{i+1}$. Since $S$ is a minimal solution, $G$ contains a chordless cycle $C'$ such that $V(C')\cap S=\{u\}$. By Lemma \ref{lem:chordlessU2}, $C'$ contains a subpath between a vertex in $B_1\setminus A$ and a vertex in $B_t\setminus A$ whose internal vertices belong to $U$. Thus, by the definition of a clique forest, $C'$ must contain a vertex from $(B_i\cap B_{i+1})\setminus A$. However, we have already shown that $(B_i\cap B_{i+1})\setminus A\subseteq S$, and therefore it must hold that $u\in B_i\cap B_{i+1}$ (since otherwise we reach a contradiction to the fact that $V(C')\cap S=\{u\}$).
\end{proof}

Let ${\cal W}$ be the family of each subset $W\subseteq (B_1\cup B_t)\setminus A$ of size at most $k$ for which there exists an index $i\in[t-1]$ such that $W=(B_i\cap B_{i+1})\setminus(A\cup U)$ and $|(B_i\cap B_{i+1})\cap U|\leq k$. We can easily bound the size of the family ${\cal W}$ as follows. 

\begin{lemma}\label{lem:boundW}
$|{\cal W}|\leq 2k+1$. 
\end{lemma}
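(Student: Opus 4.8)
The plan is to show that the sets $W \in \mathcal{W}$ are totally ordered by inclusion, so that $|\mathcal{W}|$ cannot exceed the length of the longest chain of distinct subsets of $(B_1 \cup B_t) \setminus A$ that actually arise, and then observe that along the path $Q$ this chain has bounded length. First I would recall from Observation~\ref{obs:DpWeak} that for every $d \in D_p$ the sets $B_i \cap N_G(d)$ form a monotone (either non-decreasing or non-increasing) sequence as $i$ runs from $1$ to $t$, and that $N_G(d) \cap V(Q)$ is contained in $B_1$ or in $B_t$. The key structural point is that each separator $(B_i \cap B_{i+1}) \setminus A$, restricted to $(B_1 \cup B_t) \setminus A$ (i.e.\ after removing the ``internal'' part $U$), is built entirely out of such neighbourhood pieces, together with the fact that $B_i \cap B_{i+1} \setminus A$ behaves monotonically along $Q$ by the clique-forest structure: a vertex appearing in $B_1 \cup B_t$ but not in $A$ appears in a contiguous block of bags, one end of which is $B_1$ or $B_t$.

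Concretely, I would argue as follows. For $i \in [t-1]$ let $S_i = (B_i \cap B_{i+1}) \setminus A$ and let $W_i = S_i \setminus U = (B_i \cap B_{i+1}) \setminus (A \cup U) \subseteq (B_1 \cup B_t) \setminus A$. A vertex $w \in (B_1 \cup B_t) \setminus A$ that lies in $W_i$ must, by the subtree property of the clique forest and the fact that $w \notin A$, lie in $B_j$ for a contiguous interval of indices $j$ that includes $1$ or $t$ (if $w$ were in an internal interval it would lie in $U$). Hence the set of indices $i$ with $w \in W_i$ is itself an interval, and moreover it is a \emph{prefix} $\{1,\dots,i_w\}$ (if $w \in B_1$) or a \emph{suffix} $\{i_w,\dots,t-1\}$ (if $w \in B_t$). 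Therefore $W_i \cap B_1$ is non-increasing in $i$ and $W_i \cap B_t$ is non-decreasing in $i$; in particular for $i < i'$ we cannot have both $W_i \not\subseteq W_{i'}$ and $W_{i'} \not\subseteq W_i$ unless one side of $B_1$ and the other side of $B_t$ both jump, but that still leaves the family $\{W_i\}$ totally ordered after we split according to $B_1$ and $B_t$. More carefully: the sequence $(|W_i \cap B_1|)_i$ is non-increasing and takes at most $k+1$ distinct values (since $|W_i| \le k$), and likewise $(|W_i \cap B_t|)_i$ is non-decreasing and takes at most $k+1$ distinct values; each distinct $W \in \mathcal{W}$ is determined by (and determines a distinct value of) the pair, but since one coordinate only goes down and the other only goes up as $i$ increases, the number of distinct pairs realised is at most $(k+1) + (k+1) - 1 = 2k+1$.

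Thus $|\mathcal{W}| \le 2k+1$, as claimed; I would write the final count as: starting from $i=1$ and increasing $i$, each step either leaves $W_i$ unchanged, decreases $|W_i \cap B_1|$, or increases $|W_i \cap B_t|$, and since the first quantity starts at most $k$ and the second ends at most $k$, there are at most $2k$ strict changes, giving at most $2k+1$ distinct sets. The main obstacle I anticipate is making the ``contiguous block with an endpoint at $B_1$ or $B_t$'' claim fully rigorous: one must use that $W_i$ excludes $U = V(Q) \setminus (B_1 \cup B_t)$ precisely to rule out vertices whose bag-interval is strictly internal, and one must handle vertices lying in $B_1 \cap B_t$ (which may sit in $A$ or not) carefully — but by definition $W_i$ also excludes $A$, and a vertex of $(B_1 \cap B_t) \setminus A$ would, by the subtree property, lie in every $B_j$ and hence in $A$, a contradiction, so $(B_1 \cup B_t) \setminus A$ splits cleanly into a ``$B_1$-only'' part and a ``$B_t$-only'' part, which is exactly what the monotonicity argument needs.
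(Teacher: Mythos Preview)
Your approach is correct and essentially the same as the paper's: both arguments rest on the observation that, by the subtree property of the clique forest, $W_i\cap B_1$ is non-increasing and $W_i\cap B_t$ is non-decreasing in $i$, so that the distinct members of $\mathcal W$ form a chain in which at most $2k$ strict jumps can occur (each bounded by the size constraint $|W|\le k$), giving $|\mathcal W|\le 2k+1$. Two cosmetic remarks: the opening detour through Observation~\ref{obs:DpWeak} and $D_p$ is not needed here (only the clique-forest property is used), and the final ``starts at most $k$ / ends at most $k$'' count should be run over the subsequence of indices $i$ that actually realise elements of $\mathcal W$, since for arbitrary $i$ you have no bound on $|W_i|$.
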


\begin{proof}
Let $i$ be the smallest index in $[t-1]$ for which there exists $W^i\in {\cal W}$ such that $W^i=(B_i\cap B_{i+1})\setminus(A\cup U)$, and let $j$ be the largest or which there exists $W^j\in {\cal W}$ such that $W^j=(B_i\cap B_{i+1})\setminus(A\cup U)$. Then, by the definition of a clique forest, for every set $W\in {\cal W}$ it holds that $W\subseteq W^i\cup W^j$. Furthermore, the sets in ${\cal W}$ can be sorted by $W_1,W_2,\ldots,W_{|{\cal W}|}$ such that for all $r\in[|{\cal W}|-1]$, $W_{r+1}\cap B_1\subseteq W_r\cap B_1$ and $W_r\cap B_t\subset W_{r+1}\cap B_t$. We thus conclude that $|{\cal W}|\leq 2k+1$. 
\end{proof}

We proceed by associating a separator with each set $W\in {\cal W}$ as follows. First, let ${\cal I}_W$ denote the set of all indices $i\in[t-1]$ such that $W=(B_i\cap B_{i+1})\setminus(A\cup U)$. Now, let $i_W$ denote an index in ${\cal I}_W$ that minimizes $|(B_i\cap B_{i+1})\cap U|$ (if there are several choices, choose one arbitrarily). We further denote $M=\bigcup_{W\in{\cal W}}\left((B_{i_W}\cap B_{i_W+1})\cap U\right)$. Observe that by Lemmata \ref{lem:boundMaxCli1} and \ref{lem:boundW}, $|M|=\OO(k^2)$. Thus, it is sufficient to argue that there exists a vertex in $U\setminus M$ that can be removed from $G$ (since as long as $|U|>\delta$, we will be able to find such a vertex). To this end, we will need the following lemma.

\begin{lemma}\label{lem:irrelUM}
Let $u\in U\setminus M$. If $(G,k)$ is a yes-instance, then it has a solution $S$ of size at most $k$ that does not contain the vertex $u$.
\end{lemma}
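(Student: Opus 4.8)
The plan is to show that if $(G,k)$ has a solution, then it has a solution avoiding the fixed vertex $u \in U \setminus M$, by taking an arbitrary minimal solution $S$ and ``rerouting'' it through one of the prefabricated separators $(B_{i_W} \cap B_{i_W+1}) \cap U$ that we collected into $M$. First I would take a minimal solution $S$ of size at most $k$. If $S \cap U = \emptyset$ we are done, since then $u \notin S$. Otherwise, apply Lemma~\ref{lem:cut} to obtain an index $i \in [t-1]$ with $(B_i \cap B_{i+1})\setminus A \subseteq S$ and $S \cap U \subseteq B_i \cap B_{i+1}$. Set $W := (B_i \cap B_{i+1}) \setminus (A \cup U)$; since $W \subseteq S$ and $|S| \le k$ we have $|W| \le k$, and since $S \cap U \subseteq B_i \cap B_{i+1}$ we have $|(B_i \cap B_{i+1}) \cap U| \le |S| \le k$, so $W \in {\cal W}$. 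Then the index $i_W$ is well-defined, and by the choice of $i_W$ minimizing $|(B_i \cap B_{i+1}) \cap U|$ over the relevant indices, the separator $L := (B_{i_W} \cap B_{i_W+1}) \cap U$ satisfies $|L| \le |(B_i \cap B_{i+1}) \cap U| \le |S \cap U|$, and moreover $L \subseteq M$, so $u \notin L$.

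Next I would define the new candidate solution $S' := (S \setminus U) \cup L \cup W'$, where $W' := (B_{i_W} \cap B_{i_W+1}) \setminus (A \cup U) = W$ (this equals $W$ by the definition of ${\cal I}_W$, so in fact $W' \subseteq S$ already). Thus really $S' = (S \setminus U) \cup L$. The size bound is immediate: $|S'| \le |S \setminus U| + |L| \le |S \setminus U| + |S \cap U| = |S| \le k$, and $u \notin S'$ because $u \in U$ while $S \setminus U$ avoids $U$ and $u \notin L$. It remains to prove that $S'$ is a solution, i.e.\ $G \setminus S'$ is chordal. Suppose not; let $C$ be a chordless cycle in $G \setminus S'$. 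Since $S$ is a solution and $C$ avoids $S \setminus U$, the cycle $C$ must use at least one vertex of $S \cap U$; in particular $V(C) \cap U \neq \emptyset$. Now invoke Lemma~\ref{lem:chordlessU2}: $C$ contains a subpath between a vertex of $B_1 \setminus A$ and a vertex of $B_t \setminus A$ all of whose internal vertices lie in $U$. By the structure of the clique forest, such a path must pass through the clique $B_{i_W} \cap B_{i_W+1}$ (it separates $B_1$-side from $B_t$-side on the path $Q$), and by Lemma~\ref{lem:chordlessU1} the path meets neither $A$ nor $D_a$, so it must use a vertex of $(B_{i_W} \cap B_{i_W+1}) \setminus A = L \cup W$. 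But $L \cup W \subseteq S'$, contradicting $C \subseteq G \setminus S'$. Hence $G \setminus S'$ is chordal, and $S'$ is the desired solution avoiding $u$.

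The main obstacle I expect is the clique-forest separation argument establishing that any $B_1$--$B_t$ path with internal vertices in $U$ must cross $(B_{i_W}\cap B_{i_W+1})\setminus A$: one has to argue carefully that because $B_1$ lies to one side and $B_t$ to the other side of the edge $\{B_{i_W}, B_{i_W+1}\}$ of $F$ restricted to $Q$, the bag intersection $B_{i_W}\cap B_{i_W+1}$ is a separator in $G\setminus D'$ between $B_1\setminus(B_{i_W}\cap B_{i_W+1})$ and $B_t\setminus(B_{i_W}\cap B_{i_W+1})$, and that the path's internal vertices, lying in $V(Q)$, cannot jump across this separator without hitting it — this uses property~(iii) of forest decompositions (connectivity of $T_z$) together with the fact that $U$-vertices have no neighbors outside $V(Q)\cup D'$ and no neighbors in $D_p$. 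A secondary subtlety is making sure the replacement of $S\cap U$ by $L$ does not destroy chordality elsewhere: this is exactly what Lemma~\ref{lem:chordlessU1} buys us, since it confines any newly created chordless cycle to again use a $U$-vertex and hence a $B_1$--$B_t$ crossing path, returning us to the separator argument.
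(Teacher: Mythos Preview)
Your proposal is correct and follows essentially the same route as the paper: take a minimal solution $S$, invoke Lemma~\ref{lem:cut} to get the index $i$, set $W=(B_i\cap B_{i+1})\setminus(A\cup U)$, swap $S\cap U$ for the precomputed piece $L=(B_{i_W}\cap B_{i_W+1})\cap U\subseteq M$, and use Lemma~\ref{lem:chordlessU2} together with the clique-forest separator property to show any surviving chordless cycle would have to meet $(B_{i_W}\cap B_{i_W+1})\setminus A=L\cup W\subseteq S'$. One small slip worth fixing: the bound $|(B_i\cap B_{i+1})\cap U|\le |S\cap U|$ (and hence $W\in{\cal W}$ and $|L|\le|S\cap U|$) does not follow from $S\cap U\subseteq B_i\cap B_{i+1}$ as you wrote, but from the \emph{other} conclusion of Lemma~\ref{lem:cut}, namely $(B_i\cap B_{i+1})\setminus A\subseteq S$, combined with $U\cap A=\emptyset$.
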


\begin{proof}
Suppose that $(G,k)$ is a yes-instance, and let $S$ be a solution of minimum size. Assume that $u\in S$, else we are done. By Lemma \ref{lem:cut}, there exists $i\in[t-1]$ such that $(B_i\cap B_{i+1})\setminus A\subseteq S$ and $S\cap U\subseteq B_i\cap B_{i+1}$.
Denote $W=(B_i\cap B_{i+1})\setminus(A\cup U)$. Since $|S|\leq k$ and $W\subseteq S$, we have that $W\in{\cal W}$. Denote $R=(B_{i_W}\cap B_{i_W+1})\cap U$ and $T=(B_i\cap B_{i+1})\cap U$. Since $u\in U\setminus M$, it holds that $u\notin R$. Moreover, since $S\cap U\subseteq B_i\cap B_{i+1}$, it holds that $u\in T$. By the definition of $i_W$, we have that $|R|\leq |T|$. Thus, to show that the lemma is correct, it is sufficient to show that $S'=(S\setminus T)\cup R$ is a solution.

Suppose, by way of contradiction, that $S'$ is not a solution. Then, since $S$ is a solution of minimum size, there exist a chordless cycle $C$ and a vertex $v\in T$ such that $v\in V(C)$ and $V(C)\cap S'=\emptyset$. Since $T\subseteq U$, by Lemma \ref{lem:chordlessU2}, $C$ contains a path between a vertex in $(B_1\cap B_2)\setminus A$ and a vertex in $(B_{t-1}\cap B_t)\setminus A$ whose internal vertices belong to $U$. In particular, by the definition of a clique forest, $V(C)\cap ((B_{i_W}\cap B_{i_W+1})\setminus A)\neq\emptyset$. However, we have that $(B_{i_W}\cap B_{i_W+1})\setminus A = R\cup W\subseteq S'$, which contradicts the fact that $V(C)\cap S'=\emptyset$.
\end{proof}

We are now ready to present our reduction rule.

\begin{reduction}\label{rule:UM}
Let $u\in U\setminus M$. Remove the vertex $u$ from the graph $G$ and add an edge between any two non-adjacent vertices in $N_G(u)$.
\end{reduction}

\begin{lemma}
Reduction Rule \ref{rule:UM} is safe.
\end{lemma}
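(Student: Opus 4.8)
The plan is to show that Reduction Rule~\ref{rule:UM} preserves the set of yes/no answers, i.e., that $(G,k)$ is a yes-instance if and only if $(G',k)$ is a yes-instance, where $G'$ is the graph obtained from $G$ by deleting $u\in U\setminus M$ and turning $N_G(u)$ into a clique. The two directions are quite different in character, and the bulk of the work is in the forward direction.

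For the forward direction, suppose $(G,k)$ is a yes-instance. By Lemma~\ref{lem:irrelUM}, it then has a solution $S$ of size at most $k$ with $u\notin S$. I claim that this same set $S$ is a solution for $G'$. Since $G'$ differs from $G$ only by the removal of $u$ and the addition of edges inside $N_G(u)$, any chordless cycle $C$ in $G'\setminus S$ either already existed in $G\setminus S$ (contradicting that $S$ hits all chordless cycles of $G$) or uses one of the newly added edges $\{a,b\}$ with $a,b\in N_G(u)$. In the latter case I would route through $u$: since $u\notin S$ and $u$ is adjacent to both $a$ and $b$ in $G$, I obtain from $C$ (by replacing each new edge $\{a,b\}$ by the length-two path $a-u-b$, being slightly careful that $u$ is used only once; if two new edges appear one can shortcut through $u$ directly) a closed walk in $G\setminus S$ through $u$, from which a chordless cycle in $G\setminus S$ can be extracted — contradiction. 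Here I should double-check the standard fact that if a graph contains a vertex $a$ with two non-adjacent neighbors $b,c$ joined by a chordless path avoiding $a$ and using a non-neighbor of $a$, then it has a chordless cycle, which the paper already invokes in the proof of Lemma~\ref{lem:cut}; the edge-contraction-to-clique operation is designed precisely so that this goes through.

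For the reverse direction, suppose $(G',k)$ is a yes-instance with solution $S'$ of size at most $k$. I claim $S'$ is also a solution for $G$. Any chordless cycle $C$ in $G\setminus S'$ that uses $u$ has, by Lemma~\ref{lem:chordlessU2}, a path through $u$ between a vertex of $B_1\setminus A$ and a vertex of $B_t\setminus A$; in particular the two neighbors of $u$ on $C$, call them $a$ and $b$, lie in $N_G(u)$, hence are adjacent in $G'$. Replacing the subpath $a-u-b$ of $C$ by the edge $\{a,b\}$ yields a shorter closed walk in $G'\setminus S'$; I then argue it still contains a chordless cycle (it has length at least $4$ since $C$ had length at least $4$ and we removed only one vertex, and $a,b$ were non-adjacent in $G$ so the cycle $C$ had no $ab$-chord — but in $G'$ they are adjacent, so $\{a,b\}$ closes a genuine cycle; extract a chordless one), contradicting that $S'$ hits all chordless cycles of $G'$. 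If $C$ does not use $u$, it is already a chordless cycle in $G\setminus S'$... wait, $G$ has fewer edges than $G'$ on $V(G)\setminus\{u\}$, so a chordless cycle of $G$ avoiding $u$ need not be chordless in $G'$ — but it does survive as a cycle, and since $S'$ hits every chordless cycle of $G'$ it hits some chordless subcycle, which still meets $V(C)$; one needs the slightly more careful statement that $S'$ hits $C$ because any cycle in $G'$ contains a chordless cycle whose vertices it must hit, and $C\subseteq V(G')$. I would phrase this as: $G\setminus u$ is a subgraph of $G'$, so $G\setminus(S'\cup\{u\})$ is a subgraph of $G'\setminus S'$, which is chordal, hence $G\setminus(S'\cup\{u\})$ is chordal; combined with the fact that $S'$ kills all chordless cycles through $u$ (shown above), $G\setminus S'$ is chordal.

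The main obstacle I anticipate is the careful handling of how many new edges a single chordless cycle $C$ of $G'\setminus S$ can use and the bookkeeping of re-routing through $u$ without creating multi-edges or reusing $u$ — but this is exactly the kind of argument the paper has already deployed (the "vertex $a$ with two non-adjacent neighbors joined by a chordless path" trick appears verbatim in the proof of Lemma~\ref{lem:cut}), so I expect the rule's correctness to reduce cleanly to Lemma~\ref{lem:irrelUM} plus that standard structural observation, with essentially no new ideas needed. The genuinely substantive content — that some vertex of $U\setminus M$ is safe to delete — has already been front-loaded into Lemma~\ref{lem:irrelUM}, so this safeness proof should be short.
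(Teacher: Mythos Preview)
Your forward direction is essentially the paper's argument. The clean way to handle your worry about multiple new edges is to note that $N_G(u)$ is a clique in $G'$, so a chordless cycle of $G'$ meets $N_G(u)$ in at most two vertices and hence uses at most one new edge; after that, replacing $\{a,b\}$ by $a\!-\!u\!-\!b$ yields a chordless cycle of $G$.

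Your backward direction has a genuine gap. When $C$ is a chordless cycle of $G$ through $u$ with $|V(C)|=4$, replacing the path $a\!-\!u\!-\!b$ by the edge $\{a,b\}$ leaves only three vertices, i.e.\ a triangle, and there is no chordless cycle to extract in $G'$. Your sentence ``it has length at least $4$ since $C$ had length at least $4$ and we removed only one vertex'' is simply false in this case. The same issue arises when $u\notin V(C)$ but $G'[V(C)]$ becomes chordal because enough diagonals of $C$ land in $N_G(u)$; your fallback claim that ``$G\setminus(S'\cup\{u\})$ is a subgraph of $G'\setminus S'$, which is chordal, hence chordal'' is also wrong, since spanning subgraphs of chordal graphs need not be chordal (remove an edge from $K_4$).

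The paper closes this gap not by a generic rerouting trick but by invoking the specific structure built up in this subsection. From $u\in U$ one gets $N_G(u)\subseteq V(Q)\cup D_a$ (Observation~\ref{obs:DpWeak} and the clique-forest structure), and then Observation~\ref{obs:DaAclique} forces the two non-adjacent neighbours $v,w\in N_G(u)$ on $C$ to lie in $V(Q)\setminus A$. Redundancy of $D'$ then produces a vertex $p\in D_p$ on $C$; finally Observation~\ref{obs:DpWeak} places both neighbours of $p$ on $C$ in a single bag $B_1$ or $B_t$, contradicting that $v,w$ are non-adjacent. In other words, the problematic cycles are ruled out structurally, not rerouted. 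Lemma~\ref{lem:irrelUM} alone does not suffice for the backward direction; you genuinely need these observations here.
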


\begin{proof}
Let $G'$ be the graph resulting from the application of this rule. For the forward direction, suppose that $(G,k)$ is a yes-instance, and let $S$ be a solution of minimum size. By Lemma \ref{lem:irrelUM}, we can assume that $u\notin S$, and therefore $S\subseteq V(G')$. Thus, to show that $(G',k)$ is a yes-instance, we need to prove that $S$ hits every chordless cycle in $G'$. Let $C$ be a chordless cycle in $G'$. Suppose that this cycle does not exist in $G$, else it is clear that $S$ hits it. Then, $C$ contains an edge between two vertices $v,w\in N_G(u)$ that are non-adjacent in $G$. Observe that since $C$ is a chordless cycle and $G'[N_G(u)]$ is a clique, $V(C)\cap N_G(u)=\{v,w\}$. Thus, by replacing $\{v,w\}$ by $\{v,u\}$ and $\{u,w\}$, we obtain a chordless cycle in $G$. Since $S$ hits this cycle and $u\notin S$, it holds that $S$ also hits $C$.

For the backward direction, suppose that $(G',k)$ is a yes-instance, and let $S$ be a solution of minimum size. To show that $(G,k)$ is a yes-instance, it is sufficient to show that $S$ is also a solution to $(G,k)$. Since $V(G')\subseteq V(G)$, it is clear that $S\subseteq V(G)$. We need to prove that $S$ hits every chordless cycle in $G$. Let $C$ be a chordless cycle in $G$. Suppose that this cycle does not exist in $G'$, else it is clear that $S$ hits it. 
Furthermore, suppose that $G'[V(C)\setminus\{u\}]$ does not contain a chordless cycle, else again it is clear that $S$ hits $C$. We get that $C$ contains two vertices $v,w\in N_G(u)$ that are not adjacent in $G$. Since $u\in U$, Observation \ref{obs:DpWeak} and the definition of a clique forest imply that $N_G(u)\subseteq V(Q)\cup D_a$. By Observation \ref{obs:DaAclique}, we deduce that $v,w\notin D_a\cup A$, and that $C$ contains at most one vertex from $D_a\cup A$ (since any two vertices from $D_a\cup A$ are adjacent to each other and to both $v$ and $w$). Since $D'$ is a redundant approximate solution, $C$ must contain a vertex $p\in D_p$. Since $G'[V(C)\setminus\{u\}]$ does not contain a chordless cycle, the neighbors of $p$ on $C$ belong to $N_G(u)$, and we can assume w.l.o.g that these neighbors are $v$ and $w$. However, since $v$ and $w$ are not adjacent there cannot be a bag that contains both of them, which results in a contradiction to Observation \ref{obs:DpWeak}.
\end{proof}

\subsection{Unmarking Irrelevant and Mandatory Edges}\label{sec:unmark}
Recall that our instance includes a set $E_I$ of irrelevant edges and a set $E_M$ of mandatory edges. It is clear that we can unmark each irrelevant edge (these edges were marked only for the sake of clarity of the analysis of our kernel). However, to unmark mandatory edge, we need the following operation.

\begin{reduction}\label{rule:unmarkMandatory}
For every mandatory edge $\{x,y\}$ introduce $k+1$ pairs of new vertices, $\{x_1, y_1\}, \{x_2, y_2\},\ldots,\{x_{k+1},y_{k+1}\}$, and for each pair $\{x_i,y_i\}$ add the edges $\{x,x_i\}, \{x_i,y_i\}$ and $\{y_i,y\}$. Moreover, unmark the edge $\{x,y\}$.
\end{reduction}

\begin{lemma}
Reduction Rule \ref{rule:unmarkMandatory} is safe.
\end{lemma}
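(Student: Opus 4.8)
The plan is to show that Reduction Rule~\ref{rule:unmarkMandatory} produces an equivalent instance in which the edge $\{x,y\}$ need no longer be flagged as mandatory, because the gadget of $k+1$ vertex-disjoint paths of length~$3$ between $x$ and $y$ forces any solution of size at most $k$ to contain $x$ or $y$ regardless of the marking. First I would fix a mandatory edge $\{x,y\}$ and denote by $(G,k,E_I,E_M)$ the instance before the rule and by $(G',k,E_I,E_M')$ the instance after, where $E_M' = E_M \setminus \{\{x,y\}\}$ and the $3(k+1)$ new vertices and $3(k+1)$ new edges have been added; all new edges are relevant. I would then prove the two directions of equivalence.

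For the forward direction, suppose $(G,k,E_I,E_M)$ is a yes-instance with solution $S$, $|S|\le k$. By definition of a mandatory edge, $S$ contains $x$ or $y$. I claim $S$ is still a solution for $(G',k,E_I,E_M')$: since $S$ already hit every chordless cycle of $G$ (including those that exist only because $\{x,y\}\in E(G)$), it suffices to check the new chordless cycles of $G'$. Any chordless cycle of $G'$ using a new vertex must, by the structure of the gadget (each $x_i$ has neighbourhood $\{x,y_i\}$, each $y_i$ has neighbourhood $\{y,x_i\}$), use the whole path $x - x_i - y_i - y$ together with the edge $\{x,y\}$, i.e.\ it is the triangle-like $4$-cycle $x,x_i,y_i,y$ --- but that $4$-cycle has the chord $\{x,y\}$, so it is not chordless; hence no new chordless cycle arises through a single gadget path, and two distinct gadget paths together with $x,y$ form a $6$-cycle $x,x_i,y_i,y,x_j,y_j$ whose only possible chord is $\{x,y\}$, which is present, so again it is not chordless. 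Thus $G'$ has no chordless cycle that does not already live in $G$, and $S$ hits all of them. Finally $S$ meets no edge of $E_M'\subseteq E_M$ vacuously except that it already met every edge of $E_M$, so the mandatory-edge condition is preserved.

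For the backward direction, suppose $(G',k,E_I,E_M')$ is a yes-instance with solution $S'$ of minimum size. First I argue that $S'$ contains $x$ or $y$: the gadget contains $k+1$ internally vertex-disjoint paths $x - x_i - y_i - y$, and adding the edge $\{x,y\}$ --- which is present in $G$ and whose relevant/mandatory status is irrelevant to being a chordless cycle --- to any one of them yields... no, more carefully: in $G'$ the edge $\{x,y\}$ is still present, so I instead observe that if $S'$ avoids both $x$ and $y$ then $S'$ must hit each of the $k+1$ disjoint middle pieces; but the cycles $x,x_i,y_i,y$ are not chordless in $G'$. The correct argument: because $\{x,y\}$ is mandatory in $G$, I do not need to re-derive it --- instead I show that a minimum solution $S'$ in $G'$ may be assumed to avoid all gadget vertices (if $x_i\in S'$, a chordless cycle through $x_i$ would have to be $x,x_i,y_i,y$, not chordless, so $x_i$ hits nothing and can be removed; similarly $y_i$), hence $S'\subseteq V(G)$; then $S'$ hits every chordless cycle of $G$ because every chordless cycle $C$ of $G$ either already occurs in $G'$ (and is hit) or is destroyed in $G'$ only if $\{x,y\}$ ceased to exist, which it did not. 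What remains is to re-establish the annotation: the new instance $(G',k,\emptyset,E_M')$ must be such that the invariant ``$x$ or $y$ lies in every solution of size $\le k$'' either is no longer required (since $\{x,y\}$ is off the mandatory list) or is automatically enforced by the gadget; indeed if $S'$ avoids $x$ and $y$ then, after the cleanup above, $S'\cap V(G)$ must hit the chordless $4$-cycle obtained in $G$ from a gadget path plus the edge $\{x,y\}$? --- but that is not chordless either.

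The main obstacle, and the step I expect to require the most care, is exactly this: the added length-$3$ paths together with the (relevant) edge $\{x,y\}$ do \emph{not} create chordless cycles, so the gadget does not by itself force $x$ or $y$ into a solution of $G'$ in the naive way one uses for feedback vertex set. The resolution --- and the heart of the proof --- is that the rule is safe precisely because it is combined with (and applied after) deleting the edge $\{x,y\}$ from $E(G)$, or equivalently because the mandatory flag has already done its job in earlier reduction rules and the gadget only needs to ensure that \emph{no new} chordless cycle is created and \emph{no old} chordless cycle is destroyed. So the plan concretely is: (1) delete $\{x,y\}$ from $E(G)$ as part of the rule (or note it was never relevant), (2) add the $k+1$ paths $x - x_i - y_i - y$, which now \emph{do} create chordless $4$-cycles $x,x_i,y_i,y$ (since the chord $\{x,y\}$ is gone), $k+1$ of them internally disjoint, forcing any size-$\le k$ solution to contain $x$ or $y$, thereby replacing the mandatory-edge constraint by an ordinary chordless-cycle constraint; (3) check that every chordless cycle $C$ of the original $G$ with $\{x,y\}\subseteq V(C)$ is still represented, because either $C$ never used the edge $\{x,y\}$ (then it survives verbatim) or $C$ did use it, in which case $V(C)\setminus\{x\}$ or $V(C)\setminus\{y\}$ together with a gadget path still yields a chordless cycle on the same separator, so hitting the new cycles hits the old ones; and (4) conclude equivalence in both directions and that $E_M$ can now be emptied. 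Verifying item (3) --- that no chordless cycle ``hidden behind'' the mandatory edge is lost when we delete $\{x,y\}$ and reroute through the gadget --- is the delicate part and will use that the gadget paths are internally disjoint and of length exactly~$3$.
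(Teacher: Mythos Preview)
Your proposal contains a genuine error that sends the whole argument off track. You claim that the $4$-cycle $x - x_i - y_i - y - x$ has $\{x,y\}$ as a chord and is therefore not chordless. This is wrong: the edge $\{x,y\}$ is one of the four edges of this cycle (it is the edge closing the path $x - x_i - y_i - y$ back to $x$), not a chord. The only candidate chords of this $4$-cycle are $\{x,y_i\}$ and $\{x_i,y\}$, and neither is present, so each cycle $x - x_i - y_i - y - x$ \emph{is} chordless in $G'$. Consequently the gadget really does create $k+1$ chordless $4$-cycles that pairwise share only $x$ and $y$, and any solution of size at most $k$ for $G'$ is forced to contain $x$ or $y$. (You were right, by contrast, that the $6$-cycle through two gadget paths has $\{x,y\}$ as a chord; but that observation is irrelevant once the $4$-cycles are seen to be chordless.)

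Because of this mis-step you concluded that the rule must secretly delete the edge $\{x,y\}$ and then embarked on a delicate rerouting analysis (your item~(3)). None of that is needed, and it is not what the rule does. With the $4$-cycles recognised as chordless, the proof is short and matches the paper: in one direction, a solution $S$ to $(G,k,E_I,E_M)$ contains $x$ or $y$ (by the mandatory constraint) and hence hits every chordless cycle of $G'$, since any such cycle either lies entirely in $V(G)$ (and is then a chordless cycle of $G$) or uses some $x_i$ and therefore contains both $x$ and $y$. In the other direction, a solution $S'$ to $(G',k,E_I,E_M')$ must contain $x$ or $y$ by the $k+1$ disjoint chordless $4$-cycles argument, and $S'\cap V(G)$ then hits every chordless cycle of $G$ (these are also chordless cycles of $G'$) and satisfies the mandatory constraint on $\{x,y\}$.
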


\begin{proof}
Let $(G',k,E_I=\emptyset,E_M')$ be the instance resulting from the application of this rule. Each edge $\{x,y\}$ is contained (in $G'$) in $k+1$ cycles of size $4$ which do not share vertices other than $x$ and $y$. Therefore, and solution of size at most $k$ to $(G',k)$ must contain at least one of the vertices $x$ and $y$. Thus, since any chordless cycle in $G$ is also present in $G'$ and we have only unmarked the edge $\{x,y\}$, we conclude that if $(G',k,E_I,E_M')$ is a yes-instance, so is $(G,k,E_I,E_M)$. On the other hand, any solution to $(G,k,E_I,E_M)$ hits all of the chordless cycles in $G'$ since each of these chordless cycles is either present in $G$ or contain both of the vertices $x$ and $y$. Therefore, if $(G,k,E_I,E_M)$ is a yes-instance, so is $(G',k,E_I,E_M')$.
\end{proof}

Recall that $|E_M|\leq k^2$. Hence, the total number of newly added vertices does not exceed $\OO(k^3)$.

\subsection{The Number of Vertices in the Kernel}\label{sec:chordProof}

In this section, we obtained an approximate solution $\widetilde{D}$ of size $f(k)$ and a redundant approximate solution $D$ of size $\OO(k\cdot f(k))$. Then, we examined the clique forest $F$ associated with the chordal graph $G\setminus D'$ where $|D'|=\OO(|D|+k^2)$. To this end, we considered a set $\cal P$ of degree-2 paths that together cover all of the nodes of the forest, and showed that $|{\cal P}|=\OO(|D'|\cdot\Delta')$. Recall that $\Delta'=\OO(k\cdot f(k))$. We removed $\OO(|D'|)$ bags, each of size $\kappa=\OO(|\widetilde{D}|^3 \cdot k + |\widetilde{D}| \cdot \Delta' \cdot k^3)$, from each path $P\in{\cal P}$, and considered each of the resulting subpaths $Q$. We showed the number of vertices in the union of the bags of the path $Q$ will be bounded by $\OO(\kappa)$. Finally, we added $\OO(k^3)$ new vertices to unmark mandatory edges. Thus, we conclude that the number of vertices in our kernel is bounded by 

\smallskip
$\begin{array}{ll}
&\OO(|{\cal P}|\cdot |D'|\cdot \kappa)\\
= & \OO(|D'|^2\cdot\Delta'\cdot(|\widetilde{D}|^3 \cdot k + |\widetilde{D}| \cdot \Delta' \cdot k^3))\\
=& \OO(f(k)^3k^3\cdot(f(k)^3k + f(k)^2k^4))\\
=& \OO(f(k)^5k^4\cdot(f(k) + k^3))
\end{array}$

\smallskip
Recall that by Lemma \ref{cor:newApprox}, we can assume that $f(k)=$ \NEWinitapproxsolsizek. Thus, at this point, we obtain a kernel of size \NEWinitsizeker.

\subsection{A Better Kernelization Algorithm}\label{sec:better}

Finally, we present a bootstrapping trick that will exploit the nature of our approximation algorithm to obtain a kernel of size \NEWsizeker. Recall that at this point, where we have already run our kernelization algorithm once, it holds that $n=$ \NEWinitsizeker. Now, we again recall that \cdel admits an \NEWapproxfactor-factor approximation algorithm \cite{manuscript}. Currently, it holds that $f(k)=$ \NEWapproxsolsizek\ rather than $f(k)=$ \NEWinitapproxsolsizek. Thus, if we rerun our kernelization procedure, obtaining a kernel of size $\OO(f(k)^5k^4\cdot(f(k) + k^3))$ (see Section \ref{sec:chordProof}), it now holds that this size is upper bounded by \NEWsizeker. This concludes the proof of correctness of Theorem \ref{thm:chordal}.

\section{Conclusion}
In this paper we obtained a polynomial kernel for \cdel\ of size \NEWsizeker. The new kernel significantly improves over the previously known $\OO(k^{161}\log^{58}k)$ sized kernel. 
We believe that the notion of independence degree and the bootstrapping trick used in our kernelization procedure could be useful in designing polynomial kernels for other {\sc $\cal F$-Vertex (Edge) Deletion} problems, where  $\cal F$ is characterized by 
an infinite set of forbidden induced graphs. We conclude the paper with the following open problems. 
\begin{itemize}
\item Design a polynomial kernel for \cdel\ of size $\OO(k^c)$ for some fixed constant $c\leq 5$.
\item Design a constant-factor approximation algorithm for \cdel.
\item Does there exist an FPT algorithm for \cdel\ with running time $c^k n^{\OO(1)}$, for some fixed constant $c$?
\end{itemize}

\bibliographystyle{siam}
\bibliography{references}


\appendix


\section{Bounding the Size of a Maximal Clique}\label{app:maxClique}
 

Let $(G,k)$ be an instance of \cdel whose independent degree is bounded $\Delta$,
    and let $D$ be some (possibly approximate) solution to this instance.
In this appendix, we will bound the size of a maximal clique in the graph $G$.
We will show that if the graph admits a chordal deletion set of size at most $k$, then for any maximal clique $K$ in $G \setminus  D$, it is safe to remove all but a bounded number vertices of $K$.
Formally, we will show the following lemma.
\begin{lemma}\label{lem:boundMaxCli}
    Let $(G,k)$ be an instance of \cdel whose independent degree is bounded $\Delta$, and let $D$ be a solution to this instance.
    Then in polynomial time we can produce an equivalent instance $(G',k')$ such that 
    $k' \leq k$, and the size of any maximal clique in $G'$ is bounded by $c\cdot(|D|^3 \cdot k + |D| \cdot \Delta \cdot (k+2)^3)$.
    (Here $c$ is some constant independent of the input.)
\end{lemma}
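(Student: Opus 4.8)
The plan is to follow the strategy of Marx~\cite{Marx10} for bounding maximal cliques, adapting it so that every step runs in polynomial time rather than in FPT time. Fix a maximal clique $K$ in $G\setminus D$. The key structural idea is that, although $K$ may be huge, most of its vertices are ``interchangeable'' from the point of view of chordless cycles: a chordless cycle can use at most two vertices of $K$ (since $K$ is a clique, three vertices of $K$ on a cycle would create a chord). So the only reason a vertex $w\in K$ can be essential is that it is distinguished from the rest of $K$ either by its adjacency to the solution $D$ (or to the annotated sets, but here we only have $D$), or by the way the connected components of $G\setminus(D\cup K)$ attach to it versus to the rest of $K$. First I would partition $K$ according to the ``trace'' each vertex leaves on $D$: for $w\in K$, record $N_G(w)\cap D$. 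Using the bound on the independent degree, $N_G(w)\cap D$ cannot be arbitrary — more precisely, for each $d\in D$ the neighbours of $d$ inside the clique $K$ form a clique, so they are highly structured, and the number of distinct ``types'' with respect to a single $d$ is small once we also account for the independent degree $\Delta$. Summing over $D$, the number of distinct $D$-traces among vertices of $K$ is bounded by roughly $|D|\cdot\Delta$ times a polynomial factor; this is where the $|D|\cdot\Delta\cdot(k+2)^3$ term will come from after we also mark, per type, a bounded number of representatives to preserve solvability.

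Next I would handle the interaction of $K$ with the rest of the graph. Consider the connected components $C_1,C_2,\dots$ of $G\setminus(D\cup K)$. Each component $C_j$ attaches to a subset $S_j\subseteq K$; as in the analysis of independent components in Section~\ref{sec:chordIndependent}, the sets $S_j$ that can ``see'' a non-neighbour inside a component must in fact be contained in a common clique neighbourhood, and a chordless cycle through $C_j$ together with two vertices of $K$ exists precisely when those two vertices are separated by $C_j$ in an appropriate sense. The upshot is that we only need to keep, for each ``pattern'' of attachment (of which there are polynomially many, controlled by $|D|$ and the clique-forest structure), a number of representative vertices of $K$ that is bounded by a polynomial in $k$ — because a solution of size $\le k$ can kill at most $k$ of the components, so $k+2$ representatives per relevant pattern suffice to force the solution to behave the same way it would in $G$. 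This yields the $|D|^3\cdot k$ term: there are $O(|D|^3)$ relevant separator-type patterns (indexed by, roughly, pairs/triples of $D$-traces together with a minimal cut), and $O(k)$ representatives per pattern. Concretely, the reduction rule is: mark all the ``distinguished'' vertices identified above (at most $c\cdot(|D|^3k+|D|\Delta(k+2)^3)$ of them), and delete any unmarked vertex of $K$, adding, if necessary, edges to keep the closed neighbourhood of the deleted vertex a clique (so that no new chordless cycle is introduced), exactly as in Reduction Rule~\ref{rule:neiCli}/\ref{rule:UM}. Safeness is argued by showing a minimal solution can be rerouted to avoid any single unmarked vertex: replace its role by a marked representative of the same type, and check, using the component/trace analysis, that no chordless cycle is newly exposed.

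The main obstacle I expect is the bookkeeping in the safeness proof: showing that rerouting a minimal solution off an unmarked vertex of $K$ does not create an uncovered chordless cycle. One must argue simultaneously about (a) chordless cycles that used the deleted vertex only via $K$–$K$ chords (easy, since any two clique vertices are interchangeable up to their $D$-trace and component-attachment type), and (b) chordless cycles that route through a component $C_j$ — here the minimal-cut / separator argument (analogous to Lemma~\ref{lem:cut}) is needed to ensure that a representative of the same type is separated from the rest of $K$ in the same way, so the cycle is preserved or already hit. A secondary, more mundane difficulty is making every classification step (computing $D$-traces, component-attachment patterns, and the relevant minimal cuts) run in polynomial time; this is exactly the point where our treatment must diverge from Marx's FPT algorithm, but since maximum independent set, connectivity, and minimum cuts are all polynomial-time computable — and the number of patterns is polynomially bounded once $|D|$ and $\Delta$ are fixed polynomials in $k$ — this goes through. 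Because of the length and the close parallel to~\cite{Marx10}, the full details are deferred to the remainder of this appendix.
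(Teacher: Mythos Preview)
Your proposal misses the central machinery that makes Marx's argument (and the paper's adaptation) work: the notion of \emph{$t$-dangerous} and \emph{$t^*$-dangerous} vertices and their \emph{witnesses} in $K$. The paper does not partition $K$ by the $D$-trace $N_G(w)\cap D$; the number of such traces can be as large as $2^{|D|}$, and the independent-degree bound gives no control over it --- the fact that the neighbours of a fixed $d$ inside $K$ form a clique says nothing about how many distinct traces occur across $K$. Instead one looks \emph{outward} from $K$: for each $t\in D$, a vertex $v\notin D\cup K$ is $t$-dangerous (resp.\ $t^*$-dangerous) if there is a path from $v$ to some $u\in K$ whose internal vertices avoid label $t$ and $u$ lacks label $t$ (resp.\ $u$ is a relevant $t$-neighbour non-adjacent to $v$). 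One then marks, for every clique of dangerous vertices, a bounded set of \emph{witness} vertices in $K$, so that after deleting any $k$ vertices, each dangerous vertex that still has a witness in $K$ has a \emph{marked} one. The $|D|\cdot\Delta\cdot(k+2)^3$ term arises because the $t^*$-dangerous vertices, being relevant neighbours of $t$, have independence number at most~$\Delta$, hence a clique cover of size~$\Delta$, with $(k+2)^3$ marked witnesses per clique. The $|D|^3\cdot k$ term arises from marking, for every triple of labels $l_1,l_2,l_3\in D$, $k+1$ vertices of $K$ carrying labels $l_1,l_2$ but not $l_3$ (plus an extra case when the cycle meets $D$ in exactly two vertices). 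Neither term has anything to do with counting traces or with minimal cuts.

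Your component-attachment analogy to Lemma~\ref{lem:cut} does not substitute for this: that lemma exploits the linear structure of a degree-$2$ path in the clique forest, where the separators $(B_i\cap B_{i+1})\setminus A$ are nested along the path; a maximal clique $K$ sits at a single node, and the components of $G\setminus(D\cup K)$ have no such nested attachment to $K$. The deletion step is also different from what you sketch: unmarked vertices of $K$ are simply removed, with no edges added. Safeness holds because any chordless cycle through an unmarked vertex decomposes into fragments $F=V(C)\cap D$ and paths $P_1,\dots,P_s$ in $G\setminus D$, at most one $P_i$ meets $K$ (in at most two vertices), and the witness-marking guarantees a parallel chordless cycle through a marked vertex with strictly fewer unmarked vertices of $K$. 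You should rebuild the argument around dangerous vertices and witnesses, adapting Marx's Lemmas 11--19 so that each step either marks polynomially many vertices of $K$ or outputs a forced vertex or a new mandatory edge; the independent-degree bound enters only to cap the independence number of the $t^*$-dangerous vertices.
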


Our proof of this lemma is an adaptation of the work of Marx~\cite{Marx10} with a few modifications.
Specifically, the lemmas in~\cite{Marx10} construct a so called ``necessary set'' of vertices with the property that one of the vertices in this set must be part of any solution of size $k$.
We modify these lemmas to ensure that a necessary set output by them always has at most two vertices. 
Observe that such a necessary set is either a vertex that must be part of any solution of size at most $k$, or a mandatory edge.
We note that Jansen and Pilipczuk \cite{JansenPili2016} also give similar result, inspired by the results of ~\cite{Marx10}.
Given a redundant approximate solution $\widehat{D}$, the size of a maximal clique in $G \setminus \widehat{D}$ can be bounded by $\OO(|\widehat{D}|^3k)$.
However the present method for computing a redundant approximate solution implies that $|\widehat{D}| \geq |D| \cdot k$, and hence Lemma~\ref{lem:boundMaxCli} gives a better upper-bound.

\begin{lemma}[\cite{JansenPili2016}]\label{lem:boundMaxCli-old}
    The size of each bag in the clique forest $F$ is bounded by $c\cdot|D'|^3k$, where $c$ is some constant independent of the input.
\end{lemma}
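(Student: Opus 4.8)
Since this statement is quoted from Jansen and Pilipczuk~\cite{JansenPili2016}, the plan is to reconstruct their argument, which adapts Marx's~\cite{Marx10} analysis of large cliques. I first observe that the statement is in fact subsumed by Lemma~\ref{lem:boundMaxCli} of this appendix: applying that lemma with the solution taken to be $D'$ (legitimate, since $D'$ is a solution and, by Lemma~\ref{lem:boundIndDeg}, the instance has independent degree at most $\Delta$) produces an equivalent instance in which every maximal clique of $G\setminus D'$, hence every bag of $F$, has size at most $c(|D'|^3k + |D'|\Delta(k+2)^3)$; and for the parameters at hand ($\Delta=\Theta(k\cdot f(k))$ and $|D'|=\OO(k\cdot f(k))$) the second summand is $\OO(|D'|^3k)$, so the bound collapses to $c'\cdot|D'|^3k$. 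For a direct proof, fix a bag $K$ of $F$, i.e.\ a maximal clique of the chordal graph $G\setminus D'$. Two structural facts suffice: (a) any chordless cycle $C$ of $G$ meets $K$ in at most two vertices, which are consecutive on $C$ when there are two; and (b) since $G\setminus D'$ is chordal, every chordless cycle of $G$ contains a vertex of $D'$.

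Call $w\in K$ \emph{irrelevant} if no chordless cycle $C$ of $G$ has $w\in V(C)$ with $G[V(C)\setminus\{w\}]$ chordal. An irrelevant $w$ can be deleted from $G$ after turning $N_G(w)$ into a clique; this is safe by exactly the argument of Reduction Rule~\ref{rule:UM}, since any newly created chordless cycle must pass through two formerly non-adjacent neighbors of $w$, which the clique structure of $K$ pins down as the only $K$-vertices of that cycle, so the cycle reappears through $w$ in $G$. After removing irrelevant vertices exhaustively, it remains to bound the relevant vertices of $K$. For each relevant $w$, fix a witness chordless cycle $C_w$; by (a) and (b), $C_w$ consists of at most two $K$-vertices (one being $w$), at least one vertex of $D'$, and two ``legs'' leaving $K$ on its two sides in the clique forest. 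I would charge $w$ to a bounded \emph{profile} of $C_w$ relative to $D'$ — an ordered choice of the constantly many vertices of $D'$ that $C_w$ touches near $K$, of which there are $\OO(|D'|^3)$ possibilities. The finish is a disjoint-cycle packing bound in the spirit of Lemmata~\ref{lem:boundW} and~\ref{lem:cut}: if more than $k$ relevant vertices of $K$ shared a profile, the clique-forest structure would let their legs be rerouted independently to produce more than $k$ vertex-disjoint chordless cycles, contradicting that $(G,k)$ is a yes-instance. Hence each profile forces $\OO(k)$ vertices of $K$, giving $|K|=\OO(|D'|^3k)$.

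The hard part is converting Marx's existential statement into a polynomial-time reduction: Marx only needed to identify one vertex to branch on, whereas here one must delete all but a bounded set and certify that every kept vertex is accounted for by a profile. This is where Jansen and Pilipczuk's refinement enters — they arrange that every ``necessary set'' produced by the analysis has size at most two, i.e.\ it is either a single vertex forced into every solution of size at most $k$ (delete it and decrement $k$) or a non-adjacent pair at least one of which is forced (record it as a mandatory edge via Reduction Rule~\ref{red:req pair}). The remaining bookkeeping — bounding the number of distinct clique-forest separators through $K$ that can occur, and showing that the kept vertices meet every relevant chordless cycle — parallels the minimum-cut family analysis already carried out for degree-2 paths in Section~\ref{sec:ChordPaths}, so I would reuse those templates; the safeness of the deletion rule and the final counting are then routine.
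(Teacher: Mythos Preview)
The paper does not prove this lemma at all: it is stated with attribution to~\cite{JansenPili2016} purely for comparison with the paper's own Lemma~\ref{lem:boundMaxCli}, and no proof is given. So there is nothing to compare your attempt against on the paper's side. Your opening observation --- that in this paper's setting the bound is subsumed by Lemma~\ref{lem:boundMaxCli} applied with the solution $D'$, because $|D'|\Delta(k+2)^3=\OO(|D'|^3k)$ once $\Delta=\Theta(k\cdot f(k))$, $|D'|=\Theta(k\cdot f(k))$ and $k=\OO(f(k))$ --- is correct and is the cleanest way to justify the stated bound from within this paper.

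Your ``direct'' sketch, however, drifts away from the actual Jansen--Pilipczuk/Marx argument and has real gaps. The $\OO(|D'|^3)$ count does not come from a vague ``profile of the constantly many vertices of $D'$ that $C_w$ touches near $K$''; it comes from marking, for every triple $l_1,l_2,l_3\in D'$, a set of $k+1$ vertices of $K$ carrying labels $l_1,l_2$ but not $l_3$ (Lemma~\ref{lem:cliMark2}), together with separate witness markings for $t$-dangerous and $t^*$-dangerous vertices (Lemma~\ref{lem:cliMark1}). The finish is not a disjoint-cycle packing over shared profiles but a rerouting argument: for any deletion set $X$ of size at most $k$, every chordless cycle through an unmarked vertex of $K$ can be rerouted through a marked one. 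Your appeal to Lemmata~\ref{lem:boundW} and~\ref{lem:cut} is misplaced --- those concern minimum-cut families along degree-$2$ paths and play no role in the clique-size bound. Likewise, the crucial refinement you attribute to Jansen--Pilipczuk (shrinking every ``necessary set'' to size at most two) is exactly what the present appendix carries out in Lemmata~\ref{lem:cliMark1}--\ref{lem:cliMark3}; it is not something left to be imported from elsewhere.
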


For the rest of this subsection, we fix a maximal clique $K$ of $G \setminus D$, which contains more than $c\cdot(|D|^3 \cdot k + |D| \cdot \Delta \cdot (k+2)^3)$ vertices.
We will show that we can mark a bounded number of vertices of $K$ so that the following holds.
Let $X$ be any set of at most $k$ vertices such that $G \setminus X$ has a chordless cycle $H$ that contains a vertex $u$ of $K$.
If $u$ is an unmarked vertex then there is another chordless cycle $H'$ in $G \setminus X$ that avoids $u$ and contains strictly fewer unmarked vertex in $K$.
This condition implies that we can safely ignore any chordless cycle that includes an unmarked vertex, which further implies that it is safe to delete these vertices from the graph.
We shall closely follow the notations and proofs of ~\cite{Marx10}, but in light of the bound on (relevant) independent-degree of vertices in $D$, we shall modify them appropriately.
For a vertex $v \in V(G) \setminus D$, we say that the vertex has the \emph{label} $t \in D$
if $v$ is a neighbor of $t$.
Note that the edge $(t,v)$ could be relevant or irrelevant, and a vertex may have several labels depending on its set of neighbors in $D$.

\subsection{Dangerous vertices and their witnesses in $K$}
Let us begin with the notion of dangerous vertices for the clique $K$ in the graph.
\begin{definition}
    Let $v \in V(G) \setminus (D \cup K)$ be a relevant neighbor of $t \in D$ such that there is a path $P$ from $v$ to $u \in K$ whose internal vertices do not have the label $t$.
    \begin{enumerate}[(i)]
        \item The vertex $v$ is a \emph{$t$-dangerous vertex} for $K$ if the vertex $u$ does not have the label $t$.
        
        \item The vertex $v$ is a \emph{$t^*$-dangerous vertex} for $K$ if the vertex $u$ is also a relevant neighbor of $t$ and $u,v$ are not neighbors in $G$.
    \end{enumerate}
    The vertex $u \in K$ is called a \emph{$t$-witness ($t^*$-witness)} of $v$,
    and the path $P$ is called a \emph{$t$-witness ($t^*$-witness) path} of $v$.
\end{definition}
In \cite{Marx10}, it is shown that there are many vertices in the clique $K$ whose deletion does not affect any of the dangerous vertices in the following sense.
We mark a bounded number of vertices in $K$ such that, for any subset $X$ of vertices of size at most $k$, if there is chordless cycle in $G \setminus  X$ that passes through $t$, a dangerous vertex $v$ and through some unmarked vertex $u$ in $K$, then there is another chordless cycle in $G \setminus X$ which contains a marked vertex $u'$ and avoids $u$.
This implies that we may ignore any chordless cycles in $G$ that includes an unmarked vertex of $K$.
Note that the definition above differs from \cite{Marx10} in the requirement that $v$ (and $u$) must be a relevant neighbor of $t$, since if the edge $(t,v)$ (or the edge $(t,u)$) were irrelevant then any chordless cycles that contain both $t$ and $v$ (or $t$ and $u$) can be safely ignored.
We have the following bounds on the size of an independent set of dangerous vertices in $G$.

\begin{lemma}[Lemma 11,~\cite{Marx10}]
    If $I$ is any collection of independent $t$-dangerous vertices, then either $|I| \leq 6k^2$
    or we can find a new mandatory edge in the graph, or we can find a vertex that must be part of any solution of size at most $k$.
\end{lemma}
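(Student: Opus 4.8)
The statement to prove (Lemma 11 of~\cite{Marx10}, restated) is: if $I$ is a collection of independent $t$-dangerous vertices, then either $|I|\le 6k^2$, or we can identify a new mandatory edge, or we can identify a vertex that belongs to every solution of size at most $k$. The strategy is to assume $|I|>6k^2$ and squeeze a rich combinatorial structure out of the witnesses of the vertices of $I$, so that the expansion lemma (Lemma~\ref{lem:expansion}) or a simple counting argument forces one of the two certificates. First I would record, for each $v\in I$, a $t$-witness $u(v)\in K$ together with a $t$-witness path $P_v$ from $v$ to $u(v)$ whose internal vertices do not carry the label $t$ and where $u(v)$ is not a neighbor of $t$. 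The key geometric fact, exactly as in~\cite{Marx10}, is that for two distinct $v,w\in I$ the concatenation $v - P_v - u(v) - u(w) - P_w - w - t - v$ (using that $u(v),u(w)\in K$ are adjacent, that $t$ is adjacent to both $v$ and $w$, and that $v,w$ are non-adjacent because $I$ is independent) contains a chordless cycle through $t$; moreover that chordless cycle can be chosen to pass through $t$, through (at least one of) $v,w$, and through a vertex of $K$.

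\textbf{Main steps.} The second step is to turn this into a hitting-set obstruction. Consider any solution $S$ with $|S|\le k$. For the instance to be a yes-instance, $S$ must hit all the chordless cycles just produced. If $t\in S$ we are in the good case only if we can argue $t$ lies in \emph{every} solution — so the interesting regime is $t\notin S$. Then $S$ must, for every pair $v,w\in I$, hit the corresponding chordless cycle, and since that cycle meets $V(G)$ essentially in $V(P_v)\cup V(P_w)\cup\{t\}\cup(K$-vertices$)$, and $S$ avoids $t$, it must hit one of the two witness paths or a vertex of $K$ on the cycle. The third step is a Helly/sunflower-type pruning: shrink each $P_v$ to a \emph{minimal} witness path, so that $S$ hitting $P_v$ is "responsible" for $v$; an application of the expansion lemma to the bipartite incidence between the (at most $k$) vertices of $S$ and the $>6k^2$ paths $\{P_v\}$, or alternatively a direct pigeonhole, yields a single vertex $s$ (or a pair) lying on $(k+2)$-many of the paths $P_v$, mirroring the mechanism of Lemma~\ref{lem:vOrB}. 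One then argues that $s$ must be in every size-$\le k$ solution that excludes $t$, so either $\{t,s\}$ becomes a mandatory edge (via Reduction Rule~\ref{red:req pair}, since at least one of $t,s$ is in every solution of size at most $k$) — that's the "new mandatory edge" outcome — or, if additionally $t$ is forced, the "vertex in every solution" outcome. If no such heavy vertex exists, the counting forces $|I|\le 6k^2$, the "bounded independent set" outcome, completing the trichotomy.

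\textbf{The hard part.} The main obstacle is the bookkeeping needed to guarantee that the chordless cycles extracted from the pairs $v,w$ genuinely behave like an almost-disjoint family off of a small core, so that hitting them cheaply forces a heavy vertex rather than being diffusely satisfiable. Concretely: two witness paths $P_v,P_w$ may share many internal vertices, and a single vertex of $S$ inside a shared segment could simultaneously kill many cycles "for free"; the care of~\cite{Marx10} is precisely in choosing the witnesses and the representative chordless cycles so that the sharing is controlled — using the label-$t$ restriction on internal vertices and the chordality of $G\setminus S$ — and in routing the argument so that the bipartite graph (solution-vertices vs.\ dangerous-vertices-via-their-paths) has no isolated vertices on the $I$-side, which is the precondition for Lemma~\ref{lem:expansion}. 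Getting the constant $6k^2$ (as opposed to some larger polynomial) requires being economical: roughly, $k$ solution vertices, each "charged" to at most $O(k)$ dangerous vertices once a heavy vertex is ruled out, plus a constant number of degenerate cases (the at-most-one cycle a solution may miss, the path through $t$ itself). I would expect the witness-path disjointification and the precise statement of which chordless cycle each pair contributes to be the only genuinely delicate points; everything downstream is the now-standard expansion-lemma-plus-mandatory-edge packaging already used in Section~\ref{sec:chordIndependent}.
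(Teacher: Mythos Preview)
The paper does not supply its own proof of this statement: in Appendix~\ref{app:maxClique} it is stated verbatim as ``Lemma 11,~\cite{Marx10}'' and used as a black box, with no accompanying argument. There is therefore nothing in the present paper to compare your proposal against; any genuine comparison would have to be made with Marx's original proof in~\cite{Marx10}.

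That said, your high-level outline --- build chordless cycles through $t$ from pairs of independent $t$-dangerous vertices via their witness paths and the clique $K$, then argue that a size-$k$ solution avoiding $t$ must concentrate on a small core, yielding either a mandatory pair or a forced vertex --- is broadly in the spirit of the arguments of~\cite{Marx10} that the appendix does reproduce (compare the proof given for the $t^*$-dangerous analogue, Lemma~14). You correctly identify the delicate point: controlling the overlap among witness paths so that the hitting constraint is not diffusely satisfiable. However, your proposal remains a sketch rather than a proof; the precise mechanism by which the constant $6k^2$ emerges, and the exact choice of chordless cycle attached to each pair, are left unspecified. Since the paper itself defers entirely to~\cite{Marx10} for these details, this is not a discrepancy with the paper but simply an incomplete reconstruction of an external result.
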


The following lemma improves upon Lemma 12 of \cite{Marx10} by using the bound on the independent degree of vertices in $D$.
\begin{lemma}[Lemma 12,~\cite{Marx10}]
    If $I$ is any collection of independent $t^*$-dangerous vertices, then $|I| \leq \Delta$.
\end{lemma}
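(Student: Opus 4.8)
The goal is to show that any independent set $I$ of $t^*$-dangerous vertices has $|I|\le\Delta$. Recall that a $t^*$-dangerous vertex $v$ is a relevant neighbor of $t$ such that there is a path from $v$ to a relevant neighbor $u\in K$ of $t$ whose internal vertices carry no label $t$, and $u,v$ are non-adjacent. The idea is to relate such an independent set to an independent set inside the neighborhood of $t$ (measured by relevant edges), thereby invoking the bound $\Delta^I_G\le\Delta$ on the independent degree. The subtlety is that the $t^*$-dangerous vertices need not themselves form an independent set inside $G[N^R_G(t)]$ in a way that is obviously useful: every $v\in I$ is a relevant neighbor of $t$ by definition, and $I$ is assumed independent, so $I$ \emph{is} an independent set inside $G[N^R_G(t)]$. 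Hence the bound $|I|\le d^I_G(t)\le\Delta$ is almost immediate — but one must double-check that the independence is with respect to $G$ (so that $I$ genuinely witnesses a large independent set in $G[N^R_G(t)]$) and that ``relevant neighbor'' in the definition of $t^*$-dangerous matches $N^R_G(t)$.

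First I would unwind the definitions: by the definition of a $t^*$-dangerous vertex, each $v\in I$ satisfies $v\in N^R_G(t)$, i.e. $\{t,v\}\in E(G)$ and $\{t,v\}\notin E_I\cup E_M$. Thus $I\subseteq N^R_G(t)$. Second, since $I$ is by hypothesis a collection of \emph{independent} vertices in $G$, there is no edge of $G$ between any two vertices of $I$; in particular $I$ is an independent set in the induced subgraph $G[N^R_G(t)]$. Third, by Definition~\ref{def:independentDeg}, the size of any independent set in $G[N^R_G(t)]$ is at most $d^I_G(t)$, and by the conclusion of Section~\ref{sec:chordIndependent} (Lemma~\ref{lem:boundIndDeg} and the discussion following Reduction Rule~\ref{rule:decIndDeg}) we have $d^I_G(t)\le\Delta'$, hence in particular the relevant independent degree used here is bounded by $\Delta$. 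Chaining these gives $|I|\le\Delta$.

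The main thing to be careful about — and what I expect to be the only real obstacle — is reconciling the bookkeeping of ``relevant'' edges. The definition of dangerous vertex in this appendix insists that $v$ is a \emph{relevant} neighbor of $t$, precisely so that the relevant independent degree $d^I_G(t)$ (computed in $G[N^R_G(t)]$, ignoring irrelevant and mandatory edges) is the right quantity to compare against. If instead one only knew $v\in N_G(t)$, the independent set could be larger than $\Delta$ because irrelevant edges are excluded from $N^R_G(t)$. So the proof must explicitly note that the ``relevant neighbor'' clause in the definition is exactly what lets us place $I$ inside $G[N^R_G(t)]$ rather than merely inside $G[N_G(t)]$. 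Once that is spelled out, no case analysis on witness paths is needed at all: unlike Lemma~11 (the $t$-dangerous case), where the witness $u$ is \emph{not} a neighbor of $t$ and one genuinely has to build chordless cycles, here both $v$ and $u$ are relevant neighbors of $t$, and the conclusion follows purely from the definition of independent degree. I would therefore present the proof as a short two-step argument: $I\subseteq N^R_G(t)$ and $I$ independent in $G$ imply $I$ is an independent set in $G[N^R_G(t)]$; and the bound on the (relevant) independent degree of every vertex then yields $|I|\le\Delta$.
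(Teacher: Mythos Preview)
Your proposal is correct and matches the paper's approach. The paper does not give an explicit proof of this lemma; it only remarks that the bound ``improves upon Lemma~12 of~\cite{Marx10} by using the bound on the independent degree of vertices in $D$,'' which is precisely the observation you spell out: every $t^*$-dangerous vertex is by definition a relevant neighbor of $t$, so an independent set of such vertices is an independent set in $G[N^R_G(t)]$ and hence has size at most the independent degree of $t$, which is bounded by $\Delta$.
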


The following lemma shows that if $Q$ is a clique of $t$-dangerous vertices we require only $k+1$ vertices as witnesses for all the vertices of $Q$.
\begin{lemma}[Lemma 13 \cite{Marx10}]
    Let $Q$ be a clique of $t$-dangerous vertices.
    Then we can mark $k+1$ vertices in $K$ such that for any set $X$ of $k$ vertices,
    if $v \in Q$ has an unmarked $t$-witness in $K \setminus X$ then it has a marked $t$-witness in $K \setminus X$.
\end{lemma}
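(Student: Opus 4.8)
The plan is to mimic the argument used in Marx's Lemma 13, exploiting the fact that the vertices of $Q$ form a clique, so that their witness paths interact in a controlled way, and then to apply a simple pushing/marking argument to keep only $k+1$ witnesses. First I would fix, for each $v\in Q$, a $t$-witness path $P_v$ ending at some $t$-witness $u(v)\in K$. The key structural observation is that since $Q$ is a clique and each internal vertex of $P_v$ lacks the label $t$, two such paths for distinct $v,v'\in Q$ cannot ``cross'' without creating a chordless cycle through $t$: if an internal vertex of $P_v$ were adjacent to $v'$, one could short-circuit to obtain a shorter witness path, and if the paths were vertex-disjoint apart from their $K$-endpoints, the edge $vv'\in E(G)$ together with subpaths of $P_v$, $P_{v'}$ and the clique edge $u(v)u(v')\in E(K)$ would form a cycle; analyzing chords of this cycle (using that $K$ is a clique and that internal vertices of the witness paths avoid label $t$) pins down how the witnesses for different members of $Q$ relate. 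This is the standard ``laminar family of witness paths'' phenomenon from \cite{Marx10}.

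Next I would argue the marking step. Order the vertices of $K$ arbitrarily. The claim is that it suffices to mark, for each ``type'' of witness behaviour, a small number of extremal vertices; concretely, one marks the $k+1$ vertices of $K$ that are ``reachable first'' in an appropriate sense from the vertices of $Q$ — for instance, by repeatedly picking a $t$-witness that serves a maximal set of still-unserved vertices of $Q$ and marking it, stopping after $k+1$ rounds. The point of marking exactly $k+1$ of them is a pigeonhole argument: given any deletion set $X$ with $|X|\le k$, at least one of the $k+1$ marked witnesses survives in $K\setminus X$, and the laminarity/clique structure established in the first step guarantees that this surviving marked witness is in fact a valid $t$-witness (in $K\setminus X$) for every $v\in Q$ that had any unmarked $t$-witness in $K\setminus X$. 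Here one uses that a witness path for $v$ through an unmarked $u$ can be rerouted through a marked $u'$ because $u,u'\in K$ are adjacent, and the rerouted path still avoids label $t$ internally (this is exactly where the clique structure of $K$ and the no-label-$t$ condition on internal vertices are used).

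Finally I would assemble these two ingredients into the statement: the marked set has size $k+1$, and for any $X$ with $|X|=k$, if $v\in Q$ has an unmarked $t$-witness in $K\setminus X$ then, picking a surviving marked witness $u'$ and rerouting, $v$ has a marked $t$-witness in $K\setminus X$. The main obstacle I expect is the first step — establishing the precise ``non-crossing'' structure of the witness paths $\{P_v\}_{v\in Q}$ and verifying that rerouting through an adjacent clique vertex never destroys the witness property or accidentally introduces the label $t$ on an internal vertex; once that structural lemma is in hand, the marking and pigeonhole argument is routine. Since this is a direct adaptation of Lemma 13 of \cite{Marx10} with the extra bookkeeping that relevant/irrelevant edges introduce (a witness is only valid along relevant edges to $t$), I would follow that proof line by line, flagging each place where the relevant-neighbour restriction changes a step, and otherwise defer to \cite{Marx10}.
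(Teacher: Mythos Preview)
The paper gives no proof of this lemma; it is quoted directly from Marx~\cite{Marx10} (his Lemma~13) and left as a citation. Your plan would work, but it is considerably more elaborate than the statement requires, and the step you single out as the ``main obstacle'' --- establishing a laminar/non-crossing structure on the family $\{P_v\}_{v\in Q}$ --- is not needed at all.

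The rerouting observation you mention near the end is already the entire argument. By definition a $t$-witness $u\in K$ does \emph{not} carry the label $t$, so $u$ may legitimately become an internal vertex of a longer witness path. Hence if $v$ has a $t$-witness path $P$ to some $u\in K\setminus X$, and $u'\in K\setminus X$ is any other vertex of $K$ without label $t$, then either $u'\in V(P)$ (truncate $P$ at $u'$) or we append the clique edge $\{u,u'\}$ to $P$; in both cases we obtain a valid $t$-witness path to $u'$ in $G\setminus X$. Thus every vertex of $K\setminus X$ without label $t$ is a $t$-witness for $v$, and it suffices to mark an arbitrary set of $k+1$ such vertices of $K$. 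The clique hypothesis on $Q$ plays no role here; it enters only in the application (covering all $t$-dangerous vertices by cliques, Lemmas~15/16).

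The clique-tree ordering and laminarity you outline \emph{are} the right tools for the $t^*$-dangerous case (Lemma~14, which the paper does prove in full): there the witness $u$ carries label $t$ and so cannot be demoted to an internal vertex, forcing one to control which $u'$ can replace $u$ via the bag ordering along the $x$--$y$ path in the clique tree. You have in effect transplanted the Lemma~14 strategy onto Lemma~13; this is harmless, but it hides how elementary Lemma~13 actually is.
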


In the context of the following lemma, we remark that there is a similar lemma for cliques of $t^*$-dangerous vertices in \cite{Marx10}, but we give a version of it that is more suitable for our purposes.
\begin{lemma}[Lemma 14,~\cite{Marx10}]
    Let $Q$ be a clique of $t^*$ dangerous vertices. 
    Then either we can find a vertex that must be part of any solution of size at most $k$, or we can find a new mandatory edge, or else we can mark $(k+2)^3$ vertices of $K$ such that for any set $X$ of $k$ vertices, if $v \in Q$ has a unmarked $t^*$ witness in $K \setminus X$ then it also has a marked witness $K \setminus X$.
\end{lemma}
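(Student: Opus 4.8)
The plan is to reprove Lemma~14 of~\cite{Marx10} for the fixed clique $Q$ of $t^*$-dangerous vertices, following that argument but with two adjustments carried through the whole proof. First, whenever the construction would produce a ``necessary set''---a set of vertices meeting every solution of size at most $k$---we force it to have at most two vertices, so that it is either a single vertex that must belong to every solution of size at most $k$, or a pair one of whose vertices must (if the two are adjacent, their edge is marked mandatory; if not, Reduction Rule~\ref{red:req pair} applies); this is precisely why the conclusion is a three-way alternative. Second, because only chordless cycles through relevant edges still have to be hit, we require---exactly as the definition already does---that a $t^*$-dangerous vertex and its $t^*$-witness be \emph{relevant} neighbors of $t$, and we disregard irrelevant edges at $t$ when building witness paths.

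The first step would be to obtain a combinatorial description of $t^*$-witnesses. Let $G_t$ be the graph obtained from $G$ by deleting $t$ together with all (relevant) neighbors of $t$. For $v\in Q$, a relevant neighbor $u\in K$ of $t$ with $u\notin N_G(v)$ is a $t^*$-witness of $v$ if and only if some connected component $C$ of $G_t$ is adjacent in $G$ both to $v$ and to $u$, and then the $t^*$-witness paths realizing this are exactly the paths from $v$ to $u$ whose interior lies inside $C$; in particular each triple $(v,u,C)$ gives rise to a chordless cycle formed by the edge $tv$, a path from $v$ to $u$ through $C$, and the edge $ut$. Building on this, I would mark, much as in the proof of Lemma~13 of~\cite{Marx10}, a family of witnesses in $K$ that is robust against the deletion of up to $k$ vertices: by Menger's theorem a marked witness joined to the $Q$-side by $k+1$ internally disjoint paths survives the removal of any $k$-set not containing it, and to protect against a $k$-set that hits the marked witnesses one retains $k+1$ of them; the dependence of the constraint ``$u\notin N_G(v)$'' on the specific $v\in Q$ is what rules out a purely counting argument, and it is handled by exploiting that $G\setminus D$ is chordal to constrain how a given $v$ can be adjacent to the witnesses contributed by a component $C$. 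One further source of multiplicity---essentially the number of components $C$ (equivalently, separators) that can be simultaneously active for $Q$---is bounded by $\OO(k)$ once the earlier reduction rules have been applied exhaustively, since otherwise one reads off either a forced vertex (often $t$ itself, as all the cycles above run through $t$) or, after the size-two modification, a new mandatory edge. Multiplying these $\OO(k)$ factors gives the claimed bound of $(k+2)^3$ marked vertices of $K$, and the remaining bookkeeping---that an unmarked surviving $t^*$-witness forces a chordless cycle through a marked surviving $t^*$-witness, so that unmarked vertices of $K$ may be ignored---follows~\cite{Marx10} step by step.

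The hard part will be the interaction of the two quantifiers in the statement: a single marked set must work for \emph{every} $v\in Q$ and \emph{every} deletion set $X$ of size at most $k$ at once, while the witness relation depends on $v$ both through the non-adjacency requirement and through which components of $G_t$ it reaches, and depends on $X$ through which witness paths survive. Reconciling the $v$-dependence with the demand for one uniform marking is where the chordal structure of $G\setminus D$ has to be invoked, and keeping every necessary set down to size two throughout---rather than allowing it to grow to $\Theta(k)$, as in~\cite{Marx10}---is what forces the ``find a forced vertex, or find a mandatory edge, or mark $(k+2)^3$ vertices'' shape of the conclusion and is the point demanding the most care.
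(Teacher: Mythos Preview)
Your plan identifies the right outcome shape (forced vertex / mandatory edge / $(k+2)^3$ marks) but it is missing the structural handle that actually makes the argument go through, and the substitutes you propose do not obviously work.

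The paper does not argue via connected components of $G_t$ and Menger's theorem. Instead it uses the clique-tree $T$ of $G\setminus D$ directly: since $Q$ and $K$ are cliques they sit in bags $x$ and $y$, and the bags on the $x$--$y$ path are numbered $1,\dots,s$. Each witness $u_i\in K$ gets the index $a_i$ of its leftmost bag on this path, each $v_j\in Q$ gets the index $b_j$ of its rightmost bag, and the crucial fact is that $v_j$ and $u_i$ are adjacent iff $a_i\le b_j$. This turns the two-parameter relation ``$u$ is a non-neighbour of $v$'' into a one-dimensional threshold condition. One then extracts an alternating subsequence $b_{\beta_1}<a_{\alpha_1}<b_{\beta_2}<a_{\alpha_2}<\cdots$ of length $\ell$; if $\ell\le (k+2)^2$ one marks $u_{\alpha_j},\dots,u_{\alpha_j+k+1}$ for each $j$ (at most $(k+2)^3$ marks), and any witness path to an unmarked $u_y$ must pass through the bag of some surviving marked $u_{\alpha_j+r}$, which is therefore also a witness. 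If $\ell>(k+2)^2$, one builds $k+1$ well-spaced chordless cycles through $t$; either they are pairwise disjoint off $t$ (so $t$ is forced), or two of them share an internal vertex $w\notin N_G(t)$, and then the linear ordering forces $w$ into every intermediate bag, yielding $k+1$ length-$4$ chordless cycles on $\{t,w\}$ and hence the new mandatory edge $\{t,w\}$.

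Your proposal never arrives at this linearisation, and without it the two places you flag as ``hard'' stay hard. First, ``handle the $u\notin N_G(v)$ constraint by exploiting that $G\setminus D$ is chordal'' is exactly where the clique-tree path is needed; chordality alone does not give you a uniform marking good for every $v\in Q$ unless you extract the threshold structure above. Second, the claim that ``the number of components of $G_t$ simultaneously active for $Q$ is bounded by $\OO(k)$'' is not justified and is not what drives the bound in the paper; the $(k+2)^3$ arises as $(k+2)^2$ positions along the clique-tree path times $k+2$ marks per position, not from counting components. A Menger/flow argument of the Lemma~13 style does not transfer cleanly to the $t^*$ case precisely because the witness relation depends on $v$ through non-adjacency, and your outline does not say how to decouple this. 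In short, the missing idea is the clique-tree path between the bag of $Q$ and the bag of $K$ and the interval ordering it induces on $Q$ and on the candidate witnesses in $K$.
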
 
\begin{proof}
    Our proof is a minor modification of the proof of Lemma 14,~\cite{Marx10}.
    Let $T$ be a clique-tree of $G \setminus D$.
    Following \cite{Marx10}, we say that a vertex $v$ covers a bag $x$ of $T$ if $v$ is contained in the bag $x$,
    and then $T_v$ denotes the sub-clique-tree of all the bags which are covered by $v$.
    Now, since $Q$ and $K$ are cliques in $G \setminus D$, there are two bags $x$ and $y$ which contain $Q$ and $K$, respectively.
    Consider the unique path connecting $x$ and $y$ in $T$, and suppose that the bags on this path are numbered as  $x = 1, 2, \ldots s = y$.
    Let $u_1, u_2, \ldots$ be vertices of $K$ with label $t$, and let $a_i$ denote the smallest numbered bag on this path that occurs in $T_{u_i}$, i.e. the smallest numbered bag containing $u_i$.
    Similarly, let $v_1, v_2, \ldots$ be the vertices of $Q$ and let $b_i$ denote the largest numbered bag which is contained in $T_{v_i}$. 
    It follows that $T_{v_i}$ and $T_{u_j}$ intersect if and only if $a_j \leq b_i$,
    in which case there is an edge $(v_i,u_j)$ in the graph.
    We also assume that the collection of $a_i$ and  $b_i$ are distinct, which is easily achieved by adding additional bags along this path in $T$.
    Further we assume that the vertices $u_1,u_2, \ldots$ and $v_1, v_2, \ldots$ are ordered so that the sequences of $a_i$ and $b_j$ are strictly increasing.
    
    We define a subsequence of $b_i$ and $a_j$ as follows.
    Let $\beta_1 = 1$, and for ever $j \geq 1$ let $\alpha_j$ be the smallest value such that $a_{\alpha_j} > b_{\beta_j}$. 
    For every $i \geq 2$, let $\beta_i$ be the smallest value such that $b_{\beta_i} > a_{\alpha_{i-1}}$.
    Observe that we obtain a strictly increasing sequence, $b_{\beta_1} < a_{\alpha_1} < b_{\beta_2} < a_{\alpha_2} \ldots $.
    Let $\beta_\ell$ be the last element of the above sequence which corresponds to a vertex in $Q$.
    
    Now, let $u_s$ be a witness of a $t^*$-dangerous vertex $v_{\beta_j}$.
    We will show that $a_{\alpha_j}$ is also a witness for $v_{\beta_j}$.
    Clearly, $a_s > b_{\beta_j}$, which implies $a_s \geq a_{\alpha_j}$.
    Hence, the $t^*$-witness path from $v_{\beta_j}$ to $u_s$ passes through the bag $a_{\alpha_j}$ which contains $u_{\alpha_j}$ (see the proof of Lemma 14,~\cite{Marx10}) .
    Further, $a_{\alpha_j} > b_{\beta_j}$ implies that $v_{\beta_j}$ and $a_{\alpha_j}$ are not neighbors in $G$.
    Therefore, $a_{\alpha_j}$ is also a witness for $v_{\beta_j}$.
    
    Now, suppose that $\ell \leq (k+2)^2$. 
    Then for each $i = 1, 2, \ldots, \ell$ we mark the $k+2$ vertices $u_{\alpha_i}, u_{\alpha_i + 1}, \ldots, u_{\alpha_i + k + 1}$ (if they exist), and we will show that this set of marked vertices contains a sufficient number of witnesses for every vertex in $Q$.
    Note that we have marked at most $(k+2)^3$ vertices of $K$.
    Consider any set $X$ of $k$ vertices such that a vertex $v_x \in Q$ has a witness path in $G \setminus X$ to some unmarked $u_y \in K$.
    In other words there is a chordless cycle in $G \setminus X$ which includes the vertex $t$, a $t^*$-dangerous vertex $v_x$ and its $t^*$-witness $u_y \in K$.
    Since $v_x$ and $u_y$ are non-neighbors, we have that $b_x < a_y$, which implies that there is some $j$ for which $b_x < a_{\alpha_j} \leq a_y$.
    If $a_y$ is not marked then $y > \alpha_j + k + 1$, and hence $G \setminus X$ contains some $b_x < a_{\alpha_j + r} < a_y$.
    Now the witness path from $v_x$ to $a_y$ must contain a neighbor of $a_{\alpha_j + r}$,
    which implies that $a_{\alpha_j + r}$ is also a witness for $v_x$, in $G \setminus X$.
    
    Finally, suppose that $\ell > (k+2)^2$.
    We will show that either the vertex $t$ must be part of any solution of size $k$, or we can find a new mandatory edge in the graph.
    Let $P_i$ be a witness path from $v_{\beta_i}$ to $u_{\alpha_i}$ for every $1 \leq i \leq \ell$.
    Consider the collection of chordless cycles $H_{(k+2) \cdot j}$ for $1 \leq j \leq k+1$,
    where $H_{(k+2) \cdot j} = (t \, v_{\beta_{(k+2) \cdot j}} \, P_{(k+2) \cdot j} \, u_{\alpha_{(k+2) \cdot j}} \, t)$.
    If all these cycles are pairwise vertex disjoint except for the vertex $t$, then we have obtained a collection of $k+1$ chordless cycles with $t$ as the only common vertex.
    Hence $t$ must be part of any solution of size at most $k$.
    
    Otherwise, let $H_{(k+2) \cdot j}$ and $H_{(k+2) \cdot j'}$ have a common vertex $w$, for some $j < j'$.
    Then observe that $w$ can only be an internal vertex of the paths $P_{(k+2) \cdot j}$ and $P_{(k+2) \cdot j'}$, and therefore it is not a neighbor of $t$.
    Since $T$ is a tree-decomposition, this implies that $w$ occurs in every bag numbered between $a_{\alpha_{(k+2) \cdot j}}$ and $b_{\beta_{(k+2) \cdot j'}}$.
    Since $ a_{\alpha_{(k+2) \cdot j}} < b_{\beta_{(k+2) \cdot j + 1}} < a_{\alpha_{(k+2) \cdot j + 1}} < \ldots\ldots < b_{\beta_{(k+2) \cdot j + k + 1}} < a_{\alpha_{(k+2) \cdot j + k + 1}} < b_{\beta_{(k+2) \cdot j'}}$, we have that $w$ occurs in every one of these bags.
    Observe that we obtain a collection of $k+1$ chordless cycles of length $4$, namely
    $(t \, v_{\beta_{(k+2)\cdot j + i}}\, w \, u_{\alpha_{(k+2)\cdot j + i}} \,t)$ for $i = 1,2, \ldots k+1$, such that the vertices $t$ and $w$ are the only common vertices.
    Since $t$ and $w$ are non-adjacent in $G$, we obtain a new mandatory edge $(t,w)$ in $G$.
\end{proof}

Since $G \setminus D$ is a chordal graph which contains all the dangerous vertices for $K$, it follows that the graph induced by the set of all $t$-dangerous ($t^*$-dangerous) vertices forms a chordal graph as well.
Since a chordal graph is perfect, it has a clique cover of size $\alpha$, where $\alpha$ is the cardinality of a maximum independent set in the graph (see~\cite{Golumbic80}).
This gives us the following lemma, using the bound on the size of an independent set of dangerous vertices.
\begin{lemma}[Lemmas 15 \& 16, ~\cite{Marx10}]
    \begin{enumerate}[(i)]
        \item Either we can find a vertex that must be part of any solution of size at most $k$, or we can find a new mandatory edge, or we can mark $6k^2(k+1)$ vertices in $K$ such that for any set $X$ of $k$ vertices, if a $t$-dangerous vertex $v$ has a unmarked witness in $K \setminus X$, then it also has a marked witness in $K \setminus X$.
        
        \item Either we can find a vertex that must be part of any solution of size at most $k$, or we can find a new mandatory edge, or we can mark $ \Delta \cdot (k+2)^3$ vertices in $K$ such that for any set $X$ of $k$ vertices, if a  $t^*$-dangerous vertex has a unmarked witness in $K \setminus X$, then it also has a marked witness in $K \setminus X$.
    \end{enumerate} 
\end{lemma}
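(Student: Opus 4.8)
The plan is to derive both items from the corresponding ``single clique'' marking statements (Lemma 13 and Lemma 14) together with the bounds on independent sets of dangerous vertices (Lemma 11 and Lemma 12), using the perfectness of chordal graphs to pass from a bound on the independence number to a small clique cover.

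First I would observe that every $t$-dangerous vertex (and likewise every $t^*$-dangerous vertex) lies in $V(G)\setminus(D\cup K)\subseteq V(G)\setminus D$, so the subgraph $G_t$ of $G$ induced by all $t$-dangerous vertices is an induced subgraph of the chordal graph $G\setminus D$, hence chordal, hence perfect. By a classical fact (see~\cite{Golumbic80}), a perfect graph admits a clique cover whose size equals its independence number, and for a chordal graph such a minimum clique cover can be computed in polynomial time. Applying Lemma 11 to a maximum independent set of $G_t$, either we immediately obtain a new mandatory edge or a vertex forced into every solution of size at most $k$ (in which case we are done), or this independence number is at most $6k^2$; in the latter case $G_t$ is covered by at most $6k^2$ cliques $Q_1,\dots,Q_m$ of $t$-dangerous vertices. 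For item (ii) the same argument applies to the chordal graph induced by the $t^*$-dangerous vertices, with Lemma 12 giving a clique cover of size at most $\Delta$ (or, again, a mandatory edge / forced vertex).

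Next, for each clique $Q_j$ in the cover I would invoke Lemma 13 (respectively Lemma 14) to mark a set $M_j\subseteq K$ of $k+1$ vertices (respectively $(k+2)^3$ vertices), with the guarantee that for every $k$-set $X$, any $v\in Q_j$ having an unmarked witness in $K\setminus X$ also has a witness in $M_j\setminus X$. Setting $M=\bigcup_j M_j$ gives $|M|\le 6k^2(k+1)$ in the $t$-case and $|M|\le \Delta\cdot(k+2)^3$ in the $t^*$-case. Since the cliques cover all dangerous vertices of the relevant type, any dangerous $v$ having an unmarked witness in $K\setminus X$ lies in some $Q_j$, and the per-clique guarantee then hands us a witness inside $M_j\setminus X\subseteq M\setminus X$; this is exactly what items (i) and (ii) assert.

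The routine parts are the bookkeeping of union sizes and the observation that the per-clique guarantees are local and therefore combine additively across the cover. The step I expect to be the main obstacle is handling the disjunctions cleanly: Lemmas 11, 12 and 14 may, in place of a numeric bound, return a mandatory edge or a vertex belonging to every solution of size at most $k$. One must arrange that whenever any invocation produces such an outcome the whole procedure halts with that outcome (consistent with the ``necessary set has at most two vertices'' convention set up earlier), and that otherwise all numeric bounds hold simultaneously, so the marked set has the claimed size; a secondary point is simply to note that the maximum independent set and the associated clique cover in a chordal graph are computable in polynomial time, keeping the marking procedure polynomial.
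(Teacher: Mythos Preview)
Your proposal is correct and follows essentially the same route as the paper: the paper notes (in the paragraph immediately preceding the lemma) that the $t$-dangerous (respectively $t^*$-dangerous) vertices induce a chordal, hence perfect, subgraph of $G\setminus D$, takes a clique cover of size equal to the independence number, bounds that number via Lemmas~11 and~12, and then applies Lemmas~13 and~14 to each clique in the cover. One very minor remark: Lemma~12 as stated here is unconditional (it simply says $|I|\le\Delta$), so in item~(ii) the disjunction with a mandatory edge or forced vertex arises only from the per-clique invocation of Lemma~14, not from the independence bound itself.
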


From the above lemma we conclude that, for any set $X$ of at most $k$ vertices, to test if $X$ intersects all those chordless cycles that pass through $t$, $K$ and some dangerous vertex $v$, it is sufficient to consider only the marked vertices of $K$ as witnesses.
However the above lemma considers only a single vertex $t \in D$, and we must mark additional witness vertices for each $t \in D$.

\begin{lemma}[Lemmas 15 \& 16, ~\cite{Marx10}]\label{lem:cliMark1}
    \begin{enumerate}[(i)]
        \item Either we can find a vertex that must be part of any solution of size at most $k$, or we can find a new mandatory edge, or we can mark $|D| \cdot 6k^2(k+1)$ vertices in $K$ such that for any set $X$ of $k$ vertices, if a $t$-dangerous vertex $v$ has a unmarked witness in $K \setminus X$, then it also has a marked witness in $K \setminus X$.
        
        \item Either we can find a vertex that must be part of any solution of size at most $k$, or we can find a new mandatory edge, or we can mark $|D| \cdot \Delta \cdot (k+2)^3$ vertices in $K$ such that for any set $X$ of $k$ vertices, if a $t^*$-dangerous vertex has a unmarked witness in $K \setminus X$, then it also has a marked witness in $K \setminus X$.
    \end{enumerate}
\end{lemma}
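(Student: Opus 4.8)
The plan is to reduce the statement to the single-vertex version established in the preceding lemma (``Lemmas 15 \& 16'' of~\cite{Marx10}) by a straightforward union argument over the vertices of $D$. I would prove item (i) in detail and note that item (ii) follows by the identical argument with the bound $\Delta\cdot(k+2)^3$ replacing $6k^2(k+1)$.

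First I would iterate over the vertices $t\in D$, and for each such $t$ invoke the preceding lemma. If, for some $t$, the invocation returns a vertex that must lie in every solution of size at most $k$, or returns a new mandatory edge, then I output that object and halt; this settles the first two alternatives of the statement. Note that when this happens we stop immediately, so there is no need to worry about the set of ($t$- or $t^*$-)dangerous vertices changing as a result of inserting a mandatory edge. Otherwise, for every $t\in D$ the invocation yields a set $M_t\subseteq K$ with $|M_t|\le 6k^2(k+1)$ such that, for every set $X$ of at most $k$ vertices, every $t$-dangerous vertex with a witness in $(K\setminus X)\setminus M_t$ also has a witness in $M_t\cap(K\setminus X)$.

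Then I would mark the set $M=\bigcup_{t\in D}M_t$, so that $|M|\le|D|\cdot 6k^2(k+1)$, matching the claimed bound. For correctness, fix a set $X$ of at most $k$ vertices and a $t$-dangerous vertex $v$ (for some $t\in D$) that has an unmarked witness $u\in K\setminus X$, i.e.\ $u\notin M$. Since $M_t\subseteq M$, we have $u\notin M_t$, so $u$ is a witness of $v$ in $(K\setminus X)\setminus M_t$; the guarantee for $M_t$ then produces a witness $u'\in M_t\cap(K\setminus X)\subseteq M\cap(K\setminus X)$, that is, a marked witness of $v$ in $K\setminus X$, as required.

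There is no substantial obstacle here: the argument is pure bookkeeping, and the only point that needs a moment's care is that ``unmarked with respect to the union $M$'' is stronger than ``unmarked with respect to an individual $M_t$'', which is immediate from $M_t\subseteq M$ and is precisely what makes the single-vertex guarantee transfer to the union.
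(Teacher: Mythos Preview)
Your proposal is correct and matches the paper's approach: the paper does not give a formal proof of this lemma but simply states it after remarking that ``the above lemma considers only a single vertex $t\in D$, and we must mark additional witness vertices for each $t\in D$,'' which is exactly the union argument you spell out.
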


\subsection{Fragments of Chordless cycles intersecting $K$}
Now we shall mark vertices for chordless cycles in $G$ that intersect $K$.
If $H$ is a chordless cycle in $G$, then consider $F, P_1, P_2, \ldots, P_s$ where $F= H \cap D$ and $P_1, P_2 \ldots, P_s$ are the paths in $H \setminus D$.
It follows that each $P_i$ has exactly two labels from $F$ on its endpoints, and the internal vertices have no labels from $F$, and further these paths are pairwise independent (i.e. there is no edge between two vertices that are in two different paths).
We call $F, P_1, \ldots ,P_s$ the \emph{fragments} of the chordless cycle $H$.
Since $K$ is a clique, at most one of these paths intersects $K$, which we assume to be the path $P_1$, and further it contains at most two vertices from $K$.
We will show that we can mark a bounded number of vertices in $K$ such that for any chordless cycle $H$ that includes an unmarked vertex (that lies in $P_1$), there is another chordless cycle $H'$ that avoids this unmarked vertex.
Let us first consider the case when $P_1$ just a single vertex (in $K$).
The following lemma is a close variant of Lemma 18,~\cite{Marx10} for this case.

\begin{lemma}[Lemma 18,~\cite{Marx10}]\label{lem:cliMark2}
    Let $F, P_1, \ldots, P_s$ be the fragments of $H$ where $P_1$ is just a single vertex that lies in $K$.
    Then either we can find a new mandatory edge, or we can mark $|D|^3 \cdot (k+2)$ vertices in $K$ such that the following holds.
    For any set $X$ of $k$ vertices such that $G \setminus  X$ has a chordless cycle which intersects $K$ in an unmarked vertex, there is another chordless cycle in $G \setminus X$ 
    which does not use any unmarked vertices of $K$.
\end{lemma}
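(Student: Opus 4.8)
The plan is to follow the proof of Lemma~18 in~\cite{Marx10} essentially verbatim, with the same two adjustments used throughout this appendix: we only ever use \emph{relevant} edges (so a chordless cycle through an irrelevant edge is ignored for free, and ``$u$ is a neighbour of $t$'' always means ``via a relevant edge''), and whenever the argument of~\cite{Marx10} would produce a necessary set we arrange for it to have at most two vertices -- hence either a single vertex contained in every solution of size at most $k$ (which we then remove while decrementing $k$) or a single new mandatory edge.

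Fix a clique tree $T$ of $G\setminus D$ and a bag $B_K$ of $T$ with $K\subseteq B_K$. Take a chordless cycle $H$ with fragments $F,P_1,\dots,P_s$ where $P_1=\{u\}$ and $u\in K$. Since $H$ is chordless and $u\notin D$, the two neighbours of $u$ on $H$ are two non-adjacent vertices $t_1,t_2\in F\subseteq D$, both relevant neighbours of $u$, and $\pi:=H-u$ is a chordless $t_1$--$t_2$ path that is vertex-disjoint from $K$ (only $P_1$ meets $K$). I would classify such cycles by the pair $(t_1,t_2)\in D\times D$ together with one further vertex of $D$ recording the structure of $\pi$ next to $t_1$ (say, the second vertex of $D$ met while walking $\pi$ from $t_1$ towards $t_2$); ranging over these triples accounts for the factor $|D|^3$, one block of marked vertices per triple. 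For a fixed triple let $U\subseteq K$ be the set of vertices of $K$ that are relevant neighbours of both $t_1$ and $t_2$ -- these are the only candidates that can ``replace'' $u$.

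The core step is rerouting. The part of $\pi$ inside $G\setminus D$ (the paths $P_2,\dots,P_s$) occupies a subforest $R$ of $T$ that misses $B_K$: since $H$ is chordless, $u$ is adjacent to no internal vertex $w$ of $\pi$, hence $B_K\notin T_w$ for every such $w\in V(G)\setminus D$. I would order the vertices $u'\in U$ by how far $T_{u'}$ reaches from $B_K$ toward $R$, mark, for each relevant direction and depth (bounded after the parametrization), the $k+1$ vertices of $U$ whose subtrees reach farthest, and protect witnesses against the $D$-vertices on $\pi$ using the third coordinate of the triple together with Lemma~\ref{lem:cliMark1}. The payoff is the usual one: if $u$ is unmarked, a surviving marked $u'$ whose subtree does not reach as deep as $u$'s is non-adjacent to all of $\pi$, so the cycle on $\{u'\}\cup V(\pi)$ formed by the edges $\{u',t_1\}$, $\{u',t_2\}$ and the path $\pi$ (possibly after a short local reroute to also avoid the $D$-vertices on $\pi$) is again a chordless cycle in $G\setminus X$, and it meets $K$ only in the marked vertex $u'$ -- exactly the safeness condition needed to eventually delete all unmarked vertices of $K$.

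The main obstacle is keeping the number of marked vertices at $|D|^3(k+2)$ while still guaranteeing a legal reroute, and this is where the ``$k+1$'' thresholds bite. If, for some triple and some direction, more than $k+2$ candidates of $U$ reach ``deep'' past a common separator bag, I would -- exactly as in the endgame of the proof of Lemma~14 of~\cite{Marx10} -- build from them $k+1$ chordless cycles pairwise intersecting in exactly a pair $\{t_1,w\}$ (or $\{t_2,w\}$) for a common internal vertex $w$, yielding the new mandatory edge $\{t_1,w\}$, or else $k+1$ chordless cycles sharing only a vertex of $D$, forcing that vertex into every solution of size at most $k$; all of these configurations have necessary sets of size at most two, so the second adjustment above is respected. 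The two places I expect to spend the most care are: verifying that the rerouted cycle has no chord through a vertex of $D$ lying on $\pi$ -- such a vertex already sat on $H$, so chordlessness of $H$ together with the choice of the third coordinate of the triple either rules it out or hands us the mandatory edge -- and the bookkeeping of the relevant/irrelevant status of every edge used, so that all chordless cycles we exhibit (and all those we claim can be destroyed) use only relevant edges.
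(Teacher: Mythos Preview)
Your proposal has the right skeleton --- follow Marx's proof, parametrise by triples from $D$, use clique-tree distances, and output a small necessary set when marking fails --- but two concrete pieces are off.

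First, the marking direction is inverted. You write that you mark the $k+1$ vertices of $U$ ``whose subtrees reach farthest'' toward $R$, and then claim that a surviving marked $u'$ ``does not reach as deep as $u$''. These are contradictory: if the marked vertices reach farthest and $u$ is unmarked, then every marked $u'$ reaches at least as far as $u$ and hence may be adjacent to internal vertices of $\pi$, destroying chordlessness. The paper (and Marx) do the opposite: for each relevant bag $w_{i_j}$ they mark the $k+1$ vertices of $K$ with labels $l_1,l_2$ having \emph{greatest} distance from $w_{i_j}$; then an unmarked $u$ is closer, the replacement $u'$ is farther, and $u'$ has no edge into $P_2$.

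Second, and more structurally, you collapse the cases $|F|\geq 3$ and $|F|=2$ into one scheme, but they need genuinely different treatments. When $|F|\geq 3$, the paper's triple $(l_1,l_2,l_3)$ encodes ``mark $k+1$ vertices with labels $l_1,l_2$ but \emph{not} $l_3$''; this forbidden-label trick --- not ``the second vertex of $D$ met on $\pi$'' --- is what prevents a chord from $u'$ to another vertex of $F$. When $|F|=2$ there is no third label available, so your triple degenerates; instead the paper roots the clique tree at $K$, enumerates the topmost bags $w_{i_j}$ touched by the single remaining fragment $P_2$, and marks $k+1$ farthest vertices of $K$ per bag (this is where the extra $|D|^2(k+1)^2$ marked vertices come from, and why the total is $|D|^3(k+1)+|D|^2(k+1)^2\leq |D|^3(k+2)$). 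The mandatory-edge outcome in this case is $(t_1,t_2)$ itself, produced when there are $q\geq k+2$ incomparable such bags and hence $k+1$ pairwise disjoint $P_2$-paths each closing a chordless cycle through $t_1,t_2$; it is not the Lemma-14-style pair $\{t_1,w\}$ you describe. Finally, invoking Lemma~\ref{lem:cliMark1} here is a mismatch: that lemma marks witnesses for $t$- and $t^*$-dangerous vertices outside $K$, whereas protecting the replacement $u'\in K$ against chords to $D$ is done entirely by the forbidden-label triples.
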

\begin{proof}
    Our proof of this lemma is obtained by modifying proof of Lemma 18,~\cite{Marx10}.
    And note that, as $P_1$ is a single vertex, $F$ must have at least two vertices.
    For every $l_1, l_2, l_3 \in D$, mark $k+1$ vertices of $K$ which have labels $l_1, l_2$ and not $l_3$. Hence, in total we have marked $|D|^3 \cdot (k+1)$ vertices of $K$.
    Marx~\cite{Marx10} shows that this is sufficient for the case when $F$ has $3$ or more vertices.
    
    When $F$ contains only two vertices $l_1, l_2$, we need to mark some additional vertices.
    Let us recall that $P_1$ is only a single vertex of $K$ and has the labels $l_1, l_2$,
    and so it follows that $l_1,l_2$ are non-neighbors.
    Let $x$ be the bag in the clique-tree of $G \setminus  D$, which corresponds to the maximal clique $K$, and further assume that $x$ is the root of this tree-decomposition.
    For any chordless cycle $H^i$ such that $H^i \setminus D = P^i_1, P^i_2$ where $P^i_1$ is a single vertex, let $w_i$ be the bag closest to $x$ in the tree decomposition that contains a vertex of $P^i_2$. 
    Note that $w_i \neq x$, as vertices of $P^i_1$ and $P^i_2$ have no edges between them, and it follows that the vertices of $P^i_2$ are contained in the sub-clique-tree rooted at $w_i$,
    and furthermore none of them are present in any bag outside this sub-clique-tree.
    We can generate a list of these bags $w_1, w_2, \ldots$, by considering every choice of $P^i_1$, $l_1$ and $l_2$ and selecting those bags of the tree decomposition that have a path from $l_1$ to $l_2$ with at least one internal vertex in the non-neighborhood of the choice of $P^i_1$.
    Among these bags, select $w_{i_1}, w_{i_2}, \ldots, w_{i_q}$ such that none of its descendants in the clique-tree are in the collection $w_1, w_2, \ldots$ computed above.
    Note that, by definition no bag selected above is an ancestor or a descendant of another selected bag.
    Let $H_{i_1}, H_{i_2}, \ldots, H_{i_q}$ be a collection of chordless cycles such that $P^{i_j}_2$ is contained in the sub-clique-tree of $w_{i_j}$, for $j=1,2 \ldots, q$.
    Note that the collection of $P^{i_j}_2$ are pairwise vertex disjoint, and further there are no edges between the vertices of $P^{i_j}_2$ and $P^{i_{j'}}_2$ for $j \neq j'$.
    This follows from the fact that no bag outside the sub-clique-tree rooted at $w_{i_j}$
    contains any vertex of $P^{i_j}_2$, and that no bag selected in the collection of $w_{i_j}$ is an ancestor or a descendant of another.
    
    Consider the case when $q \leq k+1$.
    We define the distance of a vertex $v$ from a bag $w$ in the clique-tree as the minimum of the distance between $w$ and a bag $x$ that contains $v$.
    Then for each $w_{i_j}$, sort the vertices of of $K$, which have labels $l_1, l_2$, according to the distance from $w_{i_j}$, and mark $k+1$ vertices from $w_{i_j}$.
    It follows that we mark $|D|^2 \cdot (k+1)^2$ vertices of $K$.
    Let us argue that these marked vertices satisfy the requirements.
    For any set $X$ of at most $k$ vertices, suppose $H$ is a chordless cycle in $G \setminus X$ with fragments $F, P_1, P_2$ where $P_1$ is just a single vertex $u \in K$ with labels $l_1, l_2 \in F$.
    Consider some bag $w_i$ whose sub-clique-tree contains the path $P_2$, and note that $w_i$, or some descendent $w_{i_j}$ was selected in the above collection.
    Now if $u$ was not marked for $w_{i_j}, l_1, l_2$, then any of the vertices that were marked for this tuple has a greater distance from the bags $w_{i_j}$ (and $w_i$) than $u$.
    Since we marked $k+1$ vertices for this tuple, at least one of them is present in $K \setminus X$ and let $u'$ be that vertex.
    It follows that replacing $u$ with $u'$ in $H$ gives us a chordless cycle in $G \setminus  X$.
    
    Now, consider the case when $q \geq k+2$.
    The collection of paths $P^{i_j}$, along with $l_1$ and $l_2$, defines a collection of chordless cycles such that any solution of size $k$ must pick at least one of $l_1$ or $l_2$.
    Since $l_1,l_2$ are non-neighbors, we obtain a new mandatory edge $(l_1,l_2)$.
    
    Further note that the total number of marked vertices is $|D|^3 \cdot (k+1) + |D|^2 \cdot (k+1)^2$.
    Since $|D| \geq k$, it follows that the total number of marked vertices is upper bounded by $|D|^3 \cdot (k+2)$.
\end{proof}

Now, the only remaining case is when the path $P_1$ has two or more vertices. For this case, we have the following lemma, which follows from Lemma 19 of \cite{Marx10} where we rely on Lemma \ref{lem:cliMark1}.

\begin{lemma}[Lemma 19,~\cite{Marx10}]\label{lem:cliMark3}
    Let $F, P_1, \ldots, P_s$ be the fragments of a chordless cycle $H$ where $P_1$ contains at least two vertices, and further at least one of those vertices are in $K$.
    Then we can mark at most $|D| \cdot 6k^2(k+1)+|D| \cdot \Delta \cdot (k+2)^3$ vertices in $K$ such that, for any set $X$ of at most $k$ vertices, if $G \setminus X$ contains a chordless cycle that includes an unmarked vertex $u \in K$, then $G \setminus X$ also contains a chordless cycle that avoids $u$, and it has strictly fewer unmarked vertices of $K$.
\end{lemma}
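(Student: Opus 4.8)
The plan is to take over the proof of Lemma~19 of Marx~\cite{Marx10} essentially verbatim, the sole change being that his Lemmas~15 and~16 are replaced by our Lemma~\ref{lem:cliMark1}, whose independent-degree-sensitive bounds are exactly what produce the claimed count. In particular, no new marking is needed for the present lemma: we simply take the union of the two sets of vertices of $K$ furnished by the two items of Lemma~\ref{lem:cliMark1}, which has size at most $|D|\cdot 6k^2(k+1)+|D|\cdot\Delta\cdot(k+2)^3$, as required; and if one of the applications of Lemma~\ref{lem:cliMark1} instead returns a vertex contained in every solution of size at most $k$, or a new mandatory edge, we report that and stop, as is done throughout this appendix. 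As usual, we restrict attention to chordless cycles all of whose edges are relevant, since the rest are ignorable.

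It then remains to verify the swapping property. Fix $X\subseteq V(G)$ with $|X|\le k$ and a chordless cycle $H$ in $G\setminus X$ with fragments $F,P_1,\ldots,P_s$, where $|V(P_1)|\ge 2$ and $u\in K\cap V(P_1)$ is unmarked. Two structural facts steer the argument: since $K$ is a clique and $P_1$, being a subpath of the chordless cycle $H$, is chordless, $P_1$ meets $K$ in at most two vertices, which are consecutive on $P_1$ if there are two; and, because $H$ is chordless, an internal vertex of $P_1$ carries no label in $F$, while an endpoint of $P_1$ carries exactly one, namely that of its unique $H$-neighbour in $F$. The first step is to produce, for a suitable $t\in F\subseteq D$, a vertex $v$ on $P_1$ together with a subpath $R$ of $P_1$ from $v$ to $u$ witnessing that $v$ is a $t$-dangerous or $t^*$-dangerous vertex for $K$ with witness $u$ and witness path $R$. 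Walking along $P_1$ from $u$ towards one of its two endpoints $a$, whose $H$-neighbour in $F$ is some $t$, no vertex strictly between $u$ and $a$ is a neighbour of $t$ (such a vertex is internal to $P_1$, hence carries no label in $F$), so $v:=a$ is the sought relevant neighbour of $t$ and $R$ is the $u$--$a$ subpath of $P_1$, whose internal vertices carry no label in $F$ and in particular not $t$. Choosing the endpoint, hence $t$, appropriately --- using that an internal $u$ has no $F$-label at all, and that chordlessness of $H$ forbids a chord $ut$ when $u$ is an endpoint close to the contact vertex of $t$ --- one checks that $v$ is $t$-dangerous (when $u$ has no label $t$) or $t^*$-dangerous (when $u$ is also a relevant neighbour of $t$ but $u\not\sim v$), and that $v\notin D\cup K$. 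The remaining degenerate configurations (notably $P_1$ consisting of two $K$-vertices) are disposed of exactly as in Marx's case analysis.

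Given $v$ and $u$, Lemma~\ref{lem:cliMark1} supplies a marked witness $u'\in K\setminus X$ for $v$ together with a witness path $R'$ from $v$ to $u'$ in $G\setminus X$. The new cycle is then obtained by deleting $R$ from $H$, inserting $R'$, reconnecting the loose ends at the $K$-side through the clique $K$ (using that $u'$ is adjacent to every vertex of $K$, in particular to the $H$-neighbour of $u$ on $P_1$ that lies in $K$, or, if $u$ was an endpoint, to the contact vertex of the next fragment), and finally passing to a chordless subcycle through $v$ and $u'$. This chordless cycle lies in $G\setminus X$, avoids $u$, and re-enters $K$ only through the marked vertex $u'$, hence contains strictly fewer unmarked vertices of $K$ than $H$. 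Together with Lemma~\ref{lem:cliMark2}, which covers the complementary case of a $K$-meeting fragment consisting of a single vertex, this finishes the marking of $K$.

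The main obstacle is precisely the case analysis sketched above: one must guarantee, across all positions of $u$ on $P_1$ and all ways $P_1$ can meet $K$, that a $t$- or $t^*$-dangerous vertex of $P_1$ with $u$ as witness can be extracted, and that after the reroute the resulting closed walk still contains a chordless cycle re-entering $K$ only through $u'$, so that the count of unmarked $K$-vertices strictly decreases; a further, purely bookkeeping, burden is to track edge-relevance throughout, since the definition of a $t$- or $t^*$-dangerous vertex insists that $v$ and its witness be \emph{relevant} neighbours of $t$ (this is the reason we work only with chordless cycles through relevant edges). All of these points are settled exactly as in the proof of Lemma~19 of~\cite{Marx10}, the only substantive change being the source of the witness sets, so we omit the routine verification.
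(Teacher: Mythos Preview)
Your proposal is correct and matches the paper's approach exactly: the paper does not give a proof at all beyond stating that the lemma ``follows from Lemma~19 of~\cite{Marx10} where we rely on Lemma~\ref{lem:cliMark1},'' and your plan---to rerun Marx's argument verbatim with his Lemmas~15 and~16 replaced by the two items of Lemma~\ref{lem:cliMark1}, yielding precisely the bound $|D|\cdot 6k^2(k+1)+|D|\cdot\Delta\cdot(k+2)^3$---is exactly that. Your sketch of the swapping argument is in fact more detailed than anything the paper provides.
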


Combining all of the above lemmas we obtain the following.

\begin{lemma}\label{lem:boundMaxCli-2}
    Let $(G,k)$ be an instance of \cdel whose independent-degree is bounded by $\Delta$.
    Let $D$ be solution to this instance and let $K$ be a maximal clique in $G \setminus D$.
    Then in polynomial time, either we can find a vertex which must be part of any solution of size at most $k$, or we can find a new mandatory edge, or we can safely remove all but $c\cdot(|D|^3 \cdot k + |D| \cdot \Delta \cdot (k+2)^3)$ vertices of $K$. (Here $c$ is some constant independent of the input.)
\end{lemma}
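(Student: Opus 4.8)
The plan is to assemble the marking lemmas of this appendix into a single statement. I would assume $|K| > c\cdot(|D|^3\cdot k + |D|\cdot\Delta\cdot(k+2)^3)$ for a suitably large constant $c$ (otherwise there is nothing to do), compute a clique tree of the chordal graph $G\setminus D$ in polynomial time, and then invoke, one after another, the marking procedures of Lemma~\ref{lem:cliMark2} (which handles chordless cycles whose fragment through $K$ is a single vertex) and Lemma~\ref{lem:cliMark3} (which handles chordless cycles whose fragment through $K$ has at least two vertices, and internally relies on the witness markings of Lemma~\ref{lem:cliMark1}). If at any point one of these procedures instead returns a vertex forced into every solution of size at most $k$ or a new mandatory edge, I return it and stop. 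Otherwise the number of vertices of $K$ marked in total is at most $|D|^3\cdot(k+2) + |D|\cdot 6k^2(k+1) + |D|\cdot\Delta\cdot(k+2)^3$; using $|D|\geq k$ (as throughout this appendix) the middle term is $\OO(|D|^3 k)$, so the whole bound is $\OO(|D|^3 k + |D|\cdot\Delta\cdot(k+2)^3)$, and for an appropriate choice of $c$ the set $U$ of unmarked vertices of $K$ is nonempty while the number of marked vertices of $K$ is at most $c\cdot(|D|^3 k + |D|\cdot\Delta\cdot(k+2)^3)$.

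The next step is to establish that $(G,k)$ and $(G\setminus U,k)$ are equivalent; since $G\setminus U$ is precisely $G$ with all but the marked vertices of $K$ removed, this yields the lemma. The crux is the claim that for every $X\subseteq V(G)$ with $|X|\leq k$, if $G\setminus X$ contains a chordless cycle meeting $U$ then it also contains a chordless cycle avoiding $U$. To prove it I would take such a cycle $H$ minimizing $|V(H)\cap U|$, decompose $H$ into its fragments $F=V(H)\cap D$ and the paths of $H\setminus D$ (here $F\neq\emptyset$ since $G\setminus D$ is chordal, and exactly one fragment path $P_1$ meets $K$, because the $K$-vertices of $H$ lie outside $D$, are pairwise adjacent, and the fragment paths are pairwise non-adjacent), pick $u\in V(P_1)\cap U$, and distinguish two cases. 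If $P_1$ is a single vertex, Lemma~\ref{lem:cliMark2} directly yields a chordless cycle of $G\setminus X$ using no vertex of $U$. If $P_1$ has at least two vertices, Lemma~\ref{lem:cliMark3} yields a chordless cycle $H'$ of $G\setminus X$ with $u\notin V(H')$ and $|V(H')\cap U|<|V(H)\cap U|$, which by minimality of $H$ forces $V(H')\cap U=\emptyset$. In both cases a $U$-avoiding chordless cycle is produced, proving the claim. The equivalence is then routine: restricting any solution of $(G,k)$ to $V(G)\setminus U$ gives a solution of $(G\setminus U,k)$, since $(G\setminus U)\setminus X$ is an induced subgraph of $G\setminus X$; conversely, a solution $X$ of $(G\setminus U,k)$ leaves no chordless cycle in $G\setminus X$, because such a cycle would meet $U$ (as $(G\setminus X)\setminus U$ is chordal) and then the claim would furnish a chordless cycle of $(G\setminus U)\setminus X$, a contradiction. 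Every step runs in polynomial time, as each cited lemma does.

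I expect the genuinely delicate work to lie inside the cited lemmas rather than in this combination. Those lemmas are adaptations of Marx's Lemmas~11--19~\cite{Marx10}, which must be re-proved under the hypothesis that the independent degree of the instance is bounded by $\Delta$, and in which every rerouted cycle must be verified to be a genuine chordless cycle of $G\setminus X$ and to strictly decrease the number of unmarked vertices of $K$ on the cycle so that the iteration in the claim above terminates — this is precisely why the witness markings of Lemma~\ref{lem:cliMark1} are needed, since the vertex of $K$ substituted into the cycle must remain joined to the rest of the cycle. The one bookkeeping point to watch in the combination itself is that the two handled types of chordless cycle (single-vertex $K$-fragment versus longer $K$-fragment) are exhaustive, which is what makes the case split in the claim complete.
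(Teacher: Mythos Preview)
Your proposal is essentially the paper's proof: invoke Lemmas~\ref{lem:cliMark1}--\ref{lem:cliMark3}, return any forced vertex or mandatory edge they produce, and otherwise argue that every unmarked vertex of $K$ may be safely deleted because any chordless cycle through it can be rerouted to one with strictly fewer unmarked $K$-vertices. The paper removes unmarked vertices one at a time by induction while you remove all of $U$ at once via a minimal-counterexample argument; these are equivalent.

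There is one bookkeeping step in the paper's proof that you omit. After the marking lemmas are applied, the paper \emph{additionally} marks every vertex of $K$ that is an endpoint of a mandatory edge (at most $2k^2$ further vertices, so the asymptotic bound is unaffected). This matters because the instance here is annotated: equivalence of $(G,k)$ and $(G\setminus U,k)$ must also preserve the mandatory-edge constraints, and deleting an unmarked vertex $u$ that is an endpoint of some $\{u,v\}\in E_M$ would silently drop the requirement that every size-$k$ solution contain $u$ or $v$. Your equivalence argument only addresses chordless cycles, so without this extra marking step the backward direction can fail. Add this step and your proof goes through.
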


\begin{proof}
    We apply the Lemmas~\ref{lem:cliMark1},~\ref{lem:cliMark2} and ~\ref{lem:cliMark3} to the given instance.
    If any of them return a vertex that must be part of any solution of size at most $k$, or a new mandatory edge then we output that vertex or edge.
    Otherwise, together they mark a maximum of $c' \cdot(|D|^3 \cdot k + |D|\cdot \Delta \cdot (k+2)^3)$ vertices in $K$, where $c'$ is some constant independent of the input.
    In addition, let us also mark any unmarked vertex in $K$ which is an endpoint of a mandatory edge in the instance.
    Since the total number of mandatory edges is always bounded by $k^2$, the total number of marked vertices in $K$ does not exceed $c \cdot (|D|^3 \cdot k + |D|\cdot \Delta \cdot (k+2)^3)$,
    where $c$ is again a constant independent of the input.
    
    Let $u$ be any unmarked vertex in $K$.
    We will show that the instances $G$ and $G \setminus u$ are equivalent.
    Let $X$ be any set of at most $k$ vertices such that  $(G \setminus u) \setminus X$ is a chordal graph.
    If $G \setminus  X$ is not chordal, then there is some chordless cycle $H$ in $G \setminus X$.
    If $H$ does not contain the vertex $u$, then it is also present in $(G \setminus u) \setminus X$ which is a contradiction.
    On the other hand if $H$ contains $u$, then by the above lemmas, we can argue that
    there is some chordless cycle $H'$ in $G \setminus X$ which avoids $u$ (see Lemma 19,~\cite{Marx10}).
    Therefore $H'$ is present in $(G \setminus u) \setminus X$, which is again a contradiction.
    Now, by induction, we can show that it is safe to remove all the unmarked vertices in $K$ from the graph.
\end{proof}

Now we are ready to prove Lemma~\ref{lem:boundMaxCli}.
Consider any maximal clique in the graph $G \setminus  D$ with more than $c\cdot(|D|^3 \cdot k + |D| \cdot \Delta \cdot (k+2)^3)$ vertices, and apply Lemma~\ref{lem:boundMaxCli-2} to it.
If it returns a vertex $v$ that must be part of any solution of size $k$, we remove it from the graph and decrease $k$ by one.
Else if it returns a mandatory edge, we add this edge to the graph and mark it as mandatory.
We can argue, as before, that both these operations are safe.
Otherwise Lemma~\ref{lem:boundMaxCli-2} ensures that it is safe remove all but the $c\cdot(|D|^3 \cdot k + |D| \cdot \Delta \cdot (k+2)^3)$ marked vertices of $K$.
Therefore, we remove these vertices from the graph.
Observe that, each application of Lemma~\ref{lem:boundMaxCli-2} either reduces $k$, or adds a new mandatory edge, or reduces the number of vertices in the graph.
Further, we may add at most $k^2 + 1$ new mandatory edges before finding a new vertex that must be part of any solution of size $k$ or concluding that the given instance is a No instance.
Hence, in polynomial time, either we bound the size of every maximal clique in the graph
or conclude that the given instance is a No instance.

\end{document}